\newtheorem{theorem}{Theorem}
\newtheorem{assumption}{Assumption}
\newtheorem{corollary}[theorem]{Corollary}
\newtheorem{definition}[theorem]{Definition}
\newtheorem{lemma}[theorem]{Lemma}
\newtheorem{proposition}[theorem]{Proposition}
\newtheorem{remark}[theorem]{Remark}
\newenvironment{proof}[1][Proof]{\noindent \textbf{#1.} }{\  \rule{0.5em}{0.5em}}
\DeclareMathOperator*{\esssup}{ess\;sup}
\newcommand{\bq}{\begin{equation*}}
\newcommand{\eq}{\end{equation*}}
\newcommand{\bqn}{\begin{equation}}
\newcommand{\eqn}{\end{equation}}
\newcommand{\bqq}{\begin{eqnarray*}}
\newcommand{\eqq}{\end{eqnarray*}}
\newcommand{\bqqn}{\begin{eqnarray}}
\newcommand{\eqqn}{\end{eqnarray}}
\DeclareMathOperator*{\essinf}{ess\;inf}
\begin{document}

\title{A game theoretical approach to homothetic robust forward investment performance processes  in
stochastic factor models
%\thanks{
%The work was presented at the Workshop on Mathematical Systems and Control (WSCM2020), Weihai, P.R. China.
%%The authors would  like to thank the Editor, the Associate
%%Editor and two referees for their valuable comments and suggestions which led to
%%a much improved version of the paper.
%}
}
\author{Juan Li\thanks{%
School of Mathematics and Statistics, Shandong University, Weihai, Weihai, China. Partially supported by National Key R and D Program of China (No.2018YFA0703900), by the NSF of P.R. China (No. 11871037) and by NSFC-RS (No. 11661130148; NA150344). Email: \texttt{juanli@sdu.edu.cn}.
} \and Wenqiang Li\thanks{%
 School of Mathematics and Information Sciences, Yantai University, Yantai, China. Partially supported by Natural Science Foundation of Shandong Province (No.ZR2017MA015) and by Doctoral Scientific Research Fund of Yantai University (No. SX17B09). Email: \texttt{wenqiangli@ytu.edu.cn}. Corresponding author.
} \and
Gechun Liang\thanks{%
Department of Statistics, The University of Warwick, Coventry, U.K.  Partially supported by NNSF of China (Grant No. 11771158, 11801091). Email: \texttt{g.liang@warwick.ac.uk}.}}
\date\today
\maketitle

\begin{abstract}
%In an incomplete market, with incompleteness stemming from stochastic
%factors imperfectly correlated with the underlying stocks, we derive
%representations of homothetic (power, exponential and logarithmic) forward
%performance processes in factor-form using ergodic BSDE. We also develop a
%connection between the forward processes and infinite horizon BSDE, and,
%moreover, with risk-sensitive optimization. In addition, we develop a
%connection, for large time horizons, with a family of classical homothetic
%value function processes with random endowments.
This paper studies an optimal forward investment problem in an
incomplete market with model uncertainty, in which the underlying
stocks depend on the correlated stochastic factors. The uncertainty
stems from  the  probability measure chosen by an investor to
evaluate the performance. We
%reformulate this problem as a class of zero-sum stochastic differential game and
obtain directly the representation of the homothetic robust forward performance processes in factor-form by
combining the zero-sum stochastic differential game and ergodic BSDE approach.
We also establish the connections with the risk-sensitive zero-sum stochastic differential games over an infinite horizon
with ergodic payoff criteria, as well as with the classical robust expected utilities for long time horizons. Finally, we give an example to illustrate that  our approach can be applied to address a type of robust forward investment  performance processes with negative realization processes.  \\

\noindent\emph{Keywords}: Forward performance process; model uncertainty; self-generating stochastic differential game; ergodic BSDE; ergodic risk-sensitive stochastic differential game.

\end{abstract}

\section{{Introduction}}

The aim of this paper is to study optimal investment evaluated by a
forward performance criterion in a stochastic factor market model,
in which the probability measure that models future stock price
evolutions is ambiguous. The forward performance process, as an
adapted stochastic dynamic utility evolving forward in time, has
been introduced and developed in \cite{MZ0}-\cite{MZ3} (see also
\cite{Henderson-Hobson} and \cite{Gordan}, and more recently
\cite{bahman}, \cite{Michalis}, \cite{BCZ2018}, \cite{CHLZ},
\cite{HeStrub}, \cite{HLT}, \cite{Sigrid}, \cite{LSZ},
\cite{LSZ2020} and \cite{SZ}). This new concept differs from the
classical expected utility function, in which the objective is to
solve a stochastic control problem in a backward way via dynamic
programming principle. One of the advantages of forward performance
processes is allowing the investor to consider optimal investment
problems with arbitrary horizons. As such, it provides a useful
complement and a natural extension to the classical expected utility
function.

Recall the classical expected utility theory for the
optimal portfolio selection is
$$\sup_{\pi}E_{\mathbb{P}}[U(X_T^\pi)],$$
where $\pi$ is the portfolio choice, $\mathbb{P}$ is a probability
measure that is used to measure the evolutions of stock prices, $T$
is the terminal horizon, and $U$ is a \emph{fixed} utility function
at time $T$. In spite of the popularity of expected utility theory,
there has been some criticism of it. One of them is the fact that it
is not satisfactory in dealing with model uncertainty (also called
Knightian uncertainty) as predicted by the famous Ellsberg paradox.
In fact, an investor frequently faces significant ambiguity about
the probability measure $\mathbb{P}$ to evaluate the investment
performance. In finance, \cite{Lo} argued that the (perceived)
failures of the dominant paradigm, for example, in the context of
the recent crisis, are due to inadequate attention paid to the kind
of uncertainty faced by agents and modelers.

One possible way to address this problem is to use the concept of
robust utility, which was introduced to account for uncertain
aversion. It can be numerically represented by the following form
$$X\rightarrow\inf_{\mathbb{P}\in\mathcal{P}}E_{\mathbb{P}}[U(X)],$$
where $\mathcal{P}$ is a family of probability measures describing
all the possible probabilities of future scenarios and the infimum
means the worst-case scenario is implemented. Robust utility
maximization in the optimal investment problems has been widely
investigated under different situations with different approaches,
among others, a stochastic control method  in \cite{BMS, HS2006}, a
stochastic differential game approach in \cite{OS2011}, a duality
method in \cite{S2007}.  For more details on various portfolio
selection problems,  we refer to \cite{AM2019, DCH2018, FG, FSW,
HS2007, YLZ, YYY2016} and the references therein. In particular, we
refer to \cite{KKF2014, KKF2017, KKF2018}  for the review of the
recent advancements in robust investment management.

%The consumption-portfolio allocation problem in continuous time has been studied in \cite{AM2019} when asset prices follow L\'evy processes and the investor is concerned about potential model misspecification. The authors in \cite{BCZ2018} studied a forward investment performance process in an incomplete semimartingale market model with closed and convex portfolio constraints and provided necessary and sufficient conditions for the construction of such a performance process. \cite{DCH2018} considered the robust mean-variance optimization problem. Kim et al. \cite{KKF2014, KKF2017, KKF2018} summarized the recent advancements in robust optimization for investment management.

In this paper, we consider the ambiguity of the probability measure
under the framework of forward performance processes in incomplete
markets. We propose a framework that solves directly the above
problem in a unified manner, combining the \emph{zero-sum stochastic
differential game} and \emph{ergodic backward stochastic
differential equation} (BSDE) theory. The concept of robust forward
performance processes was recently introduced in \cite{Jan}, by
using a penalty function to weight relatively the probability models
such that they are more in line with the actual market. They
obtained the characterization of the robust forward criteria via a
duality approach. See also \cite{CL} for an extension to uncertain
parameters. However, both papers only consider robust forward
performance processes with zero volatility, in particular, the
\emph{Markovian} case for the stochastic factor model is not
covered. In this paper, we consider the \emph{Markovian} robust
forward performance process in a stochastic factor model. The
approach is different from the duality approach used in \cite{Jan}
and the saddle point method used in \cite{CL}. Next, we briefly
introduce our framework and explain our major contributions.

We construct the robust forward performance process via a two-player
zero-sum stochastic differential game. In our model, the ambiguity
of the probability measure is described via a family of equivalent
probability measures parameterized by a density process $u$ in a
compact and convex set (see \eqref{112901}). We parameterize the robust forward performance process
by the density process $u$. This generalizes the original definition of the robust forward performance process introduced in
\cite{Jan} with penalty functions as a special class of parametrization. We refer to Definition \ref{def} and Remarks \ref{re-def1}-\ref{re-def2} for more details.

To robustify the optimal investment, the investor will select the
best investment portfolio that is least affected by the model
uncertainty,
 whereas the nature of the market  acts to minimize the
expected forward preference by choosing the worst-case scenario.
This leads to a two-player zero-sum stochastic differential game
{between the investor and the market},
where each player's decision (strategy) depends on the
counterparty's action (control) she has observed. Therefore, the concept
of ``strategy" corresponding to the ``control" will play a key role
in analyzing the game (see \cite{BL, FS1989}).

Utilizing the idea of ``strategy", we give a new characterization of
the robust forward performance process. Specifically, both the
worst-case scenario ``strategy" corresponding to each portfolio
selection and the optimal investment policy under the worst-case
scenario ``strategy" are given in our characterization (see
(\ref{beta_star})-(\ref{martingale_maxmin})). Moreover, if the game
value exists, the optimal investment ``strategy" corresponding to
each scenario and the worst-case scenario under the optimal
investment ``strategy" are further given  in
(\ref{alpha_star})-(\ref{martingale_minmax}). Compared to the saddle
point argument used in \cite{HS2006, HS2007, YLZ} in the classical
framework and \cite{CL} in the forward framework, our
characterization (\ref{beta_star})-\eqref{martingale_maxmin} and \eqref{alpha_star}-(\ref{martingale_minmax}) relies
on the investor's response to each scenario and portfolio choice.
Moreover, it is often relatively easy to compute the optimal
strategies, as they only involve maximization/minimization problems
rather than maxmin/minmax problems. On the other hand, our
stochastic differential game approach may also provide an
alternative way to study the robust forward performance process when
saddle point does not exist. We present an example for the case
$\tau<0$ (i.e. with a {negative realization process}) in Section 7 to illustrate
this point.

The second component to construct the robust forward performance
process in factor form is an ergodic BSDE. The stochastic PDE (SPDE)
approach, introduced in \cite{MZ3} to characterize the forward
performance processes (without model ambiguity), may not be applied
directly to our model. First, the form of the related SPDE is not
easy to derive due to the presence of model uncertainty. Second, it
is difficult to obtain the solution existence and uniqueness of the
SPDE for the general case even if we know the form of the equation.

In order to get the representation of the {homothetic} robust forward performance
process in stochastic factor form, we apply directly the ergodic
BSDE approach, which was first proposed in \cite{HU2} to study
ergodic control problems. The ergodic BSDE approach was first
exploited in \cite{LZ} to study the representation of the homothetic
forward performance process in the absence of model uncertainty. We
first characterize the power robust forward performance process in
terms of the solution of an ill-posed Isaacs type equation. Although
the solution of this Isaacs equation can not be obtained directly,
it offers (i) the construction of the optimal portfolio ``strategy",
the worst-case scenario ``strategy", and the related optimal
portfolio choice and the worst-case scenario; (ii) the hint of the
driver form of the corresponding ergodic BSDE. Then, we obtain the
representation of the robust power forward performance process by
using the Markovian solution of the ergodic BSDE. The associated
optimal portfolio and worst-case scenario ``strategy" and ``control"
are also obtained in feedback form of the stochastic factor. {We can also obtain
other type of homothetic robust forward performance processes (logarithmic and exponential) using the same approach.}

The third contribution of this paper is establishing a connection
between the constant $\lambda$ appearing in the solution of the
ergodic BSDE (\ref{EQBSDE1}) and a class of zero-sum risk-sensitive
stochastic differential game over an infinite horizon with ergodic
payoff criteria.  Risk-sensitive optimal control has been widely
applied to optimal investment problems (see, \cite{Bielecki, FS2,
FS3, Henderson3} and references therein). The corresponding
risk-sensitive stochastic differential games are studied in
\cite{B1999, BG2012, BS2018, Knispel} via PDE approach and in
\cite{KH2003} via BSDE approach.

In this paper, we apply directly the ergodic BSDE approach to
address the zero-sum risk-sensitive stochastic differential game
with ergodic payoff criteria over an infinite horizon. Thus, we
provide a new method to obtain the value of the risk-sensitive game
problem and give the robust optimal investment policy which
generalizes the results in \cite{FS2, FS3} to the stochastic factor
model with uncertainty. To obtain this connection, we prove a
comparison result for a class of ergodic BSDE whose drivers are only
locally Lipschitz continuous. With the help of this connection, the
constant $\lambda$ can be interpreted as the \emph{optimal long-term
growth rate of the expected utility with model uncertainty}, and can
also be applied to study the related ``robust large deviations"
criteria for long-term investment problems.

In addition, we develop a connection between the robust forward
performance process and classical robust expected utility. Optimal
investment problems with classical robust expected utilities have
been studied via different methods, among others, by the duality
approach \cite{FG, S2007},  the stochastic control approach based on
BSDE \cite{BMS, HS2006} and stochastic differential game approach
based on PDE \cite{TZ2002}. With the help of the relation
established in \cite{Hu11} on the solution of finite horizon BSDE
and the solution of associated ergodic BSDE, we prove that an
appropriately discounted lower value function associated with the
classical power robust expected utility will converge to the power
robust forward performance process as the trading horizon tends to
infinity.

This paper is organized as follows. In section \ref{SectionModel},
we introduce the market model with uncertainty and  the notion of
robust forward performance processes.
{The stochastic differential game approach is given in subsections \ref{subsection2.1} and \ref{subsection2.2} for different situations.} In section
\ref{SectionPowerUtility}, we focus on the power case and construct
the robust forward performance process in factor-form. Two examples
are  given in section \ref{examples} to illustrate the applications
in incomplete markets. Then,  we present the connections with the
risk-sensitive game problem and classical expected utility in
sections \ref{Connection with risk-sensitive optimization} and
\ref{SubSectionTraditionalPowerUtility}, respectively. In section 7,
we further provide an example with a {negative realization process} for which
saddle point does not exist. Finally, section 8 concludes.

%, as well as with the
%(non-Markovian) solutions of the related infinite horizon BSDE and with
%finite horizon counterparts.
%For the reader's convenience, we present the
%technical background results on the ergodic and infinite horizon BSDE in the
%Appendix.

%Risk-sensitive control theory has been an active area (see, \cite{W1990, FMc}), especially applied into the field of optimal investment (see, ).
%On the other hand, the research works on risk-sensitive stochastic differential game are very few.
% Thus, we provide a new interpretation on the value of the risk-sensitive game problem associated with the forward optimisation, which
%we provide a  Comparison theorem of ergodic BSDE is derived.

%%%%%%%%%%%%%%%%%%%%%%%%%%%%%%%%%%%%%%%%%%%%%%%%%%%%%%%%%%%%%%%%%%%%%

\section{The stochastic factor model with uncertainty and its robust forward performance process}

\label{SectionModel}

Let $(\Omega ,%
\mathcal{F},\mathbb{F}=\{ \mathcal{F}_{t}\}_{t\geq 0},\mathbb{P})$ be a filtered probability space satisfying the usual conditions, on which the process $W=(W^{1},\cdots ,W^{d})^{T}$ is a standard $%
d$-dimensional Brownian motion. Here, the superscript $T$ denotes the matrix transpose.
Suppose the market consists of a risk-free bond and $n$ risky stocks. The bond is assumed to be zero interest rate and the discounted (by the bond) individual  stock price $%
S_{t}^{i},$ $t\geq 0,$ affected by the stochastic factor process $V$, has the following form, for $i=1,...,n,$%
\begin{equation}
\frac{dS_{t}^{i}}{S_{t}^{i}}=b^{i}(V_{t})dt+\sum_{j=1}^{d}\sigma
^{ij}(V_{t})dW_{t}^{j},  \label{stock-SDE}
\end{equation}%
with $S_{0}^{i}>0,$ where the factor process $V=(V^{1},\cdots ,V^{d})^T$  satisfies, for $i=1,...,d,$%
\begin{equation}
dV_{t}^{i}=\eta ^{i}(V_{t})dt+\sum_{j=1}^{d}\kappa ^{ij}dW_{t}^{j},
\label{factor-SDE}
\end{equation}%
with $V_{0}^{i}\in \mathbb{R}$.  We introduce the basic  assumptions on the above model.

\begin{assumption}
\label{Assumption}

(H1) The coefficients
$$b: \mathbb{R}^d\rightarrow\mathbb{R}^n,\ \sigma: \mathbb{R}^d\rightarrow\mathbb{R}^{n\times d},$$  are
uniformly bounded and the volatility matrix $\sigma (v)$ has full row rank $%
n $.

(H2) The drift coefficient $\eta$ satisfies the following dissipative
condition: {there exists some positive constant $C_{\eta }>0$ such that}
\begin{equation}
(\eta (v)-\eta (\bar{v}))^{T}(v-\bar{v})\leq -C_{\eta }|v-\bar{v}|^{2},
\label{dissipative}
\end{equation}%
for any $v,\bar{v}\in \mathbb{R}^{d}$. The volatility matrix $\kappa =(\kappa ^{ij})$, $1\leq
i,j\leq d$, is a constant matrix with $\kappa \kappa ^{T}$ positive
definite and normalized to $||\kappa
||=1$\footnote{{Herein, $||\cdot||$
represents the trace norm for the matrix in $\mathbb{R}^{d\times
d}$, whereas $|\cdot|$ in \eqref{dissipative} represents the norm for
the vectors in $\mathbb{R}^d$.}}.
\end{assumption}
%\textcolor[rgb]{1.00,0.00,0.00}{The  ``large enough"\ requirement of the  constant $C_{\eta }$ in
%\eqref{dissipative} will be refined into the assumption
%$C_\eta>C>0$, where $C$ is the constant derived from the condition
%that the driver of upcoming ergodic BSDE \eqref{EQBSDE1} satisfies,
%i.e. the constant in the first inequality of \eqref{120401}. We
%remark that this constant $C$  depends on the coefficient $\theta$
%but not on the coefficient $\eta$, which makes the ``large enough"
%requirement possible\footnote{See, for example, the expressions $F$
%in \eqref{112001} and $G$ in \eqref{111902}.}.
%The ``large enough" property of the above constant $C_{\eta}$ will
%be refined later on when we relate it to another constant appearing
%in the driver of the underlying BSDE (cf. (\ref{EQBSDE1}) and
%(\ref{IQBSDE1})).
{Herein, the dissipative condition (\ref{dissipative}) is introduced to
ensure the existence  of a unique invariant measure of the
stochastic factor process $V$, i.e., $V$ is ergodic.
To simplify the notation, we introduce  the market price of risk vector $\theta (v),$ which is
defined as
\begin{equation}
\theta (v)=\sigma (v)^{T}[\sigma (v)\sigma (v)^{T}]^{-1}b(v),\ v\in \mathbb{R}^{d},
\end{equation}
so it solves the market price of risk equation $\sigma (v)\theta (v)=b(v)$. In addition, we suppose the market price of risk vector $\theta (v),$ $v\in \mathbb{R}^{d},$
 is
uniformly bounded and Lipschitz continuous with bound $K_{\theta}$ and Lipschitz constant $C_{\theta}$.}

We consider an investor  starting at time $t=0$ with initial
wealth level $x>0$ and trading among the bond and the stocks. Let
$\tilde{\pi}=(\tilde{\pi}^{1},\cdots ,\tilde{\pi}^{n})^{T}$ be the proportions
of her total wealth in the individual stock
accounts. Then, due to the self-financing policy,  the cumulative wealth
process $X^\pi$ satisfies
\begin{equation*}
dX_{t}^{\pi }=\sum \limits_{i=1}^{n}\frac{%
\tilde{\pi}_{t}^{i}X_{t}^{\pi }}{S_{t}^{i}}dS_{t}^{i}=X_{t}^{\pi }\tilde{\pi}_{t}^{T}\left(
b(V_{t})dt+\sigma (V_{t})dW_{t}\right) .
\end{equation*}%
As in \cite{LZ}, using the investment proportions rescaled by the volatility of stock prices, namely,
$\pi _{t}^{T}=\tilde{\pi}_{t}^{T}\sigma (V_{t}),$
we get
\begin{equation}
dX_{t}^{\pi }=X_{t}^{\pi }\pi _{t}^{T}(\theta (V_{t})dt+dW_{t}),
\label{wealth-process}
\end{equation}%
with $X_{0}^\pi=x\in \mathbb{R}_+$.

Next, we consider model uncertainty, i.e., the ambiguity of the probability measure which evaluates the performance. We denote by $\tilde{u}=(\tilde{u}^1,\cdots,\tilde{u}^d)^T$ the  parameters reflecting the possible future scenarios. For convenience, we will work throughout with the scenario parameters rescaled by the volatility of  stochastic factors, i.e., $$u=\kappa^T(\kappa\kappa^T)^{-1}\tilde{u}.$$
We introduce admissible spaces $\Tilde{\Pi}$ and $\mathcal{U}$ for the rescaled investment proportions $\pi$ and scenario parameters $u$, respectively.
\begin{definition}
Let $\Pi\subset\mathbb{R}^d$ be convex and closed and include the origin $0$. For any $t\geq 0$,  a process $\pi:\Omega\times [0,t]\rightarrow{\Pi}$ is an admissible investment proportion for an investor in the trading interval $[0,t]$, if $\pi\in\mathcal{L}_{BMO}^{2}[0,t]$, where
\begin{equation*}
\begin{aligned}
&\mathcal{L}_{BMO}^{2}[0,t]=\Big\{ (\pi _{s})_{s\in \lbrack 0,t]}:\pi \
\text{is}\  \mathbb{F}\text{-progressively\ measurable,}\ \\
& E_{\mathbb{P}}(  \int_{\tau }^{t}|\pi
_{s}|^{2}ds \vert \mathcal{F}_{\tau }) \leq C, a.s.,\  \text{for some constant C and all}%
\  \mathbb{F}\text{-stopping times}\ \tau\leq t\Big\}.
\end{aligned}
\end{equation*}%
The set of all admissible investment proportions in the trading interval $[0,t]$ is denoted by
 $\Pi_{{\left[ 0,t\right] }}$. Moreover, we define the set of \textit{admissible proportions} for all time horizons as $\tilde{\Pi}:=\cup _{t\geq 0}\Pi_{[0,t]}$.
 %\begin{equation}
%\Pi_{{\left[ 0,t\right] }}=\{(\pi _{r})_{r\in \lbrack 0,t]}:\pi \in
%\mathcal{L}_{BMO}^{2}[0,t],\  \pi _{r}\in \Pi \text{ and }X_{r}^{\pi }\in
%\mathbb{D},\text{ }r\in \left[ 0,t\right] \}.  \label{admissibility}
%\end{equation}%

%The above integrability condition is also called the \textit{BMO-condition},
%since for any $\pi \in \mathcal{L}_{BMO}^{2}[0,t]$,
%\begin{equation*}
%ess\sup_{\tau \in \lbrack 0,t]}E_{\mathbb{P}}\left( \left. \int_{\tau
%}^{t}\pi _{u}^{T}dW_{u}\right \vert \mathcal{F}_{\tau }\right)
%^{2}=ess\sup_{\tau \in \lbrack 0,t]}E\left( \left. \int_{\tau }^{t}|\pi
%_{u}|^{2}du\right \vert \mathcal{F}_{\tau }\right) <\infty ,
%\end{equation*}%
%and, hence, the stochastic integral $\int_{0}^{s}\pi _{u}^{T}dW_{u},$ $s\in %
%\left[ 0,t\right] ,$ is a \textit{BMO}-martingale.
\end{definition}
\begin{definition}
Suppose that $U\subseteq\mathbb{R}^d$ is convex and compact. For any $s\geq t\geq 0$, a process $u:[t,s]\times\Omega\rightarrow U$ is an  admissible scenario parameter if it is $\mathbb{F}$-progressively measurable and essentially bounded. We denote by $\mathcal{U}_{t,s}$ and $\mathcal{U}$ the set of all admissible scenario parameter in the time interval $[t,s]$ and for all time horizons, respectively.
\end{definition}

{In a market with model uncertainty, the investor will apply $\mathbb{P}^u$ to measure her preference instead of the probability measure $\mathbb{P}$, where the probability measure $\mathbb{P}^u$ is an equivalent probability measure with respect to $\mathbb{P}$ and  introduced by the following measure transformation
\begin{equation}\label{112901}
\frac{d\mathbb{P}^u}{d\mathbb{P}}\Big|_{\mathcal{F}_t}
=\mathcal{E}\left(\int_0^tu_s^TdW_s\right):=\exp\{\int_0^tu_s^TdW_s-\frac{1}{2}\int_0^t|u_s|^2ds\},\ u\in\mathcal{U}.
\end{equation}
In fact, this characterization of model uncertainty, admitting an
entire class $\{\mathbb{P}^u|u\in\mathcal{U}\}$ of possible prior
models, is a common approach applied in the  classical robust
expected utility, see  \cite{HS2006}.}

For every $u\in\mathcal{U}$, the process $W^u$ defined as
\begin{equation}\label{new-B.M.}
dW_t^u=-u_tdt+dW_t,
\end{equation} is a Brownian motion under the probability measure $\mathbb{P}^u$. Moreover, if $\pi\in\Pi_{[0,t]}$, then under $\mathbb{P}^u$, we also have
$$ess\sup_{\tau }E_{\mathbb{P}^u}\left( \left. \int_{\tau }^{t}|\pi
_{s}|^{2}ds\right \vert \mathcal{F}_{\tau }\right) <\infty.$$

The investor will evaluate her investment via a forward performance
process, the concept of which was first introduced and developed
in \cite{MZ0}-\cite{MZ3}. Since the investor is uncertain about the
probability measure she uses, she will seek for an optimal
investment proportion that is least affected by model uncertainty.
This leads to the so called \emph{robust forward performance
processes} as first introduced in \cite{Jan} and later extended to
the case with uncertain parameters in \cite{CL}.
{In \cite{Jan}
a class of convex penalty functions was introduced to represent the weighting/likehood of $\mathbb{P}^u$.
To generalize the idea of penalty functions, we parameterize robust
forward performance processes by the investor's prediction of the probability measure $\mathbb{P}^u$ (or the scenario parameter $u$), so penalty functions become a special class of parametrization. For this, we first give a precise meaning of parametrization which we will call \emph{a realization} of the model $\mathbb{P}^u$ (or the scenario parameter $u$) hereafter.}
{\begin{definition}\label{def081801}
For $0\leq t\leq s<\infty$, a mapping $\gamma:\Omega\times[t,s]\times
\mathcal{U}_{t,s}\rightarrow L^0(\mathcal{F}_s;\mathbb{R}^+)$ is a
realization process, if
for each $r$ with $t\leq r\leq s$ and $u\in\mathcal{U}_{t,s}$, it holds \begin{equation}\label{2021032001}
\gamma_{t,s}(u)=\gamma_{t,s}(u_1\oplus u_2)=\gamma_{t,r}(u_1)+\gamma_{r,s}(u_2),\  \text{a.s.,}
\end{equation}
where  $u_1$ and $u_2$ is the restriction of $u$ to trading interval $[t,r]$ and $[r,s]$, respectively, and we denote $u=u_1\oplus u_2$.
%{\rm ii)}\ for each
%$\mathcal{F}_t$-measurable bounded positive random variable $\xi$,
%$u\rightarrow E[\xi\gamma_{t,s}(u)]$ is $\sigma(L^1,L^\infty)$-lower
%semicontinuous.
\end{definition}}
{We will use a realization process $\gamma_{t,s}(u)$ to parameterize the original utility on trading horizon $[t,s]$ because of the chosen model $\mathbb{P}^u$.
%which is predicted by the investor  at time $t$.
The condition \eqref{2021032001} is essentially a time-additivity property, which states that along the same model $\mathbb{P}^u$, the realization process on interval $[t,s]$  is accumulated by the realization processes estimated on $[t,r]$ and $[r,s]$.}
%Intuitively, We first give the definition of robust forward performance processes in our framework, which is a
%For any given $u\in\mathcal{U}$ (i.e., a specific scenario), we first define a class of \emph{realization process of model $\mathbb{P}^u$} to the investor's utility over trading interval $[t,s]$.
%As given in
%\cite{Jan}, we first introduce a family of admissible penalty
%functions $\gamma_{t,s}(u)$, $0\leq t\leq s<\infty,\
%u\in\mathcal{U}$, to address the distance between the candidate
%measure $\mathbb{P}^u$ and the reference measure $\mathbb{P}$ the
%investor considers reasonable.

%\textcolor[rgb]{1.00,0.00,0.00}{\begin{definition}\label{def081801}
%For $0\leq t\leq s<\infty$, a mapping $\gamma_{t,s}:
%\mathcal{U}_{t,s}$\footnote{Here, we say $u\in \mathcal{U}_{t,s}$, if $u\in\mathcal{U}$ \text{and} $u$ is restricted to the time interval $[t,s]$.}$\rightarrow L^0(\mathcal{F}_t;\mathbb{R}_+)$ is an
%admissible penalty function, if
%for each
%$\mathcal{F}_t$-measurable random variable $\rho$ with value in
%$[0,1]$, and $u_1,\ u_2\in\mathcal{U}$, it holds $\gamma_{t,s}(\rho
%u_1+(1-\rho)u_2)\leq
%\rho\gamma_{t,s}(u_1)+(1-\rho)\gamma_{t,s}(u_2)$ a.s.
%%{\rm ii)}\ for each
%%$\mathcal{F}_t$-measurable bounded positive random variable $\xi$,
%%$u\rightarrow E[\xi\gamma_{t,s}(u)]$ is $\sigma(L^1,L^\infty)$-lower
%%semicontinuous.
%\end{definition}}
{The following generalises the definition of robust  forward
performance processes.}
\begin{definition}
\label{def}
%(Definition 2.6 in \cite{Jan})
{{A process $U\left( x,t\right) ,$ $\left( x,t\right) \in \mathbb{%
R_{+}\times }\left[ 0,\infty \right)$,
is a robust forward
performance process associated with  a realization process
$\gamma$ and a parameter $\tau\in\mathbb{R}$
if}}

{{{\rm i)}\ for each $x\in \mathbb{R}_{+},$ $U\left( x,t\right) $ is $\mathbb{F}$%
-progressively measurable;}}

{{{\rm ii)}\ for each $t\geq 0$, the
mapping $x\mapsto U(x,t)$ is strictly increasing and strictly
concave;}}

{{{\rm iii)}\ the process ${U}(x,t)$
satisfies the self-generating property (dynamic programming
principle), i.e., for all $s\geq t\geq 0$ and $\bar{u}\in\mathcal{U}_{0,t}$,}
\begin{equation}\label{maxmim-1}
{\esssup_{\pi\in\tilde{\Pi}}\essinf_{u\in\mathcal{U}_{t,s}}E_{\mathbb{P}^u}[\widetilde{U}(X_s^{\pi},s,\bar{u}\oplus u)|\mathcal{F}_t,X_t^\pi=x]=\widetilde{U}(x,t,\bar{u}),\
\ \ a.s.}
\end{equation}
where
\begin{equation}\label{2021032002}
\widetilde{U}\left( x,t,u\right)=U(x,t)+\tau\gamma_{0,t}(u).
\end{equation}}
\end{definition}
\begin{remark}\label{re-def1}
{
It is easy to check that the process $\widetilde{U}$ defined in \eqref{2021032002} also satisfies properties ${\rm i)}-{\rm ii)}$ in Definition \ref{def}
and thus $\widetilde{U}$ parameterized by the scenario parameter $u$ can also be regarded as a robust forward performance process, where $\gamma_{0,t}(u)$ represents a realization of the model $\mathbb{P}^u$ up to time $t$.
In this situation, the robust forward performance process $\widetilde{U}(x,t,u)$ at any given time $t$ constitutes of the original utility $U(x,t)$ and the realization process $\gamma_{0,t}(u)$ reflecting the historical cumulative impact of the model $\mathbb{P}^u$ from $0$ to $t$.
Considering the utility may be increasing or decreasing along with the model $\mathbb{P}^u$ which depends on the investor's attitude to this model,  we introduce a parameter $\tau$ in \eqref{2021032002} with its sign indicating the varying trend of the utility and its absolute value $|\tau|$ modeling the sensitively of the utility with respect to the model $\mathbb{P}^u$ (or the scenario $u$.)}
\end{remark}

\begin{remark}\label{re-def2}
%At current time $t$, the scenario $u$ will be known Therefore, the process $\widetilde{U}(x,t,u)$ parameterized by the scenario $u$ on $[0,t]$ is  a utility function evolving forward in time.
%It is worth to point out that include two
%From Definition \ref{def}, it is to check that $ $ a utility   our robust forward performance processes
{It turns out that our definition of robust forward performance
processes is a generalization of robust forward performance processes studied in
\cite{Jan}, although they are proposed based on different ideas. In fact,
it is easy to check from \eqref{2021032001} that property \eqref{maxmim-1} is equivalent to the following form
\begin{equation}\label{maxmim}
{\esssup_{\pi\in\tilde{\Pi}}\essinf_{u\in\mathcal{U}_{t,s}}E_{\mathbb{P}^u}[U(X_s^{\pi},s)
+\tau\gamma_{t,s}(u)|\mathcal{F}_t,X_t^\pi=x]=U(x,t),\
\ \ a.s.}
\end{equation}
If $\tau\geq 0$ and the realization process $\gamma_{t,s}(u)$ is convex in $u$, then $\tau\gamma_{t,s}(u)$ becomes a penalty function.
%and  $(U(x,t),\gamma)$ will be a robust forward performance process according to  Definition 2.6 in \cite{Jan}.
{%It is worth to point out that the
%term $\tau\gamma_{t,s}(u)$ in \eqref{maxmim} has different
%implications according to the sign of the parameter $\tau$. When the
%parameter $\tau$ is nonnegative, the term $\tau\gamma_{t,s}(u)$
%represents the weighting of the probability measure $\mathbb{P}^u$
%and can be seen as a penalty term.
In this situation, we can verify from \eqref{maxmim} that our
Definition \ref{def} is consistent with \cite{Jan}.
Due to
the convexity of $\gamma_{t,s}(u)$ on $u$, a duality method is
developed in \cite{Jan} to construct $U(x,t)$ and its associated
optimal investment proportion $\pi^*$. Moreover, a saddle point method is employed to further find the
worst case scenario $u^*$ in \cite{CL} (with
$\tau= 0$).}}

{%Our motivation and the related economic interpretations of this extension are as follows.
}

{On the other hand, if $\tau<0$, the economic interpretation of $\tau\gamma_{0,t}(u)$ in \eqref{2021032002} is as follows.
From \eqref{2021032002} the investor's utility at time $t$ includes two terms $U(x,t)$ and $\tau\gamma_{0,t}(u)$
when using the model $\mathbb{P}^u$. Since $\tau$ is negative, the utility  $U(x,t)+\tau\gamma_{0,t}(u)$ is smaller than the original utility $U(x,t)$ and
the term $|\tau|\gamma_{0,t}(u)$ can be regarded as the utility's loss from time $0$ to $t$ under model $\mathbb{P}^u$.
%, which is predicted by the investor at time $t$.
 As a result, the  utility at time $t$ becomes $U(x,t)+ \tau\gamma_{0,t}(u)$. %where $\tau$ reflects the sensitive degree of the term $\gamma_{t,s}(u)$.
Following this viewpoint, the term $\tau\gamma_{0,t}(u)$ represents the degree of utility loss based on the investor's prediction over the probability measure $\mathbb{P}^u$ stemming from model uncertainty. }

{We remark that we do not assume
$\gamma_{t,s}(u)$ is convex with respect to $u$, which implies that the duality
method may not be suitable to our framework even for $\tau>0$. In addition, the
classical saddle point argument used in \cite{CL} is not valid either
because the saddle point in general does not exist (for example, when the term $\tau\gamma_{t,s}(u)$ is concave in $u$).}
\end{remark}
%We give more details in Section 7 to explain this issue.

%Our aim is to find a valid method to address this robust forward investment problem both for $\tau> 0$ and $\tau\leq 0$. For this, we provide a  unified approach, based on \emph{stochastic differential game}, to construct a class of robust forward performance process and the associated robust optimal investment policy. Notethat

%The advantage of our game framework is not only applicable to the case $\tau\geq 0$, but also to the case $\tau<0$, in which case the classical saddle point arguments  may not work. Moreover, it will be more direct and we give more details in Section 7 to illustrate this issue.

% We give more details in Section 7 to illustrate this issue.
%Then, this term can be viewed as a reward function for an investor when $\tau\leq 0$. As a result, the

\subsection{{A stochastic differential game approach}}

\label{subsection2.1}

In contrast to \cite{Jan} and \cite{CL}, which do not cover the case
of the stochastic factor model, we aim to construct a class of
robust forward performance processes with explicit dependency on the
stochastic factor process $V$. Our approach is based on
\emph{stochastic differential games}, which is an alternative and
more direct approach for the case $\tau= 0$, and may also work for
the case $\tau\neq 0$ when the realization process $\gamma$ has some
specific forms (see Section 7 for the case $\tau<0$).

The basic idea of the stochastic differential game approach is as
follows. To robustify the optimal investment, the inner part of the
above optimization problem (\ref{maxmim}) is played by the market minimizes the expected forward
utility by choosing the worst-case scenario, whereas the investor
aims to select the best investment proportion that is least affected
by the market's choice. This leads to a stochastic
differential game between the investor and market.

In addition to the representation of the robust forward performance process, we also aim to provide both the optimal investment proportion for each scenario and the worst-case scenario for each investment proportion. The investment proportion (resp. worst-case scenario)
responding to each scenario (resp. investment proportion) can be exactly expressed as the  ``strategy to control"  in the setup of stochastic differential games (see \cite{BL,FS1989}). Thus, we next give the definitions of two admissible ``strategies" associated with their respective ``controls".
\begin{definition}
An admissible investment strategy  responding to each scenario parameter for an investor  is a
%$\mathcal{P}\otimes\mathcal{B}(\mathcal{U})$-measurable
 mapping $\alpha:[0,\infty)\times\Omega\times\mathcal{U}\rightarrow\tilde{\Pi}$ satisfying the following two properties:\\
{\rm i)} For each $u\in\mathcal{U}$, $\alpha$ is $\mathbb{F}$-progressively measurable;\\
{\rm ii)}  Non-anticipative property, that is, for all $t>0$ and all
$u_1,u_2\in\mathcal{U}$, with $u_1=u_2,$ $dsd\mathbb{P}$-a.e.,  on
$[0,t]$, it holds that $\alpha(\cdot,u_1)=\alpha(\cdot,u_2)$,
$dsd\mathbb{P}$-a.e.,  on $[0,t]$.

An admissible scenario parameter strategy responding to each
investment proportion for the market,
$\beta:[0,\infty)\times\Omega\times\tilde{\Pi}\rightarrow\mathcal{U}$,
is defined similarly. The set of all admissible investment
strategies for the investor is denoted by $\mathcal{A}$, while the
set of all admissible scenario parameter strategies  is denoted by
$\mathcal{B}$.
\end{definition}

Herein, the concept \emph{non-anticipative property} is widely used in the definition of \emph{admissible strategies} in
differential games to characterize that each player's decision,
depending on the other's action, will not change if the other one
chooses the same control (see \cite{BL}). We introduce this concept
here to enforce that an investor will take the same investment
action if the scenario does not change.

We consider a zero-sum stochastic differential game, where the state
dynamic is given by the wealth equation \eqref{wealth-process}.
Furthermore, let $U(x,t)$ be a stochastic process satisfying i) and
ii) in Definition \ref{def} and $\gamma$ be a realization process
 with parameter $\tau\in\mathbb{R}$. For any $s\geq t$,
the objective functional is given by
$${J(x,t;s,\pi,u)=E_{\mathbb{P}^u}[U(X_s^\pi,s)+\tau\gamma_{t,s}(u)|\mathcal{F}_t, X_t^\pi=x].}$$
The lower and upper values of the game are then defined as
\begin{equation}\label{maxmin_1}
{\underline{U}(x,t;s)=\essinf_{\beta\in\mathcal{B}}\esssup_{\pi\in\tilde{\Pi}}J(x,t;s,\pi,\beta(\cdot,\pi)),
\ \ \ a.s.,}
\end{equation}
and
\begin{equation}\label{minmax}
{\overline{U}(x,t;s)=\esssup_{\alpha\in\mathcal{A}}\essinf_{u\in
\mathcal{U}} J(x,t;s,\alpha(\cdot,u),u), \ \ \ a.s.,}
\end{equation}
respectively.

Note that if $\underline{U}(x,t;s)=U(x,t)$, for all $s\geq t$, which
implies that the objective functional of the stochastic differential
game ``self generates" the lower value of the game, then it is clear
that $U(x,t)$ becomes a robust forward performance process
satisfying i)-iii) in Definition \ref{def}. Thus, we say the game is
\emph{self-generating}  if
\begin{equation}\label{2020083101} \underline{U}(x,t;s)=U(x,t),\
\text{for\ all}\ s\geq t,
 \end{equation} which will in turn provide a robust forward performance process.
To this end, we will construct a control $\pi^*\in\tilde{\Pi}$, a
strategy $\beta^*\in\mathcal{B}$, and a process $U(x,t)$ satisfying
the martingale properties: For any $\pi\in\tilde{\Pi}$, $s\geq t$,
a.s.,
\begin{equation}\label{beta_star}
\essinf_{\beta\in\mathcal{B}}J(x,t;s,\pi,\beta(\cdot,\pi))=J(x,t;s,\pi,\beta^*(\cdot,\pi))\leq
U\left(x,t\right);
\end{equation}
%\begin{equation}
%J(t,x,\pi,\beta^*(\pi))\leq U(x,t); \label{supermartingale}
%\end{equation}
\begin{equation}\label{martingale_maxmin}
J(x,t;s,\pi^*,\beta^*(\cdot,\pi^*)) =U\left(x,t\right).
\end{equation}

\subsection{{Further discussion on the stochastic differential game approach when the game value exists}}

\label{subsection2.2}

In this subsection, according to the sign of the parameter $\tau$ in
\eqref{maxmim}, we further explain the application of our stochastic
differential game method in robust forward investment problems.

If $\tau\geq 0$ and $\gamma_{t,s}(u)$ is convex in $u$,
a saddle point in general exists at least for a special class of
penalty functions $\gamma_{t,s}(u)$ (as shown in \cite{Jan} and \cite{CL}). In contrast
to the saddle point method, the advantage of the stochastic
differential game approach is to provide, in explicit form, the
optimal investment choice for the investor not only under the
worst-case scenario but also for each scenario, as well as the worst
case scenario for each investment choice not only the optimal one.
Moreover, it is often relatively easy to compute the optimal
strategy pair $(\alpha^*,\beta^*)$, as they only involve
maximization/minimization problems rather than maxmin/minmax
problems.

\emph{From Sections 3 to 6, we consider a robust forward problem
without realization process, i.e., $\tau= 0$.} In this situation, the game
value exists, and we will construct the associated forward
performance process by the value of the stochastic differential
game. Recall that the value of the game exists if
\begin{equation}\label{game_vlaue}
\underline{U}(x,t;s)=\overline{U}(x,t;s),\ \text{for\ all}\ s\geq
t,
\end{equation}
 which further implies that both equal to $U(x,t)$ if the
self-generating condition \eqref{2020083101} also holds. In this
situation, with the help of the upper value function
$\overline{U}(x,t;s)$,
 we will construct a control pair $(\pi^*,u^*)\in\tilde{\Pi}\times\mathcal{U}$, a strategy pair $(\alpha^*,\beta^*)\in\mathcal{A}\times\mathcal{B}$, and a process $U(x,t)$ satisfying the martingale properties: \eqref{beta_star}, \eqref{martingale_maxmin}, and
for any $u\in\mathcal{U}$, $s\geq t$, a.s.,
%\begin{equation}\label{beta_star}
%\essinf_{\beta\in\mathcal{B}}J(t,x;s,\pi,\beta(\cdot,\pi))=J(t,x;s,\pi,\beta^*(\cdot,\pi))\leq U\left(x,t\right);
%\end{equation}
%%\begin{equation}
%%J(t,x,\pi,\beta^*(\pi))\leq U(x,t); \label{supermartingale}
%%\end{equation}
%\begin{equation}\label{martingale_maxmin}
%J(t,x;s,\pi^*,\beta^*(\cdot,\pi^*))
%=U\left(x,t\right);
%\end{equation}
\begin{equation}\label{alpha_star}
\esssup_{\alpha\in\mathcal{A}}
J(x,t;s,\alpha(\cdot,u),u)=J(x,t;s,\alpha^*(\cdot,u),u)\geq U(x,t);
\end{equation}
%\begin{equation}
%E_{\mathbb{P}^{u}}\left[{U}(X_{s}^{\alpha^*(u) },s)|\mathcal{F}_{t},X_t=x\right] \geq {U}\left(
%x,t\right);\label{submartingale}
%\end{equation}%
\begin{equation}
J(x,t;s,\alpha^*(\cdot,u^*),u^*)={U}\left(x,t\right).\label{martingale_minmax}
\end{equation}
Note that (\ref{beta_star}) and (\ref{martingale_maxmin}) are the martingale characterization of the lower value of the game in (\ref{maxmin_1}), whereas (\ref{alpha_star}) and (\ref{martingale_minmax}) characterize the upper value of the game in (\ref{minmax}).

Moreover, if it also holds that $\pi^*=\alpha^*(\cdot,u^*)$ and
$u^*=\beta^*(\cdot,\pi^*)$, then the martingale conditions
(\ref{beta_star})-\eqref{martingale_maxmin} and \eqref{alpha_star}-(\ref{martingale_minmax}) further imply that, for
all $s\geq t$,
\begin{align*}
J(x,t;s,\pi^*,u)&\geq J(x,t;s,\pi^*,\beta^{*}(\cdot,\pi^*))\\
&=J(x,t;s,\alpha^*(\cdot,u^*),u^*)\geq J(x,t;s,\pi,u^*),
\end{align*}
so the control pair $(\pi^*,u^*)$ is a saddle point for the
stochastic differential game with the value
$J(x,t;s,\pi^*,u^*)=U(x,t)$.

\begin{remark}
When $\tau\neq 0$, it is unclear how to construct a robust forward
performance process with a general realization process $\gamma$ (see,
for example, the discussion of the time consistency issue of penalty
functions in Section 4 of \cite{Jan}). Moreover, a saddle point may
even fail to exist if $\tau<0$. Nevertheless, we will show in
section 7 that our stochastic differential game approach may still
work for $\tau<0$, at least for a special class of {quadratic form realization processes}. A more general case for $\tau\neq 0$ is still left open.
\end{remark}
%\begin{remark}\label{re-utility}
%When there is no model uncertainty (i.e., $u\equiv 0$, for all $u\in\mathcal{U}$), then only the conditions (\ref{alpha_star}) and (\ref{martingale_minmax}) are relevant, which is precisely the definition of forward performance processes introduced in \cite{MZ0}-\cite{MZ3}.
%
%On the other hand, if $\pi^*=\alpha^*(\cdot,u^*)$ and $u^*=\beta^*(\cdot,\pi^*)$, then the martingale conditions (\ref{beta_star})-(\ref{martingale_minmax}) imply that, for all $s\geq t$,
%\begin{align*}
%J(x,t;s,\pi^*,u)&\geq J(x,t;s,\pi^*,\beta^{*}(\cdot,\pi^*))\\
%&=J(x,t;s,\alpha^*(\cdot,u^*),u^*)\geq J(x,t;s,\pi,u^*),
%\end{align*}
%so the control pair $(\pi^*,u^*)$ is a saddle point for the
%stochastic differential game with the value
%$J(x,t;s,\pi^*,u^*)=U(x,t)$.
%
%%{\rm (iii)}\ This market model with uncertainty can be reformulated as a zero-sum stochastic differential game, where the dynamics is given via the wealth process \eqref{wealth-process} and the payoff is given via the robust forward performance process satisfying the following form
%%$$J(t,x,\pi,u)=E_{\mathbb{P}^u}[U(X_s^\pi,s)|\mathcal{F}_t, X_t^\pi=x].$$
%%Under this viewpoint, the optimal investment choice  under the worst-case scenario of this forward preference is associated with the lower value function of the zero-sum stochastic differential game. Moreover, the robust forward performance process  can be represented via the associated value function.
%\end{remark}

\section{Power robust  forward performance processes with zero realization processes}

\label{SectionPowerUtility}
In this section, we focus on a class of homothetic robust forward performance processes that are homogenous in the degree of $\delta\in(0,1)$, and has the factor-form%
\begin{equation}
U\left( x,t\right) =\frac{x^{\delta }}{\delta }e^{f(V_{t},t)},
\label{power-general}
\end{equation}%
where $f:\mathbb{R}^{d}\times \lbrack 0,\infty )\rightarrow \mathbb{R}$ is a
deterministic function to be specified, and the parameter $\tau$ of the realization process is equal to zero.
We call such a robust forward performance process a \emph{power robust forward performance process}.

\begin{proposition}\label{le-PDE}
 Assume that $f(v,t)$,
$\left( v,t\right) \in \mathbb{R}^{d}\times
\lbrack 0,\infty ),$  is a classical solution (with enough regularity) of the semilinear\textbf{\ }PDE
\begin{equation}
f_{t}+\frac{1}{2}Trace\left( \kappa \kappa ^{T}\nabla ^{2}f\right) +\eta
(v)^{T}\nabla f+G(v,\kappa ^{T}\nabla f)=0,  \label{f-eqn}
\end{equation}%
where
\begin{equation}
G(v,z)=\inf_{u\in U}\sup_{\pi\in\Pi}F(v,z,\pi,u),
\label{driver-formal}
\end{equation}%
with
\begin{equation}\label{112001}
F(v,z,\pi,u)=-\frac{1}{2}\delta(1-\delta)|\pi|^2
+\delta\pi^T(\theta(v)+z+u)+z^Tu+\frac{1}{2}|z|^2.
\end{equation}
Then, $U(x,t)=\frac{x^{\delta }}{\delta }e^{f(V_{t},t)}$ is a power robust forward performance process.
\end{proposition}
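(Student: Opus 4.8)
The plan is to verify directly the three defining properties of Definition~\ref{def} for $U(x,t)=\frac{x^{\delta}}{\delta}e^{f(V_t,t)}$ with $\tau=0$. Properties i) and ii) are immediate: $f$ is deterministic and $V$ is $\mathbb{F}$-adapted, so $U(\cdot,t)$ is $\mathbb{F}$-progressively measurable; and since $\delta\in(0,1)$, $x\mapsto x^{\delta}/\delta$ is strictly increasing and strictly concave on $\mathbb{R}_+$ while $e^{f(V_t,t)}>0$. The substance is property iii), i.e. the self-generating identity \eqref{maxmim} with $\tau=0$. Following Subsection~\ref{subsection2.1}, I would obtain it by constructing feedback controls for both players out of the pointwise minimax structure of $F$ in \eqref{driver-formal}, checking that $U(X^{\pi}_{\cdot},\cdot)$ acquires the appropriate supermartingale/submartingale structure along them, and then combining this with the elementary minimax inequality $\esssup_{\pi}\essinf_{u}J(x,t;s,\pi,u)\le\underline U(x,t;s)$.

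The first step is an It\^o expansion of $r\mapsto U(X^{\pi}_r,r)=\frac{(X^{\pi}_r)^{\delta}}{\delta}e^{f(V_r,r)}$ under $\mathbb{P}^{u}$, for arbitrary $\pi\in\tilde{\Pi}$ and $u\in\mathcal{U}_{t,s}$. Rewriting \eqref{wealth-process} and \eqref{factor-SDE} in terms of the $\mathbb{P}^{u}$-Brownian motion $W^{u}$ of \eqref{new-B.M.} (so that $dX^{\pi}_r=X^{\pi}_r\pi_r^{T}(\theta(V_r)+u_r)dr+X^{\pi}_r\pi_r^{T}dW^{u}_r$ and $dV_r=(\eta(V_r)+\kappa u_r)dr+\kappa dW^{u}_r$) and using the homogeneity identities $(X^{\pi}_r)^{\delta-1}X^{\pi}_r=\delta\frac{(X^{\pi}_r)^{\delta}}{\delta}$ and $(\delta-1)(X^{\pi}_r)^{\delta}=\delta(\delta-1)\frac{(X^{\pi}_r)^{\delta}}{\delta}$, the drift of $U(X^{\pi}_r,r)$ comes out as $U(X^{\pi}_r,r)$ times
\[
f_t+\tfrac12 Trace\!\left(\kappa\kappa^{T}\nabla^{2}f\right)+\eta(V_r)^{T}\nabla f+F\!\left(V_r,\kappa^{T}\nabla f(V_r,r),\pi_r,u_r\right),
\]
with $F$ as in \eqref{112001}: here the $X^{\pi}$--$V$ cross-variation produces the term $\delta\pi_r^{T}\kappa^{T}\nabla f$, the quadratic variation of $f(V_\cdot,\cdot)$ produces $\tfrac12|\kappa^{T}\nabla f|^{2}$, and the measure change contributes the terms involving $u_r$. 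Inserting the PDE \eqref{f-eqn} in the form $f_t+\tfrac12 Trace(\kappa\kappa^{T}\nabla^{2}f)+\eta^{T}\nabla f=-G(v,\kappa^{T}\nabla f)$, the drift simplifies to
\[
U(X^{\pi}_r,r)\,\Big(F\big(V_r,z_r,\pi_r,u_r\big)-G\big(V_r,z_r\big)\Big),\qquad z_r:=\kappa^{T}\nabla f(V_r,r),
\]
while the martingale part of $dU(X^{\pi}_r,r)$ is $U(X^{\pi}_r,r)(\delta\pi_r+z_r)^{T}dW^{u}_r$.

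The second step is the pointwise minimax analysis of \eqref{driver-formal}. For fixed $(v,z)$, the map $\pi\mapsto F(v,z,\pi,u)$ is strictly concave on the convex closed set $\Pi$ with a coercive leading term $-\tfrac12\delta(1-\delta)|\pi|^{2}$ (as $\delta(1-\delta)>0$), and $u\mapsto F(v,z,\pi,u)$ is affine on the convex compact set $U$; hence the suprema over $\Pi$ are attained, Sion's minimax theorem gives $\inf_{u\in U}\sup_{\pi\in\Pi}F=\sup_{\pi\in\Pi}\inf_{u\in U}F=G(v,z)$, and (using the coercivity in $\pi$) there is a saddle point $(\hat\pi(v,z),\hat u(v,z))$ with $F(v,z,\pi,\hat u)\le G(v,z)\le F(v,z,\hat\pi,u)$ for all $\pi\in\Pi$, $u\in U$. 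Because $\theta$ and $\nabla f$ are bounded (the latter being part of the ``enough regularity'' hypothesis on $f$) and $U$ is bounded, the maximisers $\hat\pi,\hat u$ are uniformly bounded, so after a measurable selection the feedback processes $\pi^{*}_r:=\hat\pi(V_r,z_r)$ and $u^{*}_r:=\hat u(V_r,z_r)$ are admissible, and $\beta^{*}(\cdot,\pi):=u^{*}$ (a strategy constant in $\pi$, hence non-anticipative) lies in $\mathcal{B}$. Along $\beta^{*}$, the drift of $U(X^{\pi}_r,r)$ equals $U(X^{\pi}_r,r)(F(V_r,z_r,\pi_r,u^{*}_r)-G(V_r,z_r))\le 0$ for every $\pi\in\tilde{\Pi}$, so $U(X^{\pi}_\cdot,\cdot)$ is a $\mathbb{P}^{\beta^{*}(\cdot,\pi)}$-supermartingale on $[t,s]$ and $J(x,t;s,\pi,\beta^{*}(\cdot,\pi))\le U(x,t)$; replacing $\beta^{*}(\cdot,\pi)_r$ by the pointwise minimiser of $F(V_r,z_r,\pi_r,\cdot)$ over $U$ and running the standard verification argument for the inner control problem identifies this strategy as the $\essinf$-attaining one, yielding \eqref{beta_star}. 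Along $(\pi^{*},\beta^{*}(\cdot,\pi^{*}))$ the drift vanishes, so $U(X^{\pi^{*}}_\cdot,\cdot)$ is a true $\mathbb{P}^{\beta^{*}(\cdot,\pi^{*})}$-martingale and $J(x,t;s,\pi^{*},\beta^{*}(\cdot,\pi^{*}))=U(x,t)$, which is \eqref{martingale_maxmin}. Hence $\underline U(x,t;s)=U(x,t)$ for all $s\ge t$, i.e. the game is self-generating \eqref{2020083101}. Finally, along $\pi^{*}$ and for any $u\in\mathcal{U}_{t,s}$ the drift of $U(X^{\pi^{*}}_r,r)$ under $\mathbb{P}^{u}$ is $U(X^{\pi^{*}}_r,r)(F(V_r,z_r,\pi^{*}_r,u_r)-G(V_r,z_r))\ge 0$, so $U(X^{\pi^{*}}_\cdot,\cdot)$ is a $\mathbb{P}^{u}$-submartingale and $\essinf_{u}J(x,t;s,\pi^{*},u)\ge U(x,t)$; combining with $\esssup_{\pi}\essinf_{u}J\le\underline U(x,t;s)=U(x,t)$ gives \eqref{maxmim}, i.e. property iii), completing the proof.

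The step I expect to be the main obstacle is making the martingale manipulations rigorous. The It\^o expansion delivers only local-martingale statements, so upgrading them to genuine supermartingale/martingale statements and justifying the conditional expectations in \eqref{beta_star}--\eqref{martingale_maxmin} requires uniform-integrability estimates, obtained by localising along stopping times $\tau_n\uparrow\infty$ and passing to the limit, using the BMO property of $\pi\in\tilde{\Pi}$ (which, via the energy/John--Nirenberg inequality, controls the moments of $X^{\pi}$ under every $\mathbb{P}^{u}$) together with growth control of $e^{f(V_\cdot,\cdot)}$ along the ergodic, Gaussian-type factor $V$ (recall $\eta$ is dissipative and $\kappa$ is constant). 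The accompanying bookkeeping --- the measurable selection of $\hat\pi,\hat u$ and of the pointwise minimiser in \eqref{beta_star} (whose argmin need not be unique, $F$ being only affine in $u$), and the boundedness of the feedback maximisers --- is routine once the boundedness hypotheses on $\theta$ and $\nabla f$ are in force.
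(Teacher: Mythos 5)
Your verification is correct and its backbone coincides with the paper's: the same It\^o expansion producing the drift $U\cdot\bigl(F(V_r,\kappa^T\nabla f,\pi_r,u_r)-G(V_r,\kappa^T\nabla f)\bigr)$ after inserting the PDE \eqref{f-eqn}, followed by a drift-sign (super/sub-martingale) argument along the pointwise optimizers. Where you genuinely differ is in how the pointwise saddle structure of $F$ is obtained: you invoke Sion's minimax theorem (legitimate here, since $F$ is strictly concave and coercive in $\pi$ on the closed convex $\Pi$ and affine in $u$ on the compact convex $U$) plus a measurable selection, and then only need the two saddle inequalities $F(\cdot,\cdot,\pi,\hat u)\le G\le F(\cdot,\cdot,\hat\pi,u)$ to verify \eqref{maxmim} directly. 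The paper deliberately avoids Sion: in its Steps 1--2 it builds the response maps explicitly — $\alpha^*$ as the projection formula \eqref{optimal-1}, $u^*$, $\pi^*$, and the counterstrategy $\beta^*$ in \eqref{optimal-4} — and proves the key inequality \eqref{key} by a convex-combination perturbation argument; this buys explicit feedback formulae for the strategy pair $(\alpha^*,\beta^*)$ (not just the saddle point), which is exactly what is recycled in Theorem \ref{Theorem2_ForwardUtility} and in the examples, and which the paper advertises as the added value of the game-theoretic characterization \eqref{beta_star}--\eqref{martingale_minmax}. One caveat in your write-up: your claim that replacing $\beta^*(\cdot,\pi)$ by the pointwise minimiser of $F(V_r,z_r,\pi_r,\cdot)$ "identifies this strategy as the $\essinf$-attaining one" for \emph{every} fixed $\pi$ is asserted rather than proved, and it is not immediate (for a general, non-feedback $\pi$ the inner minimization is a different control problem, and the measures $\mathbb{P}^u$ change with $u$, so pointwise minimization of $F$ built from the \emph{same} $z_r=\kappa^T\nabla f(V_r,r)$ does not obviously minimize $J$); however, this stronger statement is not needed for Definition \ref{def} iii) — your two bounds ($J(x,t;s,\pi,u^*)\le U(x,t)$ for all $\pi$, and $J(x,t;s,\pi^*,u)\ge U(x,t)$ for all $u$) already yield \eqref{maxmim}, and the paper itself passes over this point with "it is easy to check". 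Your attention to the uniform-integrability issue (boundedness of $\kappa^T\nabla f$ as part of "enough regularity", BMO property of $\pi$, localization) is if anything more careful than the paper's Proposition, and mirrors what the paper only does later in the proof of Theorem \ref{Theorem2_ForwardUtility}.
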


\begin{proof}
Since $U(x,t)$ obviously satisfies i) and ii) in Definition  \ref{def}, it is sufficient to examine iii) in  Definition \ref{def}.

\emph{Step 1.} From \eqref{driver-formal} and \eqref{112001}, we have
\begin{equation}\label{111902}
G(v,z)=\inf_{u\in U}\sup_{\pi\in\Pi}F(v,z,\pi,u)=\inf_{u\in U}F(v,z,\alpha^*(v,z,u),u),
\end{equation}
with
\begin{equation}\label{optimal-1}
\alpha^*(v,z,u)=argmax_{\pi\in\Pi}F(v,z,\pi,u)=Proj_{\Pi}(\frac{\theta(v)+z+u}{1-\delta}).
\end{equation} Using the Lipschitz
continuity of the projection operator on the convex set $\Pi$, there exists a Borel measurable mapping
$u^*:\mathbb{R}^d\times\mathbb{R}^d\rightarrow U$ such that
\begin{equation}\label{optimal-2}
u^*(v,z)=argmin_{u\in U}F(v,z,\alpha^*(v,z,u),u).
\end{equation}
Then, from \eqref{111902} and \eqref{optimal-2}, we have
\begin{equation}
G(v,z)=F(v,z,\pi^*(v,z),u^*(v,z)),
\label{111801}
\end{equation}
 with
\begin{equation}\label{optimal-3}
 \pi^*(v,z):=\alpha^*(v,z,u^*(v,z)).
 \end{equation}

We claim that, for any $u\in U$,
\begin{equation}\label{key}
F(v,z,\pi^*(v,z),u)\geq F(v,z,\pi^*(v,z),u^*(v,z)).
\end{equation}
If (\ref{key}) holds, then
\begin{align}
\sup_{\pi\in\Pi}\inf_{u\in U}F(v,z,\pi,u)&\geq\inf_{u\in U}F(v,z,\pi^*(v,z),u)\label{first}\\
&\geq F(v,z,\pi^*(v,z),u^*(v,z))\label{second}\\
&= G(v,z)=\inf_{u\in U}\sup_{\pi\in\Pi}F(v,z,\pi,u),\nonumber
\end{align}
so both (\ref{first}) and (\ref{second}) become equalities. In turn, $\pi^*(v,z)$ in (\ref{optimal-3}) and $u^*(v,z)$ in (\ref{optimal-2}) satisfy, respectively,
\begin{equation*}
\pi^*(v,z)=argmax_{\pi\in\Pi}\inf_{u\in U}F(v,z,\pi,u),
\end{equation*}
and
\begin{equation}\label{third}
u^*(v,z)=argmin_{u\in U}F(v,z,\pi^*(v,z),u).
\end{equation}

On the other hand, there exists a $U$-valued Borel measurable mapping $\bar{\beta}^*(v,z,\pi)$ such that $F(v,z,\pi,u)$ attains the minimum, i.e.
$$\inf_{u\in U}F(v,z,\pi,u)=F(v,z,\pi,\bar{\beta}^*(v,z,\pi)).$$
Then, from (\ref{third}), the mapping $\beta^*(v,z,\pi)$ defines as
\begin{equation}\label{optimal-4}
\beta^*(v,z,\pi)=\left\{
\begin{array}{ll}
u^*(v,z),\ &\text{if}\ \pi=\pi^*(v,z);\\
\bar{\beta}^*(v,z,\pi),\ &\text{otherwise,}
\end{array}
\right.
\end{equation}
also minimizes $F(v,z,\pi,u)$ over $u\in U$, and moreover,
\begin{equation}
\pi^*(v,z)=argmax_{\pi\in\Pi}F(v,z,\pi,\beta^*(v,z,\pi)).
\end{equation}

\emph{Step 2.} We are left to prove the inequality (\ref{key}). We omit the variables $(v,z)$ in $\pi^*(v,z)$ and $u^*(v,z)$, and write them as $\pi^*$ and $u^*$ in this step. For any $u\in U$ and $\lambda\in(0,1)$ let
$$u_1:=\lambda u+(1-\lambda) u^*.$$
Set $\pi_1:=\alpha^*(v,z,u_1)$ and recall from (\ref{optimal-3}) that $\pi^*=\alpha^*(v,z,u^*)$. Then, it follows from (\ref{optimal-2}) that
\begin{align*}
F(v,z,\pi^*,u^*)&\leq F(v,z,\pi_1,u_1)\\
&=\lambda F(v,z,\pi_1,u)+(1-\lambda)F(v,z,\pi_1,u^*)\\
&\leq \lambda F(v,z,\pi_1,u)+(1-\lambda) F(v,z,\pi^*,u^*).
\end{align*}
where we used
$F(v,z,\pi,u)\leq F(v,z,\alpha^*(v,z,u),u)$ in the last inequality. Thus, $$F(v,z,\pi^*,u^*)\leq F(v,z,\pi_1,u)=F(v,z,\alpha^*(v,z,u_1),u)$$ for any $u\in U$. Sending $\lambda\rightarrow 0$ and using the continuity of $\alpha^*(v,z,u)$ in $u$, we have $\alpha^*(v,z,u_1)\rightarrow \alpha^*(v,z,u^*)=\pi^*$. Then, the inequality (\ref{key}) follows by the continuity of $F(v,z,\pi,u)$ in $\pi$.

\emph{Step 3.}
Using the homothetic form \eqref{power-general} and applying It\^o's formula to $U(X_s^\pi,s)$,  we get
\begin{equation*}
\begin{aligned}
&dU(X_s^\pi,s)\\
=&\ U(X_s^\pi,s)\big[f_{s}+\frac{1}{2}Trace\left( \kappa \kappa ^{T}\nabla ^{2}f\right) +\eta
(V_s)^{T}\nabla f+F(V_s,\kappa ^{T}\nabla f,\pi_s,u_s)\big]ds\\
&+U(X_s^\pi,s)(\delta\pi_s^T+\nabla f^T\kappa)dW_s^u.
\end{aligned}
\end{equation*}
For any $s\geq t\geq 0$, from \eqref{f-eqn}, we further get
\begin{equation}\label{111901}
\begin{aligned}
&E_{\mathbb{P}^u}[U(X_s^\pi,s)|\mathcal{F}_t,X_t^\pi=x]-U(x,t)\\
=&\ J(x,t;s,\pi,u)-U(x,t)\\
=&\ E_{\mathbb{P}^u}\big[\int_t^sU(X_r^\pi,r)\big(F(V_r,\kappa ^{T}\nabla f,\pi_r,u_r)-G(V_r,\kappa ^{T}\nabla f)\big)dr|\mathcal{F}_t,X_t^\pi=x\big].
\end{aligned}
\end{equation}
We set
\begin{equation}\label{112301000}
\begin{aligned}
&\pi^*_t=\pi^*(V_t,\kappa\nabla f(V_t,t)),\ u^*_t=u^*(V_t,\kappa\nabla f(V_t,t)),\\
&\alpha^*(t,u_t)=\alpha^*(V_t,\kappa\nabla f(V_t,t),u_t),\ \beta^*(t,\pi_t)=\beta^*(V_t,\kappa\nabla f(V_t,t),\pi_t),
\end{aligned}\end{equation}
with the mappings $(\pi^*,u^*,\alpha^*,\beta^{*})$  given in (\ref{optimal-3}), \eqref{optimal-2}, \eqref{optimal-1} and \eqref{optimal-4}, respectively.
Then, it is easy to check that $U(x,t)$  satisfies the martingale conditions (\ref{beta_star})-\eqref{martingale_maxmin} and \eqref{alpha_star}-(\ref{martingale_minmax}), which implies that $U(x,t)=\frac{x^{\delta }}{\delta }e^{f(V_{t},t)}$ is a power robust forward performance process, with the optimal control pair $(\pi^*,u^*)$ and the optimal strategy pair $(\alpha^*,\beta^*)$.
\end{proof}

\begin{remark}
It is worth to point out that the strategies $\alpha^*$ and $\beta^*$ we constructed in the above proof
are also called ``counterstrategies"; the reader can refer to Chapter 10, Section 1 in \cite{KS} for more details.

Since, by our construction, $\pi_t^*=\alpha^*(t,u_t^*)$ and
$u_t^*=\beta^*(t,\pi_t^*)$, it follows that $(\pi^*,u^*)$ is
actually a saddle point for the associated game. However, compared
to the classical saddle point argument such as Sion's Minimax
Theorem (see, for example, \cite{CL, YLZ}), our formulae are more
explicit and is constructed via their corresponding
counterstrategies.
\end{remark}

{Note that the semi-linear PDE \eqref{f-eqn} is a new class of
Hamilton-Jacobi-Bellman-Isaacs equations, which is ill-posed for the
equation is posed forward in time. Due to this ``wrong" time
direction, one does not expect solutions to exist for all initial
conditions or to depend continuously on them, making the problem
ill-posed.} A similar difficulty also appears in \cite{ASS},
\cite{NT2017}, \cite{NZ} and \cite{SSZ} for the construction of
forward processes without model ambiguity, where the Widder's
theorem is employed. Nevertheless, the form of PDE \eqref{f-eqn}
motivates us how to construct the optimal investment proportion,
worst-case scenario parameter and the related optimal strategies for
different situations, which will be used in the following Theorem
\ref{Theorem2_ForwardUtility}. In order to  give the specific form
of the process $f\left(V_{t},t\right) $, we bypass PDE \eqref{f-eqn}
by directly using  the Markovian solution of an ergodic BSDE whose
driver has the form \eqref{driver-formal}. This approach was first
introduced in \cite{LZ} to study the forward performance process in
the absence of model uncertainty. We first give the existence and
uniqueness of the Markovian solution of the associated ergodic BSDE.
{For this, we further strengthen Assumption 1 by requiring
the constant $C_\eta$ given in
\eqref{dissipative} satisfies
\begin{equation}\label{C_eta}
C_{\eta}\geq \frac{3\delta C_{\theta}}{1-\delta}\left[(K_{\theta}+K_u)\vee 1\right],
\end{equation}
where $\delta\in(0,1)$ is the risk aversion degree, $C_{\theta}$ and $K_{\theta}$ are the Lipschitz constant and bound of the market price of risk vector $\theta(v)$ respectively, and $K_u$ is the bound of the scenario parameter $u$ with $K_u=\max_{u\in U}|u|$.}

\begin{lemma}\label{le112301}
Assume the function $G$ has the form \eqref{driver-formal}. Then, the ergodic BSDE
\begin{equation}
dY_{t}=(-G(V_{t},Z_{t})+\lambda )dt+Z_{t}^{T}dW_{t},  \label{EQBSDE1}
\end{equation}%
admits a unique Markovian solution $(Y_{t},Z_{t},\lambda ),$ $t\geq 0,$ i.e.,
there exist a unique constant $\lambda$ and functions $y:%
\mathbb{R}^{d}\rightarrow \mathbb{R}$, $z:\mathbb{R}^{d}\rightarrow
\mathbb{R}^{d}$ such that $ Y_{t}=y\left(
V_{t}\right) ,\ Z_{t} =z\left( V_{t}\right)$. Here, we say a Markovian solution is unique in the following sense: the function $y(\cdot )$ is
unique up to a constant and has at most linear growth, and $z(\cdot )$ is
bounded.
%with $|z(\cdot )|\leq \frac{C_{v}}{C_{\eta }-C_{v}}$, where $C_{\eta
%}$ and $C_{v}$ are as in (\ref{dissipative}) and (\ref{driver0}),
%respectively.
\end{lemma}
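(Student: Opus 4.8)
The plan is to regard \eqref{EQBSDE1} as an ergodic BSDE with at most quadratic growth in $z$ --- by \eqref{driver-formal}--\eqref{112001} the driver $G$ has a quadratic term in $z$ and inner optimizers that grow linearly in $|z|$ --- and to solve it through the infinite-horizon discounting scheme of \cite{HU2}, in the quadratic-growth form already used for homothetic forward performance processes in \cite{LZ}. Concretely, I would introduce for each $\alpha>0$ the discounted infinite-horizon equation
\begin{equation*}
dY^\alpha_t=\big(-G(V_t,Z^\alpha_t)+\alpha Y^\alpha_t\big)\,dt+(Z^\alpha_t)^T\,dW_t,
\end{equation*}
whose Markovian solution $Y^\alpha_t=y^\alpha(V_t)$, $Z^\alpha_t=z^\alpha(V_t)$ (with $z^\alpha$ bounded) is obtained by standard arguments; then establish estimates on $(y^\alpha,z^\alpha)$ that are \emph{uniform in $\alpha$}; and finally let $\alpha\downarrow 0$, setting $\lambda:=\lim_{\alpha\downarrow0}\alpha y^\alpha(v_0)$ and $y(\cdot):=\lim_{\alpha\downarrow0}\big(y^\alpha(\cdot)-y^\alpha(v_0)\big)$ along a subsequence, for a fixed reference point $v_0\in\mathbb{R}^d$.

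The first step is to record the analytic properties of $G$. By \eqref{optimal-1} the inner maximizer is $\alpha^*(v,z,u)=\mathrm{Proj}_\Pi\big(\tfrac{\theta(v)+z+u}{1-\delta}\big)$, and since $0\in\Pi$ the projection is nonexpansive and fixes the origin, so $|\alpha^*(v,z,u)|\le\tfrac{1}{1-\delta}\big(K_\theta+K_u+|z|\big)$. Combined with the boundedness and Lipschitz continuity of $\theta$ and the compactness of $U$, an envelope-type argument then shows that $z\mapsto G(v,z)$ is locally Lipschitz with local constant linear in $|z|$ (so $|G(v,z)|\le C(1+|z|^2)$), and that $v\mapsto G(v,z)$ is globally Lipschitz with constant at most $\tfrac{\delta C_\theta}{1-\delta}\big(K_\theta+K_u+|z|\big)$. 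To deal with the quadratic growth I would first truncate $G$ to a family of Lipschitz drivers $G_n$ agreeing with $G$ on $\{|z|\le n\}$, solve the corresponding discounted equations to obtain Markovian solutions $(y^\alpha_n,z^\alpha_n)$, and remove the truncation once a uniform bound on $Z$ is available.

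The heart of the argument --- and the step I expect to be the main obstacle --- is the a priori bound $\sup_{v,\alpha,n}\|z^\alpha_n\|_\infty\le M$ together with the uniform Lipschitz estimate $|y^\alpha_n(v)-y^\alpha_n(\bar v)|\le C|v-\bar v|$, with $M$ and $C$ independent of $\alpha$ and $n$. These come from a coupling/gradient estimate for the Markovian solutions in which the dissipativity \eqref{dissipative} of the factor drift produces a contraction at rate $C_\eta$, and this rate must be made to dominate the effective Lipschitz constant of $G$ in $v$, which itself involves $\|z\|_\infty$; the reinforced assumption \eqref{C_eta} is precisely what closes this bootstrap between the bound on $Z$ and the $v$-regularity of $y$. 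Once $M$ is fixed, for $n>M$ the truncation is inactive, so $(y^\alpha_n,z^\alpha_n)$ solves the genuine discounted equation, and one also obtains $\sup_{v,\alpha}|\alpha y^\alpha(v)|<\infty$ because $G$ is bounded on $\mathbb{R}^d\times\{|z|\le M\}$ by the boundedness of $\theta$. With these uniform estimates, Arzel\`{a}--Ascoli and standard BSDE stability yield, along a subsequence $\alpha_k\downarrow 0$, convergence of $\alpha_k y^{\alpha_k}(v_0)$ to a constant $\lambda$, of $y^{\alpha_k}-y^{\alpha_k}(v_0)$ to a function $y$ of at most linear growth, and of $z^{\alpha_k}$ to a bounded $z$, with $(y,z,\lambda)$ a Markovian solution of \eqref{EQBSDE1}.

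For uniqueness, suppose $(y_1,z_1,\lambda_1)$ and $(y_2,z_2,\lambda_2)$ are two Markovian solutions with $y_i$ of linear growth and $z_i$ bounded. Since the $z_i$ are bounded and $G$ is locally Lipschitz in $z$, I would linearize, writing $G(V_t,z_1(V_t))-G(V_t,z_2(V_t))=b_t^T(z_1-z_2)(V_t)$ with $b$ a bounded process --- the same device underlying the comparison result for locally Lipschitz ergodic BSDEs used later in the paper (cf. also \cite{Hu11}) --- and perform a Girsanov change of measure removing $b$, legitimate on each $[0,T]$ since $b$ is bounded. Under the new measure $\widetilde{\mathbb{P}}$ the factor $V$ is still a non-degenerate diffusion whose drift is dissipative outside a bounded set, hence has uniformly bounded moments and is ergodic with an invariant measure equivalent to that of $V$ under $\mathbb{P}$; taking $\widetilde{\mathbb{E}}$ of the two equations over $[0,T]$, dividing by $T$, and letting $T\to\infty$, the sublinearity of $y_i$ forces $\lambda_1=\lambda_2$. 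With equal $\lambda$'s, the bound $\widetilde{\mathbb{E}}\int_0^T|z_1-z_2|^2(V_s)\,ds\le C$ uniform in $T$, combined with the ergodic theorem applied with $V$ started from its invariant distribution, forces $z_1=z_2$ with respect to the invariant measure, and then $y_1-y_2$ is a bounded increment of a constant martingale under $\widetilde{\mathbb{P}}$, hence constant. This gives uniqueness in the stated sense.
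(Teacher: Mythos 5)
Your proposal is correct and takes essentially the same route as the paper: the paper's proof consists exactly of verifying the structural estimates on $G$ (Lipschitz in $v$ with constant $C(1+|z|)$ kept below $C_\eta$ thanks to \eqref{C_eta}, local Lipschitz continuity in $z$, boundedness of $G(\cdot,0)$) via the nonexpansiveness of the projection with $0\in\Pi$, and then invoking Proposition 3.1 and Appendix A of \cite{LZ} for the vanishing-discount construction and uniqueness. The only difference is that you unpack that citation yourself — the discounted equations, the uniform $Z$-bound and $v$-Lipschitz estimate closed by the dissipativity condition, the limit $\alpha\downarrow 0$, and the linearization/Girsanov/recurrence uniqueness argument — which is precisely the content of the results the paper cites.
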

\begin{proof}
Using the Lipschitz continuity of the projection operator, it follows from \eqref{112001} and \eqref{optimal-1}
that
\begin{equation}\label{112302}
\begin{aligned}
&|F(v,z,\alpha^*(v,z,u),u)-F(\bar{v},z,\alpha^*(\bar{v},z,u),u)|\leq C(1+|z|)\cdot|v-\bar{v}|,\\
&|F(v,z,\alpha^*(v,z,u),u)-F(v,\bar{z},\alpha^*(v,\bar{z},u),u)|\leq C(1+|z|+|\bar{z}|)\cdot|z-\bar{z}|,\\
&|F(v,0,\alpha^*(v,0,u),u)|\leq C.\\
\end{aligned}
\end{equation}
{Indeed, to show the first inequality, we note from \eqref{112001} that
\begin{align*}
     &|F(v,z,\alpha^*(v,z,u),u)-F(\bar{v},z,\alpha^*(\bar{v},z,u),u)|\\
\leq &\ \frac{\delta(1-\delta)}{2}\left|\alpha^*(v,z,u)+\alpha^*(\bar{v},z,u)\right|\times\left|\alpha^*(v,z,u)-\alpha^*(\bar{v},z,u)\right| \\
&+\delta|\theta(v)+z+u|\times\left|\alpha^*(v,z,u)-\alpha^*(\bar{v},z,u)\right|+\delta|\alpha^*(\bar{v},z,u)|\times|\theta(v)-\theta(\bar{v})|.
\end{align*}
Since the projection operator $Proj_{\Pi}(\cdot)$ is Lipschitz continuous with its Lipschitz constant $1$ and $0\in {\Pi}$, from \eqref{optimal-1} we have
$$\left|\alpha^*(v,z,u)-\alpha^*(\bar{v},z,u)\right|\leq \frac{1}{1-\delta}|\theta(v)-\theta(\bar{v})|\leq \frac{C_{\theta}}{1-\delta}|v-\bar{v}|,$$
and
$$|\alpha^*(\bar{v},z,u)|\leq \frac{1}{1-\delta}|\theta(\bar{v})+z+u|\leq \frac{K_{\theta}+|z|+K_u}{1-\delta}.$$
In turn,
\begin{align*}
     &|F(v,z,\alpha^*(v,z,u),u)-F(\bar{v},z,\alpha^*(\bar{v},z,u),u)|\\
\leq &\ 3\frac{\delta}{1-\delta}(K_{\theta}+|z|+K_u)C_{\theta}|v-\bar{v}|\\
\leq&\  \frac{3\delta C_{\theta}}{1-\delta}\left[(K_{\theta}+K_u)\vee 1\right](1+|z|)\cdot|v-\bar{v}|\leq C_{\eta}(1+|z|)\cdot|v-\bar{v}|,
\end{align*}
with $C_{\eta}$ given in (\ref{C_eta}). The other two inequalities in (\ref{112302}) can be proved in a similar way. Furthermore, we note that the constant $C$ in (\ref{112302}) is independent of $u\in U$. Hence, from \eqref{111902}, we further obtain
\begin{equation}\label{120401}
\begin{aligned}
&|G(v,z)-G(\bar{v},z)|\leq C(1+|z|)\cdot|v-\bar{v}|,\\
&|G(v,z)-G(v,\bar{z})|\leq C(1+|z|+|\bar{z}|)\cdot|z-\bar{z}|,\
|G(v,0)|\leq C.
\end{aligned}
\end{equation}
Therefore, from Proposition 3.1 and Appendix A in \cite{LZ} we obtain the desired result.}
\end{proof}

{Roughly speaking, the additional ``large enough"\ requirement of the constant $C_{\eta }$ in (\ref{C_eta}) is to guarantee the forward stochastic factor process $V$ converges fast enough to dominate the dissipative nature of the backward equation for $Y$ reflected by the Lipschitz constant $C$ in the first inequality of (\ref{120401}).  This additional requirement plays an important role in the study of ergodic BSDE \eqref{EQBSDE1}. We refer the reader to Appendix A in \cite{LZ}.}
%The ``large enough" property of the above constant $C_{\eta}$ will
%be refined later on when we relate it to another constant appearing
%in the driver of the underlying BSDE (cf. (\ref{EQBSDE1}) and
%(\ref{IQBSDE1})).
%The dissipative condition (\ref{dissipative}) is introduced to
%ensure the existence  of a unique invariant measure of the
%stochastic factor process $V$, i.e., $V$ is ergodic.

%\subsection{Construction via ergodic BSDE}

%We firstly introduce the underlying ergodic BSDE and provide the main
%existence and uniqueness result for Markovian solutions. For the reader's
%convenience, we present the proof in the Appendix.

%\begin{proposition}
%\label{Theorem1_EQBSDE}

%\end{proposition}

We next present the specific form of the process $f(V_t,t)$ by using the solution of
the ergodic BSDE \eqref{EQBSDE1}.

\begin{theorem}
\label{Theorem2_ForwardUtility}

Let $(Y_{t},Z_{t},\lambda )=(y(V_{t}),z(V_{t}),\lambda ),t\geq 0,$ be the
unique Markovian solution of (\ref{EQBSDE1}). Then,
 the process $U(x,t),$ $\left( x,t\right) \in \mathbb{R}_{+}\times \left[
0,\infty \right) ,$ given by
\begin{equation}
U(x,t)=\frac{x^{\delta }}{\delta }e^{y(V_{t})-\lambda t}\text{ ,}
\label{PowerForwardUtility}
\end{equation}%
is a power robust forward performance process.
%with volatility%
%\begin{equation}
%a\left( x,t\right) =\frac{x^{\delta }}{\delta }e^{y(V_{t})-\lambda
%t}z(V_{t}).  \label{power-volatility}
%\end{equation}
Moreover, the optimal portfolio weight $\pi^*$, the worst-case scenario parameter $u^*$ and
the optimal  strategies $\alpha^*,\beta^*$ responding to each scenario parameter $u$ and portfolio weight $\pi$ are given as follows
\begin{equation}\label{112301}
\begin{aligned}
&\pi^*_t=\pi^*(V_t,z(V_t)),\ u^*_t=u^*(V_t,z(V_t)),\\
&\alpha^*(t,u_t)=\alpha^*(V_t,z(V_t),u_t),\ \beta^*(t,\pi_t)=\beta^*(V_t,z(V_t),\pi_t),
\end{aligned}\end{equation}
where the mappings $(\pi^*,u^*,\alpha^*,\beta^{*})$ are given in \eqref{optimal-3}, \eqref{optimal-2}, \eqref{optimal-1} and \eqref{optimal-4}, respectively.

In addition, the associated wealth process $X^*$ under the worst-case scenario is given by
$$X^*_t=X_0\mathcal{E}\Big(\int_0^t(\pi^*_s)^T\cdot[(\theta(V_s)+u_s^*)ds
+dW_s^{u^*}]\Big).$$

%ii) The optimal portfolio weights $\pi _{t}^{\ast }$ and the associated
%wealth process $X_{t}^{\ast }$ (cf. (\ref{policy-normalized}) and (\ref%
%{wealth-process})) are given, respectively, by
%\begin{equation}
%\pi _{t}^{\ast }=Proj_{\Pi }\left( \frac{z(V_{t})+\theta (V_{t})}{1-\delta }%
%\right) \text{ \ and }X_{t}^{\ast }=X_{0}\mathcal{E}\left( \int_{0}^{\cdot
%}(\pi _{s}^{\ast })^{T}(\theta (V_{s})ds+dW_{s})\right) _{t}.
%\label{PowerOptimalStrategy}
%\end{equation}
\end{theorem}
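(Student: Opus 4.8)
My plan is to run the argument of Proposition \ref{le-PDE} with the (ill-posed) classical PDE solution $f(v,t)$ replaced by the explicit expression $f(v,t)=y(v)-\lambda t$ read off from the Markovian solution of the ergodic BSDE \eqref{EQBSDE1}. The key point I would stress is that one never needs $\nabla y$: equation \eqref{EQBSDE1} already delivers the pair $(Y_t,Z_t)=(y(V_t),z(V_t))$ with $dY_t=(-G(V_t,Z_t)+\lambda)dt+Z_t^TdW_t$, so the candidate $M_s:=U(X_s^\pi,s)=\frac{(X_s^\pi)^\delta}{\delta}e^{Y_s-\lambda s}$ can be differentiated by the It\^o product rule — using the wealth equation \eqref{wealth-process} for the first factor and \eqref{EQBSDE1} for the second — rather than by a chain rule in the factor variable. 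This is exactly what lets the method circumvent the ill-posedness of \eqref{f-eqn}.

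\textbf{Step 1.} First I would compute $dM_s$. Applying It\^o to the product and then substituting $dW_s=dW_s^u+u_sds$ from \eqref{new-B.M.}, the definition \eqref{112001} of $F$ collapses the drift into $M_s(F(V_s,Z_s,\pi_s,u_s)-G(V_s,Z_s))$, giving under $\mathbb{P}^u$
\begin{equation*}
dM_s=M_s\big(F(V_s,Z_s,\pi_s,u_s)-G(V_s,Z_s)\big)ds+M_s(\delta\pi_s^T+Z_s^T)dW_s^u,
\end{equation*}
which is identical to the identity in Step 3 of the proof of Proposition \ref{le-PDE} with $\kappa^T\nabla f$ replaced by $Z_s=z(V_s)$. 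Hence, once the stochastic integral is a genuine $\mathbb{P}^u$-martingale, I obtain for all $s\geq t$
\begin{equation*}
J(x,t;s,\pi,u)-U(x,t)=E_{\mathbb{P}^u}\Big[\int_t^sM_r\big(F(V_r,Z_r,\pi_r,u_r)-G(V_r,Z_r)\big)dr\,\Big|\,\mathcal{F}_t,X_t^\pi=x\Big].
\end{equation*}

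\textbf{Step 2.} Next I would verify the four martingale conditions \eqref{beta_star}, \eqref{martingale_maxmin}, \eqref{alpha_star}, \eqref{martingale_minmax} for the feedback controls and strategies of \eqref{112301}. With the representation of Step 1 and positivity of $M$, these follow verbatim from the pointwise-in-$(v,z)$ optimality of $\pi^*,u^*,\alpha^*,\beta^*$ established in Steps 1--2 of Proposition \ref{le-PDE}: along $(\pi,\beta^*(\cdot,\pi))$ the integrand satisfies $F-G\le 0$ since $F(V_r,Z_r,\pi_r,\beta^*_r)=\inf_{u}F(V_r,Z_r,\pi_r,u)\le G(V_r,Z_r)$, and $\beta^*(\cdot,\pi)$ attains the $\essinf_\beta$ by pointwise minimality, which gives \eqref{beta_star}; along $(\pi^*,\beta^*(\cdot,\pi^*))=(\alpha^*(\cdot,u^*),u^*)=(\pi^*,u^*)$ one has $F-G=0$ by \eqref{111801}, giving \eqref{martingale_maxmin} and \eqref{martingale_minmax}; and along $(\alpha^*(\cdot,u),u)$ one has $F-G=\sup_{\pi\in\Pi}F(V_r,Z_r,\pi,u_r)-G(V_r,Z_r)\ge 0$, giving \eqref{alpha_star}. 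By the discussion after \eqref{2020083101}, \eqref{beta_star}--\eqref{martingale_maxmin} alone force $\underline U(x,t;s)=U(x,t)$ for every $s\geq t$, so $U(x,t)=\frac{x^\delta}{\delta}e^{y(V_t)-\lambda t}$ satisfies i)--iii) of Definition \ref{def} and is a power robust forward performance process; since moreover $\pi^*=\alpha^*(\cdot,u^*)$ and $u^*=\beta^*(\cdot,\pi^*)$, the pair $(\pi^*,u^*)$ is a saddle point. For the wealth process I would simply solve the linear SDE $dX_s^*=X_s^*(\pi_s^*)^T(\theta(V_s)ds+dW_s)=X_s^*(\pi_s^*)^T\big((\theta(V_s)+u_s^*)ds+dW_s^{u^*}\big)$ under $\mathbb{P}^{u^*}$, obtaining the stated stochastic exponential.

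\textbf{Main obstacle.} The part requiring real work is justifying, in Step 1, that $\int_t^\cdot M_r(\delta\pi_r^T+Z_r^T)dW_r^u$ is a true $\mathbb{P}^u$-martingale (so the conditional expectation of the stochastic integral vanishes). I would handle this exactly as in \cite{LZ}: $z(\cdot)$ is bounded by Lemma \ref{le112301}, and the feedback maps $\pi^*,u^*,\alpha^*,\beta^*$, being $\Pi$-projections of bounded quantities, are bounded and hence lie in $\mathcal{L}_{BMO}^2$; a general admissible $\pi$ is in $\mathcal{L}_{BMO}^2[0,t]$; and $y$ has at most linear growth while $V$ has sub-Gaussian marginals thanks to the dissipativity \eqref{dissipative}, so $e^{Y_s}$ has moments of all orders. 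Combining the reverse H\"older inequality for BMO martingales — stable under the bounded change of measure $\mathbb{P}\to\mathbb{P}^u$ since $u$ is essentially bounded — with these bounds yields $M_s\in L^{1+\varepsilon}(\mathbb{P}^u)$ and the required uniform integrability. This is precisely where the strengthened dissipativity \eqref{C_eta} and the compactness of the scenario set $U$ enter.
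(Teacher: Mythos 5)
Your proposal is correct in substance and follows the same skeleton as the paper (It\^o on the candidate process, reduction to the sign of $F-G$, the pointwise optimizers and inequality \eqref{key} from Proposition \ref{le-PDE}, and BMO plus boundedness of $z(\cdot)$ for the integrability step), but it diverges at the key technical point. The paper does not use your additive drift representation; it writes the multiplicative factorization
\begin{equation*}
U(X_s^\pi,s)=U(X_t^\pi,t)\cdot\mathcal{E}\Big(\int_t^s(\delta\pi_r^T+Z_r^T)dW_r^u\Big)\cdot\exp\Big\{\int_t^s\big(F(V_r,Z_r,\pi_r,u_r)-G(V_r,Z_r)\big)dr\Big\},
\end{equation*}
so that the only object requiring a martingale property is the stochastic exponential, which is uniformly integrable because $\pi$ is BMO and $z(\cdot)$ is bounded; the sign of the exponent then yields \eqref{beta_star}--\eqref{martingale_maxmin} and \eqref{alpha_star}--\eqref{martingale_minmax} by simple monotonicity, with no need for $U(X_s^\pi,s)$ or the drift to be integrable. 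Your additive route (the Proposition \ref{le-PDE} computation with $\kappa^T\nabla f$ replaced by $Z$) is workable, but it is heavier than your sketch suggests: to write $J-U=E_{\mathbb{P}^u}[\int_t^s M_r(F-G)dr\,|\,\mathcal{F}_t]$ for a general admissible $\pi$ you need, besides true martingality of $\int M_r(\delta\pi_r+Z_r)^TdW_r^u$ (class (D), i.e. moments of $\sup_r M_r$ via reverse H\"older/John--Nirenberg), also integrability of the drift term $M_r(F_r-G_r)$, which grows like $M_r|\pi_r|^2$ and is not covered by your $L^{1+\varepsilon}$ bound on $M_s$ alone; alternatively, for the inequality directions one can exploit that a local martingale bounded below (conditionally by $-M_t$) is a supermartingale, but the equality cases \eqref{martingale_maxmin} and \eqref{martingale_minmax} then still force you back to uniform integrability, which is exactly what the paper's multiplicative decomposition delivers for free. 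So: correct strategy and correct conclusions, including the saddle-point remark and the trivial derivation of $X^*$, but if you keep the additive representation you should spell out the drift-integrability and class-(D) estimates, or simply switch to the paper's factorized form of $U(X_s^\pi,s)$, which makes the verification one line of monotonicity.
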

{It is worth to point out that the constant $\lambda$, as a part of the solution of ergodic BSDE \eqref{EQBSDE1}, can be regarded as
the optimal
long-term growth rate of the corresponding expected utility of wealth with model
uncertainty (see Remark \ref{20200921}).}

{We now give the proof of Theorem \ref{Theorem2_ForwardUtility}.}

\begin{proof}
%It is immediate that the process $U(x,t)$ is $\mathbb{F}$-progressively
%measurable, strictly increasing and strictly concave in $x,$ and homogeneous
%of degree $\delta .$ To show that it also satisfies requirements \textit{%
%(ii) }and \textit{(iii)} of Definition 1, we will establish that, for $0\leq
%t\leq s$, if $\pi \in \mathcal{A}$,
%\begin{equation*}
%E_{\mathbb{P}}\left( \frac{(X_{s}^{\pi })^{\delta }}{\delta }%
%e^{Y_{s}-\lambda s}|\mathcal{F}_{t}\right) \leq \frac{(X_{t}^{\pi })^{\delta
%}}{\delta }e^{Y_{t}-\lambda t},
%\end{equation*}%
%while for $\pi ^{\ast }$ given by (\ref{PowerOptimalStrategy}),
%\begin{equation*}
%E_{\mathbb{P}}\left( \frac{(X_{s}^{\pi ^{\ast }})^{\delta }}{\delta }%
%e^{Y_{s}-\lambda s}|\mathcal{F}_{t}\right) =\frac{(X_{t}^{\pi ^{\ast
%}})^{\delta }}{\delta }e^{Y_{t}-\lambda t}.
%\end{equation*}%
It is easy to check that the process given by \eqref{PowerForwardUtility} is $\mathbb{F}$-progressively measurable, strictly increasing and strictly concave in $x$. We only need to show that the martingale conditions (\ref{beta_star})-\eqref{martingale_maxmin} and \eqref{alpha_star}-(\ref{martingale_minmax})
hold. For this, from \eqref{wealth-process}, \eqref{new-B.M.} and \eqref{EQBSDE1} we get,
for all $s\geq t\geq 0$, $(\pi,u)\in\tilde{\Pi}\times\mathcal{U}$,
\begin{equation*}
\begin{aligned}
X_s^\pi&=X_t^\pi\cdot\exp\Big\{\int_t^s\pi_r^T(\theta(V_r)+u_r)-\frac{1}{2}|\pi_r|^2dr
+\int_t^s\pi_r^TdW_r^u\Big\},\\
(Y_s-\lambda s)&=(Y_t-\lambda t)-\int_t^sG(V_r,Z_r)-Z_r^Tu_rdr+\int_t^sZ_r^TdW_r^u.
\end{aligned}
\end{equation*}
Thus, we have
\begin{equation*}
\begin{aligned}
&U(X_s^\pi,s)=\frac{(X_s^\pi)^\delta}{\delta}e^{Y_s-\lambda s}\\
=&\ U(X_t^\pi,t)\cdot\mathcal{E}\Big(\int_t^s(\delta\pi^T_r+Z_r^T)dW_r^u\Big)
\cdot\exp\Big\{\int_t^sF(V_r,Z_r,\pi_r,u_r)-G(V_r,Z_r)dr\Big\}.
\end{aligned}
\end{equation*}
%To this end, the wealth equation (\ref{wealth-process})\ and It\^{o}'s
%formula yield%
%\begin{equation*}
%(X_{s}^{\pi })^{\delta }=(X_{t}^{\pi })^{\delta }\exp \left(
%\int_{t}^{s}\delta \left( \pi _{u}^{T}\theta (V_{u})-\frac{1}{2}|\pi
%_{u}|^{2}\right) du+\int_{t}^{s}\delta \pi _{u}^{T}dW_{u}\right) .
%\end{equation*}%
%On the other hand, from the ergodic BSDE (\ref{EQBSDE1}), we have
%\begin{equation}
%Y_{s}-\lambda s=Y_{t}-\lambda
%t-\int_{t}^{s}F(V_{u},z(V_{u}))du+\int_{t}^{s}z(V_{u})^{T}dW_{u}.
%\label{Y-solution}
%\end{equation}%
%Combining the above yields
%\begin{equation*}
%(X_{s}^{\pi })^{\delta }e^{Y_{s}-\lambda s}=\ (X_{t}^{\pi })^{\delta
%}e^{Y_{t}-\lambda t}\exp \left( \int_{t}^{s}\left( \delta \left( \pi
%_{u}^{T}\theta (V_{u})-\frac{1}{2}|\pi _{u}|^{2}\right)
%-F(V_{u},z(V_{u}))\right) du\right.
%\end{equation*}%
%\begin{equation*}
%+\left. \int_{t}^{s}\left( \delta \pi _{u}^{T}+z(V_{u})^{T}\right)
%dW_{u}\right) .
%\end{equation*}%
Therefore,%
\begin{equation*}
\begin{aligned}
&E_{\mathbb{P}^u}\left[ U(X_s^\pi,s)|\mathcal{F}%
_{t},X_t=x\right]-U(x,t)\\
=&\ J(x,t;s,\pi,u)-U(x,t)\\
=&\ U(x,t)\cdot E_{\mathbb{P}^u}\left(\frac{M_s}{M_t}\cdot
\exp\Big\{\int_t^s\Big(F(V_r,Z_r,\pi_r,u_r)-G(V_r,Z_r)\Big)dr\Big\} \Big\vert \mathcal{F}_{t}\right)-U(x,t),
\end{aligned}
\end{equation*}%
where, for $ t\in [0,s]$, $
M_t:=\mathcal{E}\Big(\int_0^t(\delta\pi^T_r+Z_r^T)dW_r^u\Big),
$
is a uniformly integrable exponential martingale (since $\pi$ satisfies the BMO-condition and $z(\cdot)$ is bounded). Similar to the argument in the proof of Lemma \ref{le-PDE},
we get the the martingale conditions (\ref{beta_star})-\eqref{martingale_maxmin} and \eqref{alpha_star}-(\ref{martingale_minmax}) from the above equality.
\end{proof}
\begin{remark}
The probability measure $\mathbb{P}^{u^*}$ associated with $u^*$ given in Theorem \ref{Theorem2_ForwardUtility} has the following form
\begin{equation}\nonumber
\frac{d\mathbb{P}^{u*}}{d\mathbb{P}}\Big|_{\mathcal{F}_t}=\mathcal{E}\left(\int_0^t(u_s^*)^TdW_s\right).
\end{equation}
Thus, as a byproduct, we obtain a specific formula for the least favorable martingale measure as considered in \cite{FG}.
\end{remark}

\begin{remark}
Similar to  Proposition 3.4 in \cite{LZ}, it is easy to check that
$$f(v,t)=y(v)-\lambda t$$
is a classical solution of the semilinear PDE \eqref{f-eqn} with the initial condition $f(v,0)=y(v)$, where $(y(V_t),z(V_t),\lambda)$ is the solution of ergodic BSDE \eqref{EQBSDE1}.
%From this viewpoint, the theory of ergodic BSDE provides a useful tool to study a type of  semilinear PDE.
\end{remark}

Next, we build a connection between power robust forward performance
processes and the solutions of a family of infinite horizon BSDE.
For $\rho
>0$, we consider the following infinite horizon BSDE
\begin{equation}
dY_{t}^{\rho }=\left( -G(V_{t},Z_{t}^{\rho })+\rho Y_{t}^{\rho }\right)
dt+\left( Z_{t}^{\rho }\right) ^{T}dW_{t},  \label{IQBSDE1}
\end{equation}%
where the driver $G(\cdot ,\cdot )$ is given in (\ref{EQBSDE1}). Then, this BSDE
admits a unique Markovian solution  $\left( Y_{t}^{\rho },Z_{t}^{\rho }\right)=(y^{\rho}(V_t),z^\rho(V_t)) $. Moreover, there exists a subsequence,  denoted by $\rho_n$, such that
$$y(v)=\lim_{\rho_n\downarrow 0}y^{\rho_n}(v),\ z(v)=\lim_{\rho_n\downarrow 0}z^{\rho_n}(v),\ \lambda=\lim_{\rho_n\downarrow 0}{\rho_n} y^{\rho_n}(v_0),$$
where $(y(V_t),z(V_t),\lambda)$ is the solution of ergodic BSDE \eqref{EQBSDE1} and $v_0\in\mathbb{R}^d$ is an arbitrary given reference point.
These results were first obtained in \cite{HU2} with Lipschitz driver and then extended to the quadratic driver in \cite{LZ}.

Similar to the proof of Theorem \ref{Theorem2_ForwardUtility}, we can examine that the process $U^{\rho}(x,t)$ given by \eqref{PowerForwardUtility2} is still a power robust forward performance process and it converges in an appropriate discounted manner to the process $U(x,t)$ as $\rho$ tends to 0.
\begin{corollary}
 The process $U^{\rho }\left( x,t\right) ,$ $\left( x,t\right) \in \mathbb{%
R}_{+}\times \left[ 0,\infty \right) ,$ given by%
\begin{equation}
U^{\rho }(x,t)=\frac{x^{\delta }}{\delta }e^{y^{\rho }\left( V_{t}\right)
-\int_{0}^{t}\rho y^{\rho }\left( V_{s}\right) ds}
\label{PowerForwardUtility2}
\end{equation}%
is a power robust forward performance process and the optimal portfolio strategy $\alpha _{t}^{\ast ,\rho }$  for each scenario parameter $u$ is given by
\begin{equation*}
\alpha _{t}^{\ast ,\rho }(u)=Proj_{\Pi }\left( \frac{\theta
(V_{t})+z^{\rho }(V_{t})+u_t}{1-\delta }\right) .
\end{equation*}
Furthermore, there exists a
subsequence $\rho _{n}\downarrow 0$  such that, for $%
\left( x,t\right) \in \mathbb{R}_{+}\times \left[ 0,\infty \right) ,$
\begin{equation}
\lim_{\rho _{n}\downarrow 0}\frac{U^{\rho _{n}}(x,t)e^{-y^{\rho _{n}}\left(
v_{0}\right) }}{U(x,t)}=1.  \label{relationU}
\end{equation}%
and the associated optimal portfolio strategies $\alpha
^{\ast ,\rho _{n}}$ and $\alpha ^{\ast }$ satisfy
\begin{equation}
\lim_{\rho _{n}\downarrow 0}E_{\mathbb{P}}\int_{0}^{t}\left \vert \alpha
^{\ast ,\rho _{n}}(s,u_s)-\alpha ^{\ast }(s,u_s)\right \vert ^{2}ds=0,\ \text{for}\ t\geq 0,\ u\in\mathcal{U}.
\label{portfolios}
\end{equation}
\end{corollary}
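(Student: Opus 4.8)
The plan is to prove the three assertions of the Corollary in turn: (a) for each fixed $\rho>0$, $U^{\rho}$ is a power robust forward performance process with the stated optimal strategy; (b) the discounted convergence \eqref{relationU}; (c) the convergence of the optimal strategies \eqref{portfolios}.

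For (a) I would reproduce the proof of Theorem \ref{Theorem2_ForwardUtility} almost verbatim. Properties i)--ii) of Definition \ref{def} are immediate for \eqref{PowerForwardUtility2} (progressive measurability, strict monotonicity and concavity in $x$), so only the self-generating property iii) needs to be checked, with $\tau=0$. The one new observation is that the process $N^{\rho}_t:=Y^{\rho}_t-\int_0^t\rho Y^{\rho}_s\,ds=y^{\rho}(V_t)-\int_0^t\rho y^{\rho}(V_s)\,ds$ satisfies $dN^{\rho}_t=-G(V_t,Z^{\rho}_t)\,dt+(Z^{\rho}_t)^T dW_t$, because the $\rho Y^{\rho}_t\,dt$ term in \eqref{IQBSDE1} is exactly cancelled by the drift of $-\int_0^t\rho Y^{\rho}_s\,ds$; this is the same type of equation that $Y_t-\lambda t$ solves in the ergodic case. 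Writing $U^{\rho}(X^{\pi}_s,s)=\tfrac{(X^{\pi}_s)^{\delta}}{\delta}e^{N^{\rho}_s}$ and using \eqref{wealth-process}, \eqref{new-B.M.} and the equation for $N^{\rho}$, one obtains under $\mathbb{P}^u$ that $U^{\rho}(X^{\pi}_s,s)=U^{\rho}(X^{\pi}_t,t)\,\mathcal{E}\big(\int_t^s(\delta\pi_r^T+(Z^{\rho}_r)^T)dW^u_r\big)\exp\{\int_t^s F(V_r,Z^{\rho}_r,\pi_r,u_r)-G(V_r,Z^{\rho}_r)\,dr\}$, and since $z^{\rho}(\cdot)$ is bounded and $\pi$ satisfies the BMO condition, the stochastic exponential is a uniformly integrable martingale. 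The martingale conditions \eqref{beta_star}--\eqref{martingale_maxmin} and \eqref{alpha_star}--\eqref{martingale_minmax} then follow exactly as in Theorem \ref{Theorem2_ForwardUtility}, with the optimal controls and strategies obtained by replacing $z(\cdot)$ by $z^{\rho}(\cdot)$ in \eqref{112301}; in particular $\alpha^{\ast,\rho}_t(u)=\alpha^{\ast}(V_t,z^{\rho}(V_t),u)=Proj_{\Pi}\big(\tfrac{\theta(V_t)+z^{\rho}(V_t)+u_t}{1-\delta}\big)$ by \eqref{optimal-1}.

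For (b), recall that along the subsequence $\rho_n\downarrow 0$ one has $y^{\rho_n}(v)-y^{\rho_n}(v_0)\to y(v)$, $z^{\rho_n}(v)\to z(v)$, $\rho_n y^{\rho_n}(v_0)\to\lambda$, together with the $\rho$-uniform linear growth of $y^{\rho}(\cdot)-y^{\rho}(v_0)$ coming from the a priori estimates in \cite{HU2,LZ}. Then $\tfrac{U^{\rho_n}(x,t)e^{-y^{\rho_n}(v_0)}}{U(x,t)}=\exp\big\{[y^{\rho_n}(V_t)-y^{\rho_n}(v_0)-y(V_t)]-\int_0^t[\rho_n y^{\rho_n}(V_s)-\lambda]\,ds\big\}$. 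The leading bracket tends to $0$ pointwise. Splitting $\rho_n y^{\rho_n}(V_s)=\rho_n(y^{\rho_n}(V_s)-y^{\rho_n}(v_0))+\rho_n y^{\rho_n}(v_0)$, the first summand is bounded by $\rho_1 C(1+|V_s|)$ and tends to $0$, and the second tends to $\lambda$, so the integrand tends to $0$ $d\mathbb{P}\otimes ds$-a.e.\ and is dominated by $C(1+\sup_{r\le t}|V_r|)+|\lambda|+1$, which is $ds$-integrable on $[0,t]$ for a.e.\ $\omega$ by path continuity of $V$; dominated convergence then gives \eqref{relationU} $\mathbb{P}$-a.s.

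For (c), since $Proj_{\Pi}$ is Lipschitz with constant $1$, $|\alpha^{\ast,\rho_n}(s,u_s)-\alpha^{\ast}(s,u_s)|\le\tfrac{1}{1-\delta}|z^{\rho_n}(V_s)-z(V_s)|$; by $z^{\rho_n}\to z$ pointwise and the $\rho$-uniform boundedness of $z^{\rho}$, the right-hand side tends to $0$ $d\mathbb{P}\otimes ds$-a.e.\ and stays uniformly bounded, so \eqref{portfolios} follows by dominated convergence on $\Omega\times[0,t]$. The only place where something beyond the quoted results is really needed is the interchange of limit and integral in (b): this rests on the $\rho$-uniform linear growth of $y^{\rho}-y^{\rho}(v_0)$ and the $\rho$-uniform boundedness of $z^{\rho}$ that underlie the vanishing-discount limit in \cite{HU2,LZ}. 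Granting those estimates, every remaining step is routine.
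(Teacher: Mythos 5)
Your proposal is correct and follows exactly the route the paper intends: the paper gives no separate proof of this corollary, only the remark that part (a) is "similar to the proof of Theorem \ref{Theorem2_ForwardUtility}" (which you reproduce, correctly noting that $Y^{\rho}_t-\int_0^t\rho Y^{\rho}_s\,ds$ solves the same type of equation as $Y_t-\lambda t$ with $Z$ replaced by $Z^{\rho}$), while (b) and (c) rest on the vanishing-discount convergence $y^{\rho_n}-y^{\rho_n}(v_0)\to y$, $z^{\rho_n}\to z$, $\rho_n y^{\rho_n}(v_0)\to\lambda$ and the uniform estimates quoted from \cite{HU2,LZ}, combined with the Lipschitz continuity of the projection and dominated convergence. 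Your write-up simply fills in these routine details in the same spirit, so it matches the paper's argument.
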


{\begin{remark}\label{re-other
utilities} We have obtained the representation of the power robust
forward performance process in factor-form by combining the zero-sum
stochastic differential game and ergodic BSDE approach. In fact,
this approach can be applied to study other type of homothetic
robust forward performance processes, such as logarithmic and
exponential cases. More specifically, the processes $U^1(x,t)$ and
$U^2(x,t)$ given by
$$U^1(x,t)=\ln x+y^1(V_t)-\lambda^1 t,\ (x,t)\in\mathbb{R}_+\times[0,\infty),$$
$$U^2(x,t)=-e^{-\gamma x+y^2(V_t)-\lambda^2 t},\ (x,t)\in\mathbb{R}\times[0,\infty)$$
are  logarithmic and exponential (with risk aversion parameter $\gamma>0$) robust forward performance
processes, respectively, where
$(Y^i_t,Z^i_t,\lambda^i)=(y^i(V_t),z^i(V_t),\lambda^i)$, $i=1,2,\
t\geq 0$, are the unique Markovian solution of the ergodic BSDE
\eqref{EQBSDE1} with the generator $G=G_i,\ i=1,2$, respectively,
with
$$G_1(v,z)=\inf_{u\in U}\sup_{\pi\in\Pi}\{-\frac 1 2 |\pi|^2+\pi^T\theta(v)+(\pi^T+z^T)u\},$$
$$G_2(v,z)=\sup_{u\in U}\inf_{\pi\in\Pi}\{\frac 1 2 |\gamma \pi-z|^2-r\pi^T(\theta(v)+u)+z^Tu\}.$$
%The optimal portfolio weights under the worst-case scenarios for these two utility functions can be similarly obtained.
\end{remark}}

\section{{Examples}}\label{examples}
We apply Theorem  \ref{Theorem2_ForwardUtility} to analyze two
specific examples. The first example is driven by the Brownian noise
which can be fully hedged. The second example is a single stock
model correlated with a single stochastic factor where only partial
hedging is possible. In both examples, no constraints on
portfolios are imposed and optimal robust investment policies for
the power robust forward performance processes are given in the
feedback form of stochastic factors.

\subsection{Market model I}
We consider the case that the set $\Pi$ is large enough in the sense
that  the mappings $\alpha^*$ in \eqref{optimal-1} has the following
form
\begin{equation*}
\begin{aligned}
\alpha^*(v,z,u)&=Proj_{\Pi}(\frac{\theta(v)+z+u}{1-\delta})=\frac{\theta(v)+z+u}{1-\delta}.
\end{aligned}
\end{equation*}
Then, the mappings $u^*$ in \eqref{optimal-2} and $\pi^*$ in \eqref{optimal-3} as well as $\beta^*$ in \eqref{optimal-4} take the form
\begin{equation*}
\begin{aligned}
&u^*(v,z)=argmin_{u\in U}F(v,z,\alpha^*(v,z,u),u)=Proj_{U}\big(-\theta(v)-\frac{1}{\delta}z\big),\\
& \pi^*(v,z)=\alpha^*(v,z,u^*(v,z))=\frac{\theta(v)+z+Proj_{U}\big(-\theta(v)-\frac{1}{\delta}z\big)}{1-\delta},\\
&
\beta^*(v,z,\pi)=\left\{
\begin{array}{ll}
Proj_{U}\big(-\theta(v)-\frac{1}{\delta}z\big),\ &\text{if}\ \pi=\pi^*(v,z);\\
argmin_{u\in U}(\delta\pi+z)^Tu,\ &\text{otherwise.}
\end{array}
\right.
\end{aligned}
\end{equation*}
In this case, the ergodic BSDE \eqref{EQBSDE1} becomes
\begin{equation}\label{112101}
dY_{t}=\Big(-\frac{1}{2}\frac{\delta}{1-\delta}dist^2\big(U,-\theta(V_t)-\frac{1}{\delta}Z_t\big)+\frac{1}{2\delta}|Z_t|^2+Z_t^T\theta(V_t)+\lambda \Big)dt+Z_{t}^{T}dW_{t}.
\end{equation}%
In turn, from Theorem \ref{Theorem2_ForwardUtility}, we obtain the following result.
\begin{proposition}\label{prop-example1}
Denote by $(y(V_t),z(V_t),\lambda)$  the Markovian solution of \eqref{112101}. Then, the process  $U(x,t)$ given by
\begin{equation*}
U(x,t)=\frac{x^{\delta }}{\delta }e^{y(V_{t})-\lambda t}\text{ ,}
\end{equation*}%
is a power robust forward performance process.  Moreover, the optimal control pair  $(\pi^*,u^*)\in\tilde{\Pi}\times\mathcal{U}$ and optimal strategy pair $(\alpha^*,\beta^*)\in\mathcal{A}\times\mathcal{B}$, have the following feedback form
\begin{equation*}
\begin{aligned}
&\pi^*_t=\frac{\theta(V_t)+z(V_t)+Proj_{U}\big(-\theta(V_t)-\frac{1}{\delta}z(V_t)\big)}{1-\delta},\\ &u^*_t=Proj_{U}\big(-{\theta(V_t)-\frac{1}{\delta}z(V_t)}\big),\\
&\alpha^*(t,u_t)=\frac{\theta(V_t)+z(V_t)+u_t}{1-\delta},\\
&\beta^*(t,\pi_t)=\left\{
\begin{array}{ll}
Proj_{U}\big(-\theta(V_t)-\frac{1}{\delta}z(V_t)\big),\ &\text{if}\ \pi=\pi^*(v,z);\\
argmin_{u_t\in U}(\delta\pi_t+z(V_t))^Tu_t,\ &\text{otherwise.}
\end{array}
\right.
\end{aligned}
\end{equation*}
\end{proposition}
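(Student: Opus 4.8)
The plan is to obtain Proposition~\ref{prop-example1} as a direct specialization of Theorem~\ref{Theorem2_ForwardUtility}, so the work reduces to checking that in Market model~I the generic driver $G(v,z)=\inf_{u\in U}\sup_{\pi\in\Pi}F(v,z,\pi,u)$ and the generic optimizers $(\pi^*,u^*,\alpha^*,\beta^*)$ of Proposition~\ref{le-PDE} collapse to the explicit formulae displayed in the statement. First I would record the standing assumption of the subsection, namely that $\Pi$ is large enough that the projection in \eqref{optimal-1} is inactive, so $\alpha^*(v,z,u)=\frac{\theta(v)+z+u}{1-\delta}$ for all $u\in U$; I would note this is consistent since $\Pi$ is convex, closed and contains $0$, and the BMO-integrability of the resulting $\pi^*$ is inherited from the boundedness of $\theta$, $u$ and $z(\cdot)$ (the latter from Lemma~\ref{le112301}). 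Then I would substitute this $\alpha^*$ back into \eqref{112001} and carry out the routine algebra: $F(v,z,\alpha^*(v,z,u),u)$ becomes, after completing the square in $u$, an expression of the form $\frac{1}{2\delta}|z|^2+z^T\theta(v)+\frac{\delta}{2(1-\delta)}|\theta(v)+\tfrac{1}{\delta}z+u|^2\cdot(\text{sign})$ — more precisely one checks it equals $\frac{1}{2(1-\delta)}|\theta(v)+z+u|^2 - \tfrac12|\theta(v)+z|^2 + \tfrac12|z|^2 + z^Tu$ type terms; minimizing over $u\in U$ is then minimizing a positive-definite quadratic, whose minimizer is the projection $u^*(v,z)=\mathrm{Proj}_U(-\theta(v)-\tfrac1\delta z)$ and whose minimal value produces the $-\tfrac12\tfrac{\delta}{1-\delta}\,\mathrm{dist}^2\big(U,-\theta(v)-\tfrac1\delta z\big)+\tfrac1{2\delta}|z|^2+z^T\theta(v)$ appearing in \eqref{112101}.

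Next I would identify $\pi^*(v,z)=\alpha^*(v,z,u^*(v,z))=\frac{\theta(v)+z+\mathrm{Proj}_U(-\theta(v)-\frac1\delta z)}{1-\delta}$ directly from \eqref{optimal-3}, and for $\beta^*$ I would unwind \eqref{optimal-4}: when $\pi=\pi^*(v,z)$ we set $\beta^*=u^*(v,z)$ by definition, and otherwise $\beta^*(v,z,\pi)=\bar\beta^*(v,z,\pi)=\arg\min_{u\in U}F(v,z,\pi,u)$; since in $F$ the $u$-dependence for fixed $\pi$ is the affine term $(\delta\pi+z)^Tu$ (the $\tfrac12|z|^2$ and $-\tfrac12\delta(1-\delta)|\pi|^2$ and $\delta\pi^T(\theta(v)+z)$ terms being $u$-free), the minimizer is $\arg\min_{u\in U}(\delta\pi+z)^Tu$, giving the stated piecewise form. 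With $G$ now in the form \eqref{112101} I would invoke Lemma~\ref{le112301} (the local-Lipschitz/linear-growth estimates \eqref{120401} still apply since we have only substituted the explicit $\alpha^*$) to get the unique Markovian solution $(y(V_t),z(V_t),\lambda)$, and then Theorem~\ref{Theorem2_ForwardUtility} immediately yields that $U(x,t)=\frac{x^\delta}{\delta}e^{y(V_t)-\lambda t}$ is a power robust forward performance process. Finally I would obtain the feedback forms in the proposition by plugging $z=z(V_t)$, $u_t$ into the $(v,z,u)$-formulae via \eqref{112301}, exactly as Theorem~\ref{Theorem2_ForwardUtility} prescribes.

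The only genuine computation is the square-completion that turns $\inf_{u\in U}\sup_{\pi\in\Pi}F$ into the distance-squared driver; the main subtlety — though it is minor here — is verifying that the inactive-projection hypothesis on $\alpha^*$ is internally consistent, i.e.\ that one is allowed to assume $\mathrm{Proj}_\Pi$ acts as the identity on the relevant range, and that this does not disturb the verification of \eqref{key} in the proof of Proposition~\ref{le-PDE}: in fact that argument only used continuity and concavity-in-$u$ of $F$ and continuity-in-$u$ of $\alpha^*$, all of which survive, so \eqref{key} and hence the saddle structure continue to hold. Everything else is bookkeeping: matching the explicit optimizers to \eqref{optimal-1}--\eqref{optimal-4} and then quoting Theorem~\ref{Theorem2_ForwardUtility} verbatim. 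I would therefore keep the proof short, essentially: ``substitute the explicit $\alpha^*$, complete the square to get $G$ and $u^*$, read off $\pi^*$ and $\beta^*$ from \eqref{optimal-3}--\eqref{optimal-4}, apply Lemma~\ref{le112301} and Theorem~\ref{Theorem2_ForwardUtility}.''
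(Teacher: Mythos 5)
Your proposal is correct and follows essentially the same route as the paper, which gives no separate argument for Proposition \ref{prop-example1} beyond specializing $\alpha^*,u^*,\pi^*,\beta^*$ (with the projection onto $\Pi$ inactive), completing the square to obtain the driver \eqref{112101}, and invoking Lemma \ref{le112301} and Theorem \ref{Theorem2_ForwardUtility}. One tiny slip in your aside: the verification of \eqref{key} in Proposition \ref{le-PDE} uses that $F$ is affine (hence convex, not merely concave) in $u$ to write $F(v,z,\pi_1,u_1)$ as the convex combination $\lambda F(v,z,\pi_1,u)+(1-\lambda)F(v,z,\pi_1,u^*)$, but since you simply quote the general theorem rather than re-prove it, this does not affect your argument.
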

%One may consider a special case of the above model in the following sense:\\
%i) The market has only single risky asset and single stochastic factor (i.e., $n=d=1$);\\
%ii) the coefficients of the stock price have the form
%\textbf{???check???}
%$$b(v)=a+\frac{1}{2}\sigma^2-b\cdot v,\ \sigma(v)\equiv \sigma,\ $$
%where $a>0$, $b>0$ and $\sigma>0$ are constants. If the stochastic factor process $V_t$  is viewed as the logarithm of the tradable stock price $S_t$, then SDEs \eqref{stock-SDE}-\eqref{factor-SDE} become
%\begin{equation}\label{120902}
%\begin{aligned}
%{dS_{t}}&=(a+\frac 12 \sigma^2-b\cdot\ln S_t){S_{t}}dt+\sigma
%S_{t}dW_{t},  \\
%dV_{t}&=(a-b\cdot V_{t})dt+\sigma dW_{t},
%\end{aligned}
%\end{equation}%
%with $\eta(v)=a-b\cdot v,\ \kappa=\sigma$. In fact, \eqref{120902} has been studied in  \cite{S1997} as \emph{Model 1}, which models the spot price of commodity. In this situation, we can obtain the robust forward performance process and robust investment choice for an ambiguity-aversion investor  from Proposition \ref{prop-example1}, which extends the results of Section 4.1 in \cite{NT2017} to the market with model uncertainty.
\begin{remark}\label{080701}
It is worth to point out that the presence of the uncertainty in our
forward setting may lead to extreme prediction and conservative
policy implications for an ambiguity-averse investor. In fact, if we
consider the situation that the set $U$ is large enough such that
$$u^*(v,z)=Proj_{U}\big(-\theta(v)-\frac{1}{\delta}z\big)=-\theta(v)-\frac{1}{\delta}z.$$
Then, ergodic BSDE \eqref{112101} has the form
\begin{equation}\label{072801}
dY_{t}=\Big(\frac{1}{2\delta}|Z_t|^2+Z_t^T\theta(V_t)+\lambda \Big)dt+Z_{t}^{T}dW_{t},
\end{equation}
and the robust optimal portfolio weight
$\pi^*_t=-\frac{1}{\delta}z(V_t)$. Note that $(0,0,0)$ is the unique
Markovian solution of ergodic BSDE \eqref{072801}, from Proposition
\ref{prop-example1} we get
\begin{equation*}
\begin{aligned}
&\pi^*_t=-\frac{1}{\delta}z(V_t)=0,\ u^*_t=-{\theta(V_t)-\frac{1}{\delta}z(V_t)}=-\theta(V_t),\\
&\alpha^*(t,u_t)=\frac{\theta(V_t)+u_t}{1-\delta},\\
&\beta^*(t,\pi_t)=\left\{
\begin{array}{ll}
-\theta(V_t),\ &\text{if}\ \pi=\pi^*(v,z);\\
argmin_{u_t\in U}(\delta\pi_t)^Tu_t,\ &\text{otherwise.}
\end{array}
\right.
\end{aligned}
\end{equation*}
This implies that the robust investment policy for an investor is no
actions to be taken in the market if the degree of the uncertainty
is too large for her. At the same time,  the worst-case scenario has
a simple form and  depends only on the market price of risk
$\theta(v)$ and the stochastic factor $V_t$.
{In addition, even if an investor is forced to
invest (or pursuit high profits) in some situations such as  the
investor has a wrong judgment on the uncertainty of the market, or
is influenced by other extreme events, the
optimal investment strategy $\alpha^*$ still gives the corresponding
action policy for different scenarios.}

\end{remark}

\subsection{Market model II}

We consider a single stock and single stochastic factor model.
In this situation, we suppose $n=1$
and $d=2$ in the state equations (\ref{stock-SDE}) and (\ref{factor-SDE}), i.e.,
\begin{equation}\label{120903}
\begin{aligned}
dS_{t}&=b(V_{t})S_{t}dt+\sigma (V_{t})S_{t}dW_{t}^{1}\text{,}\\
dV_{t}^{1}&=\eta (V_{t})dt+\rho dW_{t}^{1}+\sqrt{1-\rho^2}dW_{t}^{2}\text{
\  \ and \ }dV_{t}^{2}=0,
\end{aligned}
\end{equation}%
with constant $\rho\in(0,1)$ and $\sigma \left( \cdot \right) $ bounded by a positive
constant. Note that the stochastic factor cannot be traded directly so that the market model is typically incomplete. %(This sentence comes from \cite{HS2006}).

Here, we consider an optimal portfolio problem with no constraints,
i.e.,  $\Pi =\mathbb{R\times }\left \{ 0\right \} $ (which means
$\pi _{t}^{2}\equiv 0$). Let $U=\{(u_1,u_2): -R\leq u_1\leq u_2\leq
R\}$ (a triangle domain in $\mathbb{R}^2$) with some given constant
$R>0$.
%For simplicity of computation, we here assume $R>M>0$, where $M$ is refined later.
Then, the wealth equation (\ref{wealth-process}) reduces to $%
dX_{t}^{\pi }=X_{t}^{\pi }\pi _{t}^{1}\left( \theta
(V_{t})dt+dW_{t}^{1}\right) $ with $\theta (V_{t})=b(V_{t})/\sigma (V_{t}),$
and the driver of (\ref{EQBSDE1}) takes the form
\begin{equation}\label{120901}
\begin{aligned}
G(v,z_{1},z_{2})
=& \frac{\delta}{2(1-\delta)}dist^2\Big([-R,R],-\theta(v)
-\frac{1}{\delta}z_1-\frac{1-\delta}{\delta}z_2I_{\{z_2\geq 0\}}\Big)
\\&-\frac{1}{2\delta}|z_{1}|^{2}-\theta(v)z_{1}
+\Big(\frac{2\delta-1}{2\delta}z_2-\frac{1}{\delta}z_1-\theta(v)\Big)z_2I_{\{z_2\geq 0\}}\\
&+
(\frac12 z_2+R)z_2I_{\{z_2<0\}}.
\end{aligned}
\end{equation}%
Then, from Theorem \ref{Theorem2_ForwardUtility}, we  have the following result.
\begin{proposition}
\label{prop-example2}
Suppose that $(Y(t),Z_1(t),Z_2(t),\lambda)=(y(V_t),z^1(V_t),z^2(V_t),\lambda)$ is the Markovian solution of ergodic BSDE \eqref{EQBSDE1} with the driver \eqref{120901}. Then, the process  $U(x,t)$ given by
\begin{equation*}
U(x,t)=\frac{x^{\delta }}{\delta }e^{y(V_{t})-\lambda t}\text{ ,}
\end{equation*}%
is a power robust forward performance process.
Moreover, the optimal portfolio weights and worst-case scenario parameters  are given by
\begin{equation}\nonumber
\begin{aligned}
\pi^{\ast }_1(t)=&\frac{1}{1-\delta}\Big(\theta(V_t)+Z_1(t)\\
&+Proj_{[-R,R]}\big(-\theta(V_t)
-\frac{1}{\delta}Z_1(t)-\frac{1-\delta}{\delta}Z_2(t)I_{\{Z_2(t)\geq 0\}}\big)\Big),\\
 \pi ^{\ast}_2(t)=&0,\\
u^{\ast}_1(t)=&Proj_{[-R,R]}\Big(-\theta(V_t)-\frac{1}{\delta}Z_1(t)-\frac{1-\delta}{\delta}Z_2(t)\cdot I_{\{Z_2(t)\geq 0\}}\Big),\\
u^{\ast}_2(t)= &Proj_{[-R,R]}\Big(-\theta(V_t)-\frac{1}{\delta}Z_1(t)-\frac{1-\delta}{\delta}Z_2(t)\Big)\cdot I_{\{Z_2(t)\geq 0\}}+R\cdot I_{\{Z_2(t)< 0\}}.
\end{aligned}
\end{equation}
 The optimal portfolio weight strategies for each scenario $u\in\mathcal{U}$ and the worst case scenario strategies for each investment weight $\pi\in\tilde{\Pi}$ are given as follows
\begin{equation*}
\begin{aligned}
\alpha ^{\ast }_1(t,u(t)) &=\frac{1}{1-\delta} [\theta(V_t)+Z_1(t)+u_1(t)],\ \alpha^{\ast}_2(t,u(t))=0,\\
\beta^{*}_1(t,\pi(t))&=\left\{
\begin{array}{ll}
u_1^*(t),\ &\text{if}\ \pi_1(t)=\pi^{*}_1(t),\\
-R\cdot sgn(a(t)),\ &\text{otherwise,}
\end{array}
\right.\\
\beta^{*}_2(t,\pi(t))&=\left\{
\begin{array}{ll}
u_2^*(t),\ &\text{if}\ \pi_1(t)=\pi^{*}_1(t),\\
-R\cdot sgn(a(t))\cdot I_{\{Z_2(t)\geq 0\}}+R\cdot I_{\{Z_2(t)<0\}},\ &\text{otherwise,}
\end{array}
\right.
\end{aligned}
\end{equation*}
where $a(t):=\delta\pi_1(t)+Z_1(t)+Z_2(t)\cdot I_{\{Z_2(t)\geq 0\}}$.
\end{proposition}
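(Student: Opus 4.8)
The plan is to obtain Proposition \ref{prop-example2} as a direct specialization of Theorem \ref{Theorem2_ForwardUtility}. That theorem already asserts that $U(x,t)=\frac{x^{\delta}}{\delta}e^{y(V_t)-\lambda t}$ is a power robust forward performance process, with the optimal controls and strategies given by the abstract maps \eqref{optimal-1}--\eqref{optimal-4} evaluated along $z(V_t)$ as in \eqref{112301}. Hence the only thing to verify is that, under the present choices $n=1$, $d=2$, $\Pi=\mathbb R\times\{0\}$ and $U=\{(u_1,u_2):-R\le u_1\le u_2\le R\}$, the abstract driver $G(v,z)=\inf_{u\in U}\sup_{\pi\in\Pi}F(v,z,\pi,u)$ from \eqref{driver-formal}--\eqref{112001} coincides with the explicit expression \eqref{120901}, and that the abstract maps reduce to the stated feedback forms. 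One first notes that in this model the market price of risk is $\theta(v)=(b(v)/\sigma(v),0)^T$, which is bounded and Lipschitz, so Assumption \ref{Assumption} together with the reinforced condition \eqref{C_eta} is in force and Lemma \ref{le112301} applies, giving the Markovian solution $(y(V_t),z^1(V_t),z^2(V_t),\lambda)$ of \eqref{EQBSDE1} with driver \eqref{120901}.

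First I would carry out the inner maximization over $\pi\in\Pi$. Writing $\pi=(\pi_1,0)$, the map $\pi_1\mapsto F$ is an unconstrained concave quadratic, maximized at $\alpha_1^{\ast}=(\theta(v)+z_1+u_1)/(1-\delta)$ and $\alpha_2^{\ast}=0$, which is precisely \eqref{optimal-1} in this setting; substituting back gives $\sup_{\pi\in\Pi}F=\frac{\delta}{2(1-\delta)}(\theta(v)+z_1+u_1)^2+z_1u_1+z_2u_2+\frac12(z_1^2+z_2^2)$. Next I would minimize this over the triangle $U$ in two stages. Since $u_2$ enters only through the linear term $z_2u_2$ and is constrained to $[u_1,R]$, the minimization in $u_2$ is attained at $u_2=u_1$ when $z_2\ge0$ and at $u_2=R$ when $z_2<0$; this dichotomy is the origin of the indicators $I_{\{z_2\ge0\}},\ I_{\{z_2<0\}}$ appearing in \eqref{120901} and in the formulas for $u_2^{\ast}$ and $\beta_2^{\ast}$. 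The remaining problem is minimizing a strictly convex quadratic in $u_1$ over $[-R,R]$; its unconstrained minimizer is $-\theta(v)-\frac1\delta z_1-\frac{1-\delta}{\delta}z_2I_{\{z_2\ge0\}}$, whose projection onto $[-R,R]$ is $u_1^{\ast}$ (this is \eqref{optimal-2} here), and the quadratic-penalty identity $\min_{[-R,R]}A(u_1-\bar u)^2+c=A\cdot dist^2([-R,R],\bar u)+c$ with $A=\frac{\delta}{2(1-\delta)}$ supplies the $dist^2$ term, while the value of the quadratic at $\bar u$ yields the remaining polynomial terms of \eqref{120901}.

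It then remains to read off the feedback maps. Setting $\pi^{\ast}=\alpha^{\ast}(\cdot,u^{\ast})$ as in \eqref{optimal-3} gives the stated $\pi_1^{\ast}$ (and $\pi_2^{\ast}=0$), and $\alpha^{\ast}(t,u_t)$ is exactly the maximizer computed above with $z$ replaced by $z(V_t)$. For $\beta^{\ast}$ from \eqref{optimal-4}, on the branch $\pi\ne\pi^{\ast}$ one minimizes the $u$-affine function $F(v,z,\pi,\cdot)$ over the triangle $U$; repeating the two-stage argument, with $a:=\delta\pi_1+z_1+z_2I_{\{z_2\ge0\}}$ the coefficient of the effective linear term in $u_1$, the minimizer is the vertex $u_1=-R\,\mathrm{sgn}(a)$, $u_2=-R\,\mathrm{sgn}(a)I_{\{z_2\ge0\}}+RI_{\{z_2<0\}}$, which is the ``otherwise'' case. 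Finally, evaluating all these maps along $z(V_t)$ as prescribed by \eqref{112301} yields the feedback forms in the statement, and Theorem \ref{Theorem2_ForwardUtility} completes the proof. The main obstacle is the bookkeeping in the two-stage minimization over the triangle $U$: keeping track of which terms the indicator $I_{\{z_2\ge0\}}$ attaches to, combining the projection/distance-squared identity correctly with the linear contributions, and matching the resulting polynomial in $(z_1,z_2,\theta(v))$ term-by-term with \eqref{120901}; everything else is routine specialization of Theorem \ref{Theorem2_ForwardUtility}.
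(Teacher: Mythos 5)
Your proposal is correct and follows essentially the same route as the paper: the paper also obtains Proposition \ref{prop-example2} as a direct specialization of Theorem \ref{Theorem2_ForwardUtility}, with the only substantive work being the two-stage computation of $\inf_{u\in U}\sup_{\pi\in\Pi}F$ over the triangle $U$ (first $u_2\in[u_1,R]$, then the constrained quadratic in $u_1$ via the projection/distance-squared identity), which produces exactly the driver \eqref{120901} and the stated feedback maps. Your term-by-term bookkeeping (including the $I_{\{z_2\ge 0\}}$ dichotomy and the vertex minimizer $-R\,\mathrm{sgn}(a)$ on the ``otherwise'' branch of $\beta^*$) reproduces the paper's expressions.
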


Similar to Remark \ref{080701}, we can conclude that the best choice
for an investor will take no action in an incomplete market if the
uncertainty is large enough. In fact, from Proposition
\ref{prop-example2} (especially, the form of $\pi_1^*$)  we can
derive a boundary (say, $M$)  or a domain of the scenario's value in
what degree an ambiguity-averse investor should not take action in
the market with uncertainty. Herein, the boundary $M$ depends on the
boundedness of the functions $z^1(\cdot)$ and $z^2(\cdot)$ as well
as $\theta(\cdot)$. Once the uncertainty exceeds the boundary $M$,
the robust investment opportunity will disappear (in this case,
$\pi_1^*\equiv 0$ because $Z_1=Z_2\equiv 0$). In turn, this reflects
that the uncertainty in our model is essentially associated with the
risk price $\theta$ and the part $Z$ of the solution of ergodic
BSDE.

Moreover, when the stock price is not affected by the stochastic
factor in the sense that the coefficients $b$ and $\sigma$ in
\eqref{120903} are constants, the processes $Z_1$ and $Z_2$, as part
of the solution of the ergodic BSDE \eqref{EQBSDE1}, will equal to
0. Then, from Proposition \ref{prop-example2}, it is easy to check
that the worst-case scenario parameters $u^*_1$ and $u^*_2$ will
choose the values closest to $-\theta(=-\frac b\sigma)$ for any
given $R>0$ and the optimal portfolio weight $\pi_1^*$ will have the
form
$$\pi^{\ast }_1(t)=\frac{1}{1-\delta}\Big(\theta+Proj_{[-R,R]}\big(-\theta\big)\Big).$$
Therefore, for the model that the stock price is not affected by the stochastic factor,  there will be no investment action into the stock when the value of the risk $\theta$ belongs to the range of the uncertainty (i.e., $|\theta|\leq R$).

% when the stock price is not affected by the stochastic factor and the value of the risk $\theta$
%shows that there will be no investment action into the stock. However, once the stochastic factor influences the price of stock,
%To conclude, there will be a nontrivial robust investment opportunity $\pi^*_1$ as shown in our result when the uncertainty is limited to some range.
In addition, we observe that the sign of $z^2(V_t)(=Z_2(t))$ has an important impact on the  the worst-case scenario, albeit not shown explicitly in the form of the power robust forward performance process $U(x,t)$.  %Therefore, the $Z$ part of the solution of the ergodic BSDE, %sometimes used as the volatility of the forward performance
%process (see, the stochastic part of SPDE in \cite{MZ3} or Theorem 3.2 in \cite{LZ})
It seems interesting to observe that the sign of $z^2(V_t)(=Z_2(t))$
only affects the worst-case scenario strategies $\beta^*_1$ and
$\beta^*_2$, not the optimal investment policy strategies
$\alpha^*_1$ and  $\alpha^*_2$ responding to each scenario. A
similar situation occurs if one consider a general compact and
convex subset $U\subset\mathbb{R}^2$ (e.g. $U=\{(u_1,u_2): -R\leq
u_i \leq R,\ i=1,2\}$); the only difference is that for this case
the form of worst-case scenario parameters depend also on the sign
of some process involving $z^1(V_t)$. Therefore, one may deduce that
the $Z$'s part of the solution of the ergodic BSDE \eqref{EQBSDE1}
carries with the important information on the worst-case scenario.

\begin{remark} The above incomplete market model with
uncertainty has also been studied in \cite{HS2006} in the framework
of classical robust expected utility. They give an explicit PDE
characterization for the lower value function of a robust utility
maximization problem  combining the duality approach and the
stochastic control approach.

On the other hand, when we do not
consider the model uncertainty, the above model will reduce to the
case that has been studied in \cite{LZ} (Section 3.1.3 therein). The
optimal portfolio weights obtained in \cite{LZ} have the following
form
$$\tilde{\pi}_1^*(t)=\frac{1}{1-\delta}(\theta(V_t)+\tilde{Z}_1(t)),\ \tilde{\pi}_2^*(t)=0,$$
where $(\tilde{Y},\tilde{Z}_1,\tilde{Z}_2,\tilde{\lambda})$ is the Markovian solution of ergodic BSDE \eqref{EQBSDE1} with the driver
$$\tilde{G}(v,z_1,z_2)=\frac{1}{2} \frac{\delta}{1-\delta}|z_1+\theta(v)|^2+\frac{1}{2}(|z_1|^2+|z_2|^2).$$
 Comparing the form between optimal portfolio weight $\tilde{\pi}_1^*$ and the robust weight $\pi_1^*$ given in Proposition \ref{prop-example2}, we observe that the model uncertainty affects the optimal policy in the following two aspects:\\
{\rm i)} $\pi^*_1$ has an additional projection term, which can be seen as a direct reflection on the model uncertainty influencing the robust investment policy;\\
{\rm ii)} The  solutions of the ergodic BSDE \eqref{EQBSDE1} with driver
$G$ and $\tilde{G}$, especially for the $Z$'s part shown in
$\pi_1^*$ and $\tilde{\pi}_1^*$, are different. Note that the
difference between $G$ and $\tilde{G}$ is mainly caused by the model
uncertainty. This reflects indirectly the impact on optimal policy
induced by the uncertainty via the associated ergodic BSDE.
\end{remark}
\section{Connection with ergodic risk-sensitive stochastic differential games}
\label{Connection with risk-sensitive optimization}
%We provide an interpretation of the constant $\lambda ,$ appearing in the
%representation of the forward performance process (\ref{PowerForwardUtility}%
%), as the solution of the risk-sensitive control problem (\ref%
%{ErgodicControlProblem}). It turns out that the constant $\lambda $ is also
%the optimal growth rate of the long-term utility maximization problem as
%considered in \cite{Bielecki}, \cite{FS2} and \cite{FS3} (see (\ref%
%{ErgodicControlProblem1}) below).
We establish a connection between the constant $\lambda$ appearing
in the solution of the ergodic BSDE (\ref{EQBSDE1}) and  a zero-sum
risk-sensitive stochastic differential game over the infinite
horizon with ergodic payoff criteria. It turns out the constant
$\lambda$ is the value of the zero-sum risk-sensitive game  and can
be interpreted as the optimal long-term growth rate of expected
utility of wealth with model uncertainty.

We first give the comparison theorem for ergodic BSDE \eqref{EQBSDE1}, which will be employed in Theorem \ref{propositionLambda}. Moreover, this result can be applied to compare the robust optimal long-term growth rate of expected utility for the model with different parameters.
%At the same time, this result may be applied to study the viscosity solution of the related ergodic HJBI equation as the classical comparison theorem of BSDEs used in HJBI equation (see \cite{BL}).
\begin{lemma}\label{le112302}
Suppose that $G_i$, $i=1,2,$ satisfy the following conditions
\begin{equation}\label{112401}
\begin{aligned}
&|G_i(v,z)-G_i(\bar{v},z)|\leq C(1+|z|)\cdot|v-\bar{v}|,\\
&|G_i(v,z)-G_i(v,\bar{z})|\leq C(1+|z|+|\bar{z}|)\cdot|z-\bar{z}|,\
|G_i(v,0)|\leq C.
\end{aligned}
\end{equation}
For $i=1,2,$ let $(Y^i,Z^i,\lambda^i)$ be the unique Markovian solution of the ergodic BSDE \eqref{EQBSDE1}
with driver $G_i(v,z)$. If $G_1(v,z)\geq G_2(v,z)$, then we have
$$\lambda^1\geq\lambda^2.$$
\end{lemma}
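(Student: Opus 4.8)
The plan is to prove the comparison result via the representation of $\lambda^i$ as the long-time average of the $Y^i$-component, combined with a one-sided comparison argument for the associated finite-horizon or infinite-horizon BSDEs. First I would recall from Lemma \ref{le112301} (and the cited results from \cite{HU2} and \cite{LZ}) that $\lambda^i = \lim_{\rho\downarrow 0}\rho y^{\rho,i}(v_0)$, where $(Y^{\rho,i},Z^{\rho,i})=(y^{\rho,i}(V_t),z^{\rho,i}(V_t))$ is the unique Markovian solution of the infinite-horizon discounted BSDE \eqref{IQBSDE1} with driver $G_i$. Since this limit does not depend on the reference point $v_0$, it suffices to show $y^{\rho,1}(v)\geq y^{\rho,2}(v)$ for all $v$ and all $\rho>0$; then multiplying by $\rho>0$ and passing to the limit along the common subsequence $\rho_n\downarrow 0$ gives $\lambda^1\geq\lambda^2$.

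The core step is therefore a comparison theorem for the infinite-horizon BSDE \eqref{IQBSDE1}. The drivers $G_i$ are only locally Lipschitz in $z$ (the second estimate in \eqref{112401} has the $(1+|z|+|\bar z|)$ factor), so the classical comparison theorem does not apply directly. I would handle this by a localization/linearization argument: since $z^{\rho,i}(\cdot)$ is bounded uniformly in $\rho$ (this boundedness is part of the statement that the Markovian solution is unique in the stated sense, and follows from the estimates in Appendix A of \cite{LZ}), along any solution path the relevant values of $Z$ stay in a fixed compact set, on which $G_i$ is genuinely Lipschitz in $z$ with a fixed constant. Writing the difference $\delta Y_t := Y^{\rho,1}_t - Y^{\rho,2}_t$, $\delta Z_t := Z^{\rho,1}_t - Z^{\rho,2}_t$, and using $G_1\geq G_2$ together with the local Lipschitz bound, one gets
\begin{equation*}
d(\delta Y_t) = \bigl(-[G_1(V_t,Z^{\rho,1}_t)-G_2(V_t,Z^{\rho,2}_t)] + \rho\,\delta Y_t\bigr)dt + (\delta Z_t)^T dW_t,
\end{equation*}
and $G_1(V_t,Z^{\rho,1}_t)-G_2(V_t,Z^{\rho,2}_t) \geq G_2(V_t,Z^{\rho,1}_t)-G_2(V_t,Z^{\rho,2}_t) \geq -\ell|\delta Z_t|$ for the local Lipschitz constant $\ell$. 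Introducing the process $\beta_t$ with $|\beta_t|\le \ell$ defined so that $G_2(V_t,Z^{\rho,1}_t)-G_2(V_t,Z^{\rho,2}_t) = \beta_t^T\delta Z_t$, and performing the usual Girsanov change of measure to absorb the $\beta_t^T\delta Z_t$ term, one obtains under the new measure $\mathbb{Q}$ that $e^{-\rho t}\delta Y_t + (\text{nonnegative drift})$ is a $\mathbb{Q}$-supermartingale; letting $t\to\infty$ and using the linear-growth bound on $y^{\rho,i}$ together with the exponential discount $e^{-\rho t}$ to kill the boundary term at infinity yields $\delta Y_0 = y^{\rho,1}(v)-y^{\rho,2}(v)\geq 0$.

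The main obstacle I anticipate is making the localization fully rigorous — specifically, justifying that one may use a \emph{single} Lipschitz constant $\ell$ valid simultaneously for all $\rho$ and all relevant $(t,\omega)$, which rests on the uniform-in-$\rho$ bound for $\|z^{\rho,i}\|_\infty$, and separately justifying the vanishing of the boundary term at $t\to\infty$ under the Girsanov-transformed measure (one needs integrability of $e^{-\rho t}\delta Y_t$, which follows from the at-most-linear growth of $y^{\rho,i}$ and standard moment estimates on the ergodic factor process $V$ guaranteed by the dissipativity condition \eqref{dissipative}). An alternative, perhaps cleaner, route that avoids the discounted-BSDE detour altogether is to work directly with the ergodic BSDEs: fix the reference point $v_0$, normalize $Y^i_0 = y^i(v_0)$, and compare the two ergodic equations over $[0,T]$ using the same localized-comparison-plus-Girsanov argument, obtaining $Y^1_t - Y^2_t \geq (\lambda^1-\lambda^2)t + (\text{bounded terms})$ under $\mathbb{Q}$; since $Y^i_t = y^i(V_t)$ has at most linear growth while $V_t$ has uniformly bounded moments, dividing by $T$ and sending $T\to\infty$ forces $\lambda^1 - \lambda^2 \geq 0$. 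Either way the conclusion $\lambda^1\geq\lambda^2$ follows, and I would present whichever of the two makes the boundary-term control most transparent.
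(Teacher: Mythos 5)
Your primary route (via the discounted infinite-horizon equations \eqref{IQBSDE1}) is correct but genuinely different from the paper's. The paper never passes through the $\rho$-approximation: it works directly with the two ergodic BSDEs, linearizes $G_1$ in $z$ between $Z^1_t$ and $Z^2_t$ (the resulting process $\gamma_t$ is bounded exactly because $z^1(\cdot),z^2(\cdot)$ are bounded and $G_1$ satisfies the local Lipschitz estimate in \eqref{112401}), changes measure to $Q$ by Girsanov, takes $Q$-expectations over $[0,T]$ to obtain $\tfrac1T E_Q[\hat{Y}_0-\hat{Y}_T]+\hat{\lambda}=\tfrac1T E_Q\int_0^T\bigl(G_1-G_2\bigr)(V_t,Z^2_t)\,dt\ge 0$, and concludes from $E_Q|\hat{Y}_T|\le C$ uniformly in $T$ (linear growth of $y^i$ plus the dissipativity \eqref{dissipative}, which survives the bounded Girsanov drift). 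So the paper's proof is precisely your ``alternative'' route; it buys a shorter argument with no boundary term at infinity beyond the crude bound on $E_Q|\hat{Y}_T|$ and no limiting procedure in $\rho$. Your main route is more modular (a genuine comparison theorem at the discounted level, then passage to the limit), but note two points: (i) the uniform-in-$\rho$ Lipschitz constant you flag as the main obstacle is not needed, since the comparison $y^{\rho,1}\ge y^{\rho,2}$ is proved at each fixed $\rho$ separately, where boundedness of $z^{\rho,i}$ (and of $y^{\rho,i}$, e.g. $|y^{\rho,i}|\le C/\rho$, which kills the term $e^{-\rho T}\delta Y_T$) suffices; (ii) since the paper only asserts subsequential convergence $\rho_n y^{\rho_n,i}(v_0)\to\lambda^i$, your ``common subsequence'' step needs a word of justification (a further/diagonal extraction, with the limit identified as $\lambda^i$ by uniqueness of the Markovian solution).

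One genuine slip in the sketched alternative: you state $Y^1_t-Y^2_t\ge(\lambda^1-\lambda^2)t+(\text{bounded terms})$, but the inequality goes the other way. From $d\hat{Y}_t=(-h_t+\hat{\lambda})\,dt+\hat{Z}_t^T dW_t^Q$ with $h_t\ge 0$ one gets $E_Q[\hat{Y}_T]-\hat{Y}_0\le\hat{\lambda}T$, hence $\hat{\lambda}\ge\bigl(E_Q[\hat{Y}_T]-\hat{Y}_0\bigr)/T\ge -C/T\to 0$; as literally written, your inequality combined with boundedness of $E_Q|\hat{Y}_T|$ would force $\lambda^1\le\lambda^2$, the wrong direction.
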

We remark that, under the assumptions  on the coefficients of our
model (mainly the boundedness and Lipschitz assumption on the market
price of the risk $\theta$), the function $G(v,z)$ defined in
\eqref{driver-formal} satisfies the condition \eqref{112401} (see
\eqref{120401}). We next give the proof of Lemma \ref{le112302}.

\begin{proof}
Denote
$$\gamma_t=\left\{
\begin{array}{ll}
\frac{G_1(V_t,Z_t^1)-G_1(V_t,Z_t^2)}{|Z^1_t-Z^2_t|^2}(Z_t^1-Z_t^2),\ &\text{if}\ Z_t^1\neq Z_t^2,\\
0,\ & \text{otherwise.}
\end{array}
\right.$$
Then, from the boundedness of $Z^1$ and $Z^2$, we know $\gamma$ is a bounded process.
We define the probability measure $Q$ as follows
$$\frac{dQ}{d\mathbb{P}}\Big|_{\mathcal{F}_t}=\mathcal{E}(\int_0^t\gamma_rdW_r).$$
Using the notations $\hat{Y}=Y^1-Y^2$, $\hat{Z}=Z^1-Z^2$, $\hat{\lambda}=\lambda^1-\lambda^2$, we get
\begin{equation*}
\begin{aligned}
\hat{Y}_0-\hat{Y}_T&=\int_0^TG_1(V_t,Z_t^2)-G_2(V_t,Z_t^2)+\gamma_t^T\hat{Z_t}dt-\hat{\lambda}T-\int_0^T\hat{Z}_t^TdW_t\\
&=\int_0^TG_1(V_t,Z_t^2)-G_2(V_t,Z_t^2)dt-\hat{\lambda}T-\int_0^T\hat{Z}_t^TdW_t^Q,
\end{aligned}
\end{equation*}
where  $W^Q$ defined via $dW^Q_t=-\gamma_tdt+dW_t$ is a Brownian motion under the probability measure $Q$. Therefore, we get
\begin{equation}
\label{112201}
\frac 1T
E_Q[\hat{Y}_0-\hat{Y}_T]+\hat{\lambda}=\frac 1T E_Q[\int_0^TG_1(V_t,Z_t^2)-G_2(V_t,Z_t^2)dt].
\end{equation}
Note that there exist mappings $y^i,$ $i=1,2,$ such that
$Y_t^i=y^i(V_t),$ $i=1,2$. Since $y^i$, $i=1,2,$ are of linear
growth,  there exists a constant $C$ independent of $T$ such that
\begin{equation}\label{112304}
E_Q|\hat{Y}_T|\leq C(1+E_Q|V_T|)\leq C,
\end{equation}
where the last inequality is derived from the dissipative condition \eqref{dissipative}.
It follows from  \eqref{112201} and $G_1(v,z)\geq G_2(v,z)$ that
$$\hat{\lambda}=\limsup_{T\rightarrow\infty}\frac 1T E_Q[\int_0^TG_1(V_t,Z_t^2)-G_2(V_t,Z_t^2)dt]\geq 0,$$
which completes the proof.
\end{proof}
%\begin{lemma}\label{le112401}
%Suppose that $G$ satisfies the conditions \eqref{112401}. Let $(Y,Z,\lambda)$ and $(Y^\varepsilon,Z^\varepsilon,\lambda^\varepsilon)$ be the solution of ergodic BSDEs \eqref{EQBSDE1} with drivers $G$ and $G+\varepsilon$, respectively. Then, we have
%$$\lambda=\lambda^\varepsilon-\varepsilon.$$
%\end{lemma}
%\begin{proof}
%The proof is similar to that of Lemma \ref{le112302}, so we omit it.
%\end{proof}

{ We start to formulate a two-player
zero-sum risk-sensitive stochastic differential game  associated
with the forward process. The dynamic is given by the stochastic
factor model \eqref{factor-SDE} and the running payoff function is
given by
\begin{equation*}
L(v,\pi,u)=-\frac{1}{2}\delta (1-\delta )|{\pi }|^{2}+\delta\pi^T[
\theta (v)+u],\ \ (v,\pi,u)\in\mathbb{R}^d\times\mathbb{R}^d\times\mathbb{R}^d.
\end{equation*}
In line with the forward process, the planning horizon of the
differential game is infinite and we study the following ergodic
payoff criterion
\begin{equation}\label{2020090401}
\mathcal{J}(\pi,u)=\limsup_{T\uparrow \infty }\frac{1}{T}%
\ln E_{{\mathbb{P}}^{{\pi,u }}}\left( e^{\int_{0}^{T}L(V_{s},{\pi }%
_{s},u_s)ds}\right),\ \ (\pi,u)\in\tilde{\Pi}\times\mathcal{U},
\end{equation}
where the probability measure  ${\mathbb{P}}^{{\pi,u}}$ is defined as follows
\begin{equation} \label{RN_density_21}
\left. \frac{d{\mathbb{P}}^{{\pi,u }}}{d\mathbb{P}}\right
\vert _{\mathcal{F}_{t}}=\mathcal{E}\left( \int_{0}^{t }(\delta {\pi }%
_{r}^{T}+u_r^T)dW_{r}\right) .
\end{equation}%
Note that the criterion $\mathcal{J}$ represents the gain for Player 1 and the loss for Player 2. Thus, Player 1 aims to maximize  $\mathcal{J}$ by using her control $\pi$, whereas Player 2 wants to minimize it  via her control $u$. Intuitively, this model can be applied to describe the long-time investment action of a risk-averse investor in a market with model uncertainty (see Remark \ref{20200921} on the equivalent form of $\mathcal{J}$), namely, the investor is trying to maximize her long-term portfolio gain rate via choosing the portfolio weight $\pi$, whereas the market, by default, aims to minimize the investor's gain rate via hiding the real market model and adding disturbance terms.
}

\begin{theorem}
\label{propositionLambda} For any
$(\pi,u)\in\tilde{\Pi}\times\mathcal{U}$ with feedback forms, i.e.
$(\pi_s,u_s)=(\pi(V_s),u(V_s))$ for some Borel measurable mappings
$(\pi(\cdot),u(\cdot))$, let $(y(V_{t}),$ $z\left( V_{t}\right) ,$
$\lambda ),$ $t\geq 0,$ be the unique Markovian solution of the
ergodic BSDE (\ref{EQBSDE1}).
%, and $X^{\pi }$
%solve the wealth equation (\ref{wealth-process}).
 Furthermore, if the set $\Pi$ is also assumed to be bounded, then $\lambda $ is the
 value of the associated risk-sensitive game problem, namely,
\begin{equation}\label{ErgodicGameProblem1}
\begin{aligned}
\lambda =\inf_{u\in\mathcal{U}}\sup_{{\pi }\in \tilde{\Pi}} \mathcal{J}(\pi,u)
 =\sup_{{\pi }\in \tilde{\Pi}}\inf_{u\in\mathcal{U}}\mathcal{J}(\pi,u).
\end{aligned}
\end{equation}%
 Moreover, the supremum  and infimum in \eqref{ErgodicGameProblem1} can be attainable by choosing  $\pi^*$ and $u^*$  as in \eqref{112301}.
%or, alternatively,
%\begin{equation}
%\lambda =\sup_{\pi \in \mathcal{A}}\limsup_{T\uparrow \infty }\frac{1}{T}\ln
%E_{\mathbb{P}}\left( \frac{(X_{T}^{\pi })^{\delta }}{\delta }\right) .
%\label{ErgodicControlProblem1}
%\end{equation}%
%For both problems (\ref{ErgodicControlProblem})\ and (\ref%
%{ErgodicControlProblem1}), the associated optimal control process $\pi
%_{t}^{\ast }$, $t\geq 0$, is as in (\ref{PowerOptimalStrategy}).
\end{theorem}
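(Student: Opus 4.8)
The plan is to exploit the Markovian ergodic BSDE (\ref{EQBSDE1}) as the bridge between the PDE structure of the game and the ergodic payoff criterion $\mathcal{J}$, following the strategy that worked in the forward performance analysis but now tracking both players' controls. First I would fix feedback controls $(\pi,u)=(\pi(V_s),u(V_s))$ and, using the SDE for the factor $V$ under $\mathbb{P}^{\pi,u}$ together with the ergodic BSDE, apply It\^o's formula to $Y_t = y(V_t)$ to obtain the identity
\begin{equation*}
y(V_T)-y(V_0) = \int_0^T \big(-G(V_s,Z_s)+\lambda\big)\,ds + \int_0^T Z_s^T\,dW_s,
\end{equation*}
and then rewrite $\int_0^T Z_s^T dW_s$ in terms of $W^{\pi,u}$, the Brownian motion under $\mathbb{P}^{\pi,u}$ defined through $dW_s = (\delta\pi_s+u_s)\,ds + dW^{\pi,u}_s$. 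Collecting the drift terms and comparing with the integrand $L(V_s,\pi_s,u_s)$ of the exponential criterion, the natural move is to write the exponential martingale $\mathcal{E}\big(\int_0^\cdot Z_r^T dW^{\pi,u}_r\big)$ and take $E_{\mathbb{P}^{\pi,u}}$ of $\exp\{\int_0^T L\,ds + y(V_T) - y(V_0)\}$; the cross terms between $\delta\pi+u$ and $Z$, together with the quadratic $\tfrac12|Z|^2$ appearing in $G$ through (\ref{112001}) when $z$ is plugged in, should reorganize so that
\begin{equation*}
E_{\mathbb{P}^{\pi,u}}\Big(e^{\int_0^T L(V_s,\pi_s,u_s)\,ds}\Big) = e^{\lambda T - y(V_0)}\,E_{\mathbb{P}^{\pi,u}}\Big(e^{y(V_T)}\,\mathcal{E}\big(\text{stoch. int.}\big)_T\,e^{\int_0^T \Delta_s\,ds}\Big),
\end{equation*}
where $\Delta_s = F(V_s,Z_s,\pi_s,u_s) - G(V_s,Z_s) \le 0$ for the sup-inf ordering and $\ge 0$ after the appropriate other ordering — this is exactly the $F$ versus $G$ gap used in Proposition \ref{le-PDE} and Theorem \ref{Theorem2_ForwardUtility}.

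From there the argument splits along the two optimization directions. For an arbitrary admissible $u$ and the optimal $\pi = \pi^* = \alpha^*(\cdot,u)$ (more precisely the feedback maximizer of $F(v,z,\cdot,u)$), the gap term $\Delta_s$ has a favorable sign, so bounding the stochastic exponential (it is a genuine martingale since $z(\cdot)$ is bounded and $\Pi$ is now assumed bounded so $\pi$ is bounded) and using the linear growth of $y$ together with the uniform moment bound $E_{\mathbb{P}^{\pi,u}}|V_T|\le C$ coming from the dissipativity (\ref{dissipative}) — exactly as in (\ref{112304}) of the proof of Lemma \ref{le112302} — one sees $\frac1T\ln E(\cdots) \to \lambda$, giving $\mathcal{J}(\alpha^*(\cdot,u),u)\ge \lambda$ for every $u$, hence $\sup_\pi \inf_u \mathcal{J}\ge \lambda$ and $\inf_u\sup_\pi \mathcal{J}\ge\lambda$. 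Symmetrically, fixing arbitrary $\pi$ and taking $u$ to be the feedback minimizer $\beta^*(\cdot,\pi)$ of $F(v,z,\pi,\cdot)$ reverses the sign of the gap and yields $\mathcal{J}(\pi,\beta^*(\cdot,\pi))\le\lambda$, giving the reverse inequalities. Evaluating at the saddle pair $(\pi^*,u^*)$ from (\ref{112301}), where $\Delta_s\equiv 0$, pins down $\mathcal{J}(\pi^*,u^*)=\lambda$ and shows the sup-inf and inf-sup both equal $\lambda$ and are attained.

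The main obstacle I expect is the careful handling of the $\limsup$ in the definition of $\mathcal{J}$ together with uniform-in-$T$ control of the error terms. Two technical points require attention: first, one must verify the relevant stochastic exponential (with integrand involving $\delta\pi_r + u_r + z(V_r)$ or $z(V_r)$ alone) is a true martingale, not just a local one — this is where the boundedness of $\Pi$ is genuinely used, in addition to the BMO property and boundedness of $z(\cdot)$, so that $\frac1T\ln(\text{that expectation})\to 0$; second, the $e^{y(V_T)}$ factor must be shown not to affect the exponential growth rate, which reduces to $\frac1T\big|\ln E_{\mathbb{P}^{\pi,u}} e^{y(V_T)}\big|\to 0$, obtainable from linear growth of $y$ and a uniform exponential moment bound on $|V_T|$ under the shifted measure via the dissipative condition (this is the analogue of, and slightly stronger than, (\ref{112304})). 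Once these uniform estimates are in place, the sign of $\Delta_s$ does all the real work, and the equality (\ref{ErgodicGameProblem1}) follows by sandwiching.
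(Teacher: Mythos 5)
Your verification argument is sound in substance but takes a genuinely different route from the paper. The paper first freezes one player's feedback control and introduces the auxiliary ergodic BSDEs \eqref{EQBSDE2} with the one-sided optimized drivers $\sup_{\pi}F$ and $\inf_{u}F$; it then sandwiches $\lambda^{\pi}\leq\lambda\leq\lambda^{u}$ via the comparison result for ergodic BSDEs with locally Lipschitz drivers (Lemma \ref{le112302}), identifies $\lambda=\lambda^{u^*}=\lambda^{\pi^*}$ by uniqueness, and finally identifies $\lambda^{u}$ and $\lambda^{\pi}$ with the one-player ergodic risk-sensitive values by exactly the exponential-transform and change-of-measure estimates you describe (linear growth of $y$, dissipativity, Jensen, the exponential moment bound from Lemma 3.1 of \cite{FMc}, and the boundedness of $\Pi$ to control the Girsanov drift uniformly in $T$). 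You instead run the verification directly on the single ergodic BSDE \eqref{EQBSDE1}, using the sign of the gap $F-G$; this bypasses Lemma \ref{le112302} and the auxiliary BSDEs, at the cost of relying on the full saddle structure of the driver established in Proposition \ref{le-PDE}. Both routes hinge on the same uniform estimates, so yours is a legitimate and somewhat shorter alternative; the paper's detour buys, in addition, the interpretation of $\lambda^{u}$ and $\lambda^{\pi}$ as one-player optimal growth rates and a reusable comparison principle.

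One deduction, however, needs restating. From $\mathcal{J}(\alpha^*(\cdot,u),u)\geq\lambda$ for every $u$ you may conclude $\inf_{u}\sup_{\pi}\mathcal{J}\geq\lambda$, but not $\sup_{\pi}\inf_{u}\mathcal{J}\geq\lambda$, because the maximizing control depends on $u$; symmetrically, the family $\mathcal{J}(\pi,\beta^*(\cdot,\pi))\leq\lambda$ only yields $\sup_{\pi}\inf_{u}\mathcal{J}\leq\lambda$, and evaluating at the saddle pair, $\mathcal{J}(\pi^*,u^*)=\lambda$, does not by itself close the remaining gap. The fix stays entirely inside your framework: use the \emph{fixed} controls from \eqref{112301}. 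With $\pi\equiv\pi^*$ the gap $F(V,Z,\pi^*,u)-G(V,Z)$ is nonnegative for every $u$ by \eqref{key}, giving $\inf_{u}\mathcal{J}(\pi^*,u)\geq\lambda$ and hence both lower bounds; with $u\equiv u^*$ the gap is nonpositive for every $\pi$ because $G(v,z)=\sup_{\pi}F(v,z,\pi,u^*(v,z))$ (the minimax identity from Proposition \ref{le-PDE}), giving $\sup_{\pi}\mathcal{J}(\pi,u^*)\leq\lambda$ and hence both upper bounds. With this adjustment your argument is complete, and the attainability statement follows as you indicate.
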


\begin{proof}
From \eqref{112001}, we have
\begin{equation}\label{112501}
\begin{aligned}
&|F(v,z,\pi,u)-F(\bar{v},z,\pi,u)|\leq C|\pi|\cdot|v-\bar{v}|,\\
&|F(v,z,\pi,u)-F(v,\bar{z},\pi,u)|\leq C(1+|\pi|+|z|+|\bar{z}|)\cdot|z-\bar{v}|,\\
&|F(v,0,\pi,u)|\leq C|\pi|^2+C|\pi|.\\
\end{aligned}
\end{equation}
Then, similar to the proof of Lemma \ref{le112301}, from
\eqref{112302} and \eqref{112501}, the following two ergodic
equations
\begin{equation}\label{EQBSDE2}
\begin{aligned}
dY_{t}^u&=(-\sup_{\pi_t\in {\Pi}}F(V_{t},Z_{t}^u,\pi_t,u_t)+\lambda^u )dt+(Z_{t}^u)^{T}dW_{t},  \\
dY_{t}^\pi&=(-\inf_{u_t\in {U}}F(V_{t},Z_{t}^\pi,\pi_t,u_t)+\lambda^\pi )dt+(Z_{t}^\pi)^{T}dW_{t},
\end{aligned}
\end{equation}%
have unique Markovian solutions $(Y^u,Z^u,\lambda^u)$ and $(Y^\pi,Z^\pi,\lambda^\pi)$, respectively, for each $(u,\pi)\in\mathcal{U}\times\tilde{\Pi}$ with feedback forms.

\emph{Step 1.} We first show that
\begin{equation}\label{lambda}
\lambda =\inf_{u\in\mathcal{U}}\lambda^u  =\sup_{{\pi }\in \tilde{\Pi}}\lambda^\pi.
\end{equation}
Since
$$\inf_{u_t\in {U}}F(V_{t},Z_{t},\pi_t,u_t)\leq G(V_t,Z_t)\leq \sup_{\pi_t\in {\Pi}}F(V_{t},Z_{t},\pi_t,u_t),$$
Lemma \ref{le112302} then implies that
\begin{equation}\label{112406}
\lambda^\pi\leq\lambda\leq\lambda^u,\ \text{for\ all}\ (\pi,u)\in\tilde{\Pi}\times\mathcal{U}\ \text{with feedback forms}.
\end{equation}
On the other hand, from the uniqueness of the solution of the
ergodic BSDE (\ref{EQBSDE1}), we know
$\lambda=\lambda^{u^*}=\lambda^{\pi^*}$ with $u^*$ and $\pi^*$ given
in \eqref{112301}. Thus, we have established (\ref{lambda}).

\emph{Step 2.} We show that, for each $(u,\pi)\in\mathcal{U}\times\tilde{\Pi}$ with feedback forms,
\begin{equation}\label{112407}
\lambda^u=\sup_{{\pi }\in \tilde{\Pi}}\limsup_{T\uparrow \infty }\frac{1}{T}%
\ln E_{{\mathbb{P}}^{{\pi,u }}}\left( e^{\int_{0}^{T}L(V_{s},{\pi }%
_{s},u_s)ds}\right),
\end{equation}
\begin{equation}\label{112408}
\lambda^\pi=\inf_{{u }\in \mathcal{U}}\limsup_{T\uparrow \infty }\frac{1}{T}%
\ln E_{{\mathbb{P}}^{{\pi,u }}}\left( e^{\int_{0}^{T}L(V_{s},{\pi }%
_{s},u_s)ds}\right).
\end{equation}
We only prove \eqref{112407}, and the proof of \eqref{112408} is
analogous.

For arbitrary but fixed $u\in\mathcal{U}$, from \eqref{EQBSDE2} we get, for every $\tilde{\pi}\in\tilde{\Pi}$,
\begin{equation}\label{112303}
\begin{aligned}
dY_{t}^u&=(-\sup_{\pi_t\in {\Pi}}F(V_{t},Z_{t}^u,\pi_t,u_t)+\lambda^u )dt+(Z_{t}^u)^{T}dW_{t}\\
&=\Big(-\sup_{\pi_t\in {\Pi}}F(V_{t},Z_{t}^u,\pi_t,u_t)+\lambda^u +(Z_t^u)^T(\delta\tilde{\pi}_t+u_t)\Big)dt+(Z_{t}^u)^{T}dW_{t}^{\tilde{\pi},u},
\end{aligned}
\end{equation}%
where $W^{\tilde{\pi},u}$ defined via $dW^{\tilde{\pi},u}=-(\delta\tilde{\pi}_t+u_t)dt+dW_t$ is a Brownian motion under probability measure $\mathbb{P}^{\tilde{\pi},u}$ (see \eqref{RN_density_21}).
We observe that the function $F$ (see \eqref{112001}) in \eqref{112303}
can be written as
\begin{equation*}
F(V_{t},Z_{t}^u,\pi_t,u_t)=L(V_{t},\pi
_{t},u_t)+(Z_{t}^u)^{T}(\delta {\pi }_{t}+u_t)+\frac{1}{2}|Z_{t}^u|^{2}.
\end{equation*}%
Therefore,  we rewrite the
ergodic BSDE (\ref{112303}) as
\begin{equation*}
\begin{aligned}
&Y_{0}^u-Y_T^u+\lambda^u T\\
=&\int_0^T\sup_{{\pi }_{t}\in \tilde{\Pi} }\left( L(V_{t},\pi
_{t},u_t)+(Z_{t}^u)^{T}\delta {\pi }_{t}\right) -(Z_{t}^u)^{T}\delta \tilde{\pi}%
_{t}+\frac{1}{2}|Z_{t}^u|^{2}dt-\int_0^T(Z_{t}^u)^{T}dW_{t}^{
\tilde{\pi},u},
\end{aligned}
\end{equation*}%
which follows that, for arbitrary $\tilde{\pi}\in\tilde{\Pi}$,
\begin{equation}\nonumber
\begin{aligned}
&e^{\lambda^uT+Y_{0}^u}e^{-Y_{T}^u}\mathcal{E}\Big( \int_{0}^{T
}(Z_{t}^u)^{T}dW_{t}^{\tilde{\pi},u}\Big) \\
=& \exp \Big( \int_0^T\sup_{{\pi }_{t}\in \tilde{\Pi} }\left( L(V_{t},\pi
_{t},u_t)+(Z_{t}^u)^{T}\delta {\pi }_{t}\right) -L(V_{t},\tilde{\pi}
_{t},u_t)-(Z_{t}^u)^{T}\delta \tilde{\pi}%
_{t}dt\Big)\\
&\cdot e^{\int_{0}^{T}L(V_{t},\tilde{\pi}_{t},u_t)dt}\\
\geq &\ e^{\int_{0}^{T}L(V_{t},\tilde{\pi}_{t},u_t)dt}.
\end{aligned}
\end{equation}%
Then, we obtain
\begin{equation}\label{112306}
\begin{aligned}
e^{\lambda^uT+Y_{0}^u}E_{\mathbb{P}^{\tilde{\pi},u}}\Big[e^{-Y_{T}^u}\mathcal{E}\Big( \int_{0}^{T
}(Z_{t}^u)^{T}dW_{t}^{\tilde{\pi},u}\Big) \Big]
\geq E_{\mathbb{P}^{\tilde{\pi},u}}\Big[e^{\int_{0}^{T}L(V_{t},\tilde{\pi}_{t},u_t)dt}\Big].
\end{aligned}
\end{equation}%
We define the probability measure $Q^{\tilde{\pi},u}$ as follows
\begin{equation}\nonumber
\left. \frac{d{Q}^{{\tilde{\pi},u }}}{d\mathbb{P}}\right
\vert _{\mathcal{F}_{t}}=\mathcal{E}\left( \int_{0}^{t }(\delta {\tilde{\pi} }%
_{r}+u_r+Z_r^u)^TdW_{r}\right) .  \label{RN_density_2}
\end{equation}%
Using the measure $Q^{\tilde{\pi},u}$, from \eqref{112306} we get
\begin{equation}\nonumber
\begin{aligned}
e^{\lambda^uT+Y_{0}^u}E_{Q^{\tilde{\pi},u}}\Big[e^{-Y_{T}^u} \Big]
\geq E_{\mathbb{P}^{\tilde{\pi},u}}\Big[e^{\int_{0}^{T}L(V_{t},\tilde{\pi}_{t},u_t)dt}\Big].
\end{aligned}
\end{equation}%
Thus, it holds
\begin{equation}\label{112307}
\begin{aligned}
\lambda^u+\frac{Y_{0}^u}{T}+\frac 1T \ln
E_{Q^{\tilde{\pi},u}}\left[e^{-Y_{T}^u} \right] \geq \frac 1T \ln
E_{\mathbb{P}^{\tilde{\pi},u}}\Big[e^{\int_{0}^{T}L(V_{t},\tilde{\pi}_{t},u_t)dt}\Big].
\end{aligned}
\end{equation}%
Similar to the proof of estimate \eqref{112304}, from the boundedness of $\Pi$ and  Jensen's inequality, there exists a constant $C$
independent of $T$ such that
\begin{equation}\label{112308}
\frac{1}{C}\leq  e^{-E_{Q^{\tilde{\pi },u}}\left[Y_{T}^u\right]}\leq
E_{Q^{\tilde{\pi },u}}\left( e^{-Y_{T}^u}\right) \leq C,
\end{equation}
where the last inequality is obtained using Lemma 3.1 in \cite{FMc}.
It follows from \eqref{112307} and \eqref{112308} that, for any $\tilde{\pi}\in
\tilde{\Pi}$,
\begin{equation*}
\lambda^u \geq \limsup_{T\uparrow \infty }\frac 1T \ln E_{\mathbb{P}^{\tilde{\pi},u}}\Big[e^{\int_{0}^{T}L(V_{t},\tilde{\pi}_{t},u_t)dt}\Big].
\end{equation*}%
with equality choosing $\tilde{\pi}_{t}=\pi _{t}^{*}$, where $\pi
_{t}^{*}$ is given in (\ref{112301}).

\emph{Step 3.} Finally, we readily obtain  \eqref{ErgodicGameProblem1} from (\ref{lambda}) in Step 1 and (\ref{112407}) and (\ref{112408}) in Step 2.
\end{proof}

{\begin{remark}\label{20200921} Note
that
\begin{equation}\nonumber
\begin{aligned}
&E_{{\mathbb{P}}^{{\pi,u }}}\left( e^{\int_{0}^{T}L(V_{s},{\pi }%
_{s},u_s)ds}\right)\\
=&\ E_{{\mathbb{P}}^{{u }}}\left( e^{\int_{0}^{T}-\frac{1}{2}\delta
|{\pi }_{s}|^{2}+\delta\pi_s^T \theta
(V_{s})ds+\int_0^T\delta\pi_s^TdW_s}\right) = E_{{\mathbb{P}}^{{u
}}}\left[\frac{(X_T^\pi)^\delta}{\delta}\right]\cdot
\frac{\delta}{x^\delta}.
\end{aligned}
\end{equation}
In turn, from \eqref{ErgodicGameProblem1}, it is easy to check that
$\lambda$ is also the value for the following game problem
\begin{equation}\nonumber
\begin{aligned}
\lambda &=\inf_{u\in\mathcal{U}}\sup_{{\pi }\in \tilde{\Pi}}\limsup_{T\uparrow \infty }\frac{1}{T}%
\ln E_{{\mathbb{P}}^{u }} \left[\frac{(X_T^\pi)^\delta}{\delta}\right] \\
& =\sup_{{\pi }\in \tilde{\Pi}}\inf_{u\in\mathcal{U}}\limsup_{T\uparrow \infty }\frac{1}{T}%
\ln E_{{\mathbb{P}}^{u }} \left[\frac{(X_T^\pi)^\delta}{\delta}\right].
\end{aligned}
\end{equation}%
Therefore, Theorem \ref{propositionLambda} can be viewed as an
optimal investment model, and the constant $\lambda$ is the optimal
long-term growth rate of the expected utility of wealth with model
uncertainty. Such asymptotic results on robust utility maximization
have been treated in \cite{Knispel} using the duality method and are
related to ``robust large deviations" criteria to optimal long-term
investment, that is, the investor aims to maximize the portfolio's
growth rate exceeding some threshold $C\in\mathbb{R}$ under the
worst-case probability
\begin{equation}\nonumber
\begin{aligned}
\sup_{{\pi }\in \tilde{\Pi}}\inf_{u\in\mathcal{U}}\limsup_{T\uparrow \infty }\frac{1}{T}%
\ln {\mathbb{P}}^{u }\big(\frac 1 T \ln {X_T^\pi}\geq C\big).
\end{aligned}
\end{equation}%
\end{remark}}

\section{Connection with classical expected utility maximization for long time
horizons}

\label{SubSectionTraditionalPowerUtility}

We establish a link between  the power robust forward process $U\left( x,t\right) $
 and  the long-time behaviour of the lower value function  of the classical power robust expected utility.
For the latter, let $[0,T]$ be an arbitrary trading horizon and we introduce the lower value function as follows
\begin{equation}
w_T(x,v)=\sup_{\pi \in {\Pi_{[0,T]}}}\inf_{u\in\mathcal{U}_{[0,T]}}E_{\mathbb{P}^u}\left[
\frac{(X_{T}^{\pi })^{\delta }}{\delta }|X_0^\pi=x,V_0=v\right], \  (x,v) \in \mathbb{R}_{+}\times\mathbb{R}^d, \label{PowerUtility}
\end{equation}%
where the wealth process $X_{s}^{\pi }$, $s\in \lbrack 0,T],$ solving (\ref%
{wealth-process}) with $X_0^\pi=x$, the stochastic factor process
$V_s$, $s\in[0,T]$, solving (\ref{factor-SDE}) with $V_0=v$, and
$u\in\mathcal{U}_{[0,T]}$ means that $u$ belongs to $\mathcal{U}$
and is restricted to the time horizon $[0,T]$.

We recall that the optimal investment problem for the classical robust expected utility  has been considered in \cite{BMS} via the stochastic control approach  based on BSDE, in  \cite{S2007} via the duality approach, and in \cite{HS2006} combining these two methods.

\begin{proposition}
\label{PropositionPowerUtility2}
Let $U\left( x,t\right)=\frac{x^{\delta}}{\delta}e^{y(V_t)-\lambda t} $ be the power robust forward performance process as in (\ref%
{PowerForwardUtility}). Then, there exists a constant  $L\in\mathbb{R}$, independent of the initial states $X_0^\pi=x$ and $V_0=v$,
such that, for $ (x,v)\in \mathbb{R}%
_{+}\times\mathbb{R}^d,$
\begin{equation*}
\lim_{T\uparrow \infty }\frac{w_T(x,v)e^{-\lambda T-L} }{U(x,0)}=1.
\end{equation*}%
\end{proposition}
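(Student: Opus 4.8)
The plan is to represent the finite-horizon value $w_T$ through a finite-horizon (quadratic) BSDE driven by the very same generator $G$ as the ergodic BSDE \eqref{EQBSDE1}, and then to pass to the limit using the known large-time alignment of such BSDEs with the ergodic one.

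First I would fix $T$ and introduce the Markovian solution $(Y^{(T)}_t,Z^{(T)}_t)_{t\in[0,T]}$ of the finite-horizon BSDE
$Y^{(T)}_t=\int_t^T G(V_s,Z^{(T)}_s)\,ds-\int_t^T (Z^{(T)}_s)^T\,dW_s$, with $G$ from \eqref{driver-formal}, writing $Y^{(T)}_0=\phi_T(v)$ when $V_0=v$ (a deterministic function by the Markov property, with $Z^{(T)}$ bounded for each fixed $T$). Repeating verbatim the It\^o expansion in the proof of Theorem \ref{Theorem2_ForwardUtility} (now with $Y_s-\lambda s$ replaced by $Y^{(T)}_s$, so that the terminal factor $e^{Y^{(T)}_T}=1$ reproduces exactly the terminal utility $(X_T^\pi)^\delta/\delta$) gives, for every $\pi\in\Pi_{[0,T]}$ and $u\in\mathcal{U}_{[0,T]}$,
\begin{equation*}
E_{\mathbb{P}^u}\!\Big[\tfrac{(X_T^\pi)^\delta}{\delta}\Big]=\tfrac{x^\delta}{\delta}e^{\phi_T(v)}\,E_{\mathbb{P}^u}\!\Big[M_T\exp\!\Big(\!\int_0^T\!\big(F(V_s,Z^{(T)}_s,\pi_s,u_s)-G(V_s,Z^{(T)}_s)\big)\,ds\Big)\Big],
\end{equation*}
with $M$ the exponential $\mathbb{P}^u$-martingale generated by $\int(\delta\pi_s^T+(Z^{(T)}_s)^T)\,dW^u_s$, a genuine martingale since $\pi$ is $\mathrm{BMO}$ and $Z^{(T)}$ is bounded. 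Plugging in the feedback $\pi^*_s=\pi^*(V_s,Z^{(T)}_s)$ and using the key inequality \eqref{key} ($F(v,z,\pi^*,u)\ge G(v,z)$ for all $u$) gives $w_T(x,v)\ge\tfrac{x^\delta}{\delta}e^{\phi_T(v)}$; conversely, for arbitrary $\pi$ the feedback $u_s:=\beta^*(V_s,Z^{(T)}_s,\pi_s)\in\mathcal{U}_{[0,T]}$ together with the Isaacs identity $\sup_\pi\inf_u F=\inf_u\sup_\pi F=G$ established in the proof of Proposition \ref{le-PDE} yields the reverse bound; hence $w_T(x,v)=\tfrac{x^\delta}{\delta}e^{\phi_T(v)}$ (consistent with the PDE/BSDE characterisations of robust power utility in \cite{BMS,HS2006}).

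Next I would invoke \cite{Hu11}. The generator $G$ obeys the structural estimates \eqref{120401} (from the proof of Lemma \ref{le112301}), which under the dissipativity \eqref{dissipative} and the reinforced condition \eqref{C_eta} are exactly the hypotheses under which the finite-horizon BSDE with bounded — here zero — terminal datum is asymptotically aligned with the ergodic BSDE \eqref{EQBSDE1}. This produces a constant $L\in\mathbb{R}$ depending only on $G$, hence independent of $v$ and (trivially) of $x$, with $\lim_{T\uparrow\infty}\big(\phi_T(v)-\lambda T\big)=y(v)+L$ for every $v$. Combining with $U(x,0)=\tfrac{x^\delta}{\delta}e^{y(v)}$ from \eqref{PowerForwardUtility} gives
$w_T(x,v)e^{-\lambda T-L}/U(x,0)=e^{\phi_T(v)-\lambda T-L-y(v)}\to 1$, which is the claim.

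The hard part is the convergence step. Linearising $G$ in $z$ turns $\phi_T(v)-\lambda T-y(v)$ into an expectation of the form $E_{Q_T}[-y(V_T)]$ under a measure $Q_T$ whose density involves $Z^{(T)}$ and therefore depends on $T$; showing this expectation converges to a $v$-independent constant requires uniform-in-$T$ gradient/$\mathrm{BMO}$ bounds for the quadratic finite-horizon BSDEs and the ergodic behaviour of $V$ under the whole family $\{Q_T\}$ — this is precisely the content we borrow from \cite{Hu11}. Beyond citing it, the only genuine work is to check that the standing assumptions here (quadratic growth of $G$ in $z$, the $(1+|z|)$-type Lipschitz bound in $v$ in \eqref{120401}, and the largeness of $C_\eta$ in \eqref{C_eta}) meet its hypotheses.
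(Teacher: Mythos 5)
Your proposal is correct and follows essentially the same route as the paper: the paper also identifies $w_T(x,v)=\frac{x^\delta}{\delta}e^{\bar Y_0}$ via the finite-horizon BSDE \eqref{112504} with driver $G$ and zero terminal condition (verified, as you do, by the It\^o/martingale argument of Proposition \ref{le-PDE} and Theorem \ref{Theorem2_ForwardUtility}), and then invokes Theorem 4.4 of \cite{Hu11} for $\lim_{T\uparrow\infty}(\bar Y_0-\lambda T-Y_0)=L$ to conclude. Your write-up merely spells out the verification step that the paper omits, and correctly flags that the only remaining task is checking the hypotheses of \cite{Hu11} under \eqref{120401} and \eqref{C_eta}.
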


\begin{proof}
Since the maxmin problem (\ref{PowerUtility}) is standard in the literature (see, for example, \cite{YLZ}), we only demonstrate its main steps briefly. To this end,  for each $\pi \in {\Pi_{[0,T]}}$ and $u\in \mathcal{U}_{[0,T]}$, we introduce the objective functional
$$w_T(x,v,\pi,u)=E_{\mathbb{P}^u}\left[\frac{(X_T^{\pi})^{\delta}}{\delta}|X_0^\pi=x,V_0=v\right].$$
We aim to find a saddle point $(\pi^*,u^*)\in\Pi_{[0,T]}\times \mathcal{U}_{[0,T]}$ such that
$$w_T(x,v,\pi,u^*)\leq w_T(x,v,\pi^*,u^*)\leq w_T(x,v,\pi^*,u).$$
Then, it is clear that $w_T(x,v)=w_T(x,v,\pi^*,u^*)$. We claim that
\begin{equation}\label{value_function}
w_T(x,v)=\frac{x^\delta}{\delta}e^{\bar{Y}_0},
\end{equation}
\begin{equation}\label{optimal_pi_u}
\pi_t^*=\pi^*(V_t,\bar{Z}_t),\ u_t^*=u^*(V_t,\bar{Z}_t),\ t\in[0,T],
\end{equation}
with the mappings $(\pi^*,u^*)$ given in \eqref{optimal-3} and  \eqref{optimal-2}, and $(\bar{Y},\bar{Z})$ being the unique solution of the following BSDE
\begin{equation}\label{112504}
\bar{Y}_t=\int_{t}^{T}G(V_{r},\bar{Z}_{r})dr-\int_{t}^{T}\left(
\bar{Z}_{r}\right) ^{T}dW_{r},
\end{equation}%
where the driver $G$ is given in \eqref{driver-formal}. The proof
follows along similar arguments as in Proposition \ref{le-PDE} and
Theorem \ref{Theorem2_ForwardUtility}, and thus omitted.

From Theorem 4.4 in \cite{Hu11}, there exists a constant $L\in\mathbb{R}$ such that
\begin{equation}\label{112509}
\lim_{T\uparrow\infty}(\bar{Y}_0-\lambda T-Y_0)=L,
\end{equation}
where $(Y,Z,\lambda)$ is the solution of ergodic BSDE \eqref{EQBSDE1}. Finally,
from \eqref{PowerForwardUtility}, (\ref{value_function}) and \eqref{112509} we have
\begin{equation*}
\lim_{T\uparrow \infty }\frac{w_T(x,v)e^{-\lambda T-L} }{U(x,0)}=1,
\end{equation*}
which completes the proof.%
\end{proof}

{We have showed that the discounted
classical power robust expected utility converges to the power
robust forward performance process as the investment horizon tends
to infinite. This result is obtained via the relationship between
BSDE and ergodic BSDE. Then it seems natural and interesting to
consider the connection of the optimal strategies between these two
investment problems. In fact, from \eqref{112301} we get that the
optimal investment strategy to each scenario
$$\alpha^*(t,u_t)=Proj_{\Pi}(\frac{\theta(V_t)+Z_t+u_t}{1-\delta}),$$
where $(Y,Z,\lambda)$ is the solution of ergodic BSDE \eqref{EQBSDE1}. Similar arguments conclude that for classical expected utility problem (\ref{PowerUtility}), the optimal investment strategy has the form
$$\alpha^*_T(t,u_t)=Proj_{\Pi}(\frac{\theta(V_t)+\bar{Z}_t+u_t}{1-\delta}),$$
where $(\bar{Y},\bar{Z})$ is the unique solution of BSDE
\eqref{112504}. Since the projection operator on a closed convex set
is Lipschitz continuous, the convergence of $\alpha^*_T$ to
$\alpha^*$ boils down to the convergence of
\begin{equation}\label{convergence_Z}
E_{\mathbb{P}}[\int_0^T|Z_t-\bar{Z}_t|^2dt]\rightarrow 0,\
\text{as}\ T\rightarrow\infty.
\end{equation} However, (\ref{convergence_Z}) is
yet to be established. Hence, although we have established a
connection of the robust forward performance process with the
corresponding classical robust expected utility, it is still an open
problem to prove the convergence of the associated optimal trading
strategies.}
%\begin{remark}
%Similar to \eqref{PowerUtility}, we can introduce the upper value function as follows
%\begin{equation}\label{112510}
%h_T(x)=\inf_{u\in\mathcal{U}_{[0,T]}}\sup_{\pi \in {\Pi_{[0,T]}}}E_{\mathbb{P}^u}\left(
%\frac{(X_{T}^{\pi })^{\delta }}{\delta }\right), \  x \in \mathbb{R}_{+}.
%\end{equation}%
%Using the similar argument, we get
%\begin{equation}\label{112511}
%h_T(x)=\frac{x^\delta}{\delta}e^{\bar{Y}_0},
%\end{equation}
%which means that the value for the classical power expected utility with model uncertainty exists.
%Moreover, this value can be directly represented via the solution of the associated BSDE \eqref{112504}. The optimal portfolio weights $\pi^*$ and the worst-case scenario parameter $u^*$ have the forms similar to the ones given in  \eqref{112301}.
%\end{remark}

\section{{Logarithmic robust forward performance processes with negative realization processes}}

\label{SDGsframework}

In this section, we provide an example to show the advantage of our
stochastic differential approach to solve homothetic robust forward
investment problems compared with the classical saddle point
argument.
 %not only for the case $\tau\geq 0$ but also for the case $\tau<0$, where $\tau$ is the constant given in \eqref{maxmim} in Definition \ref{def}.
For simplicity of the calculations, we consider the following
situation:

i) a single stock and single stochastic factor model (i.e., $n=d=1$
in state equations \eqref{stock-SDE} and\eqref{factor-SDE}).

ii) $U=[0,1]$, $\Pi=[0,1]$, the market price of the risk $\theta\in
[-1,0]$.

iii) a logarithmic robust forward performance process,
$$U(x,t)=\ln x+f(V_t,t)$$
with a quadratic form on the {realization process $\gamma$ in
\eqref{maxmim}, i.e.,
\begin{equation}\label{2020082201}
\gamma_{t,s}(u)=\int_t^s\frac{1}{2}|u_s|^2ds,
\end{equation}
which expresses the cumulative evaluation of the model $\mathbb{P}^u$ predicted by the investor in $[t,s]$.
Herein, since we consider negative realization processes, we choose the parameter $\tau=-1$ without loss of generality.} %Note that
% the form \eqref{2020082201} of $\gamma$ has been considered in \cite{Jan} (see Subsection 2.1 therein) as the penalty function.}
% We limit the value space of $\theta$ to $[-1,0]$ to simplify the computation and $\Pi$ to a compact set to obtain the
Using similar arguments to Theorem \ref{Theorem2_ForwardUtility}, we
obtain the following results.
\begin{theorem}
\label{ForwardUtilitywithrewardfunction}
Let $(\widetilde{Y}_{t},\widetilde{Z}_{t},\widetilde{\lambda} )=(\widetilde{y}(V_{t}),\widetilde{z}(V_{t}),\widetilde{\lambda} ),t\geq 0,$ be the
unique Markovian solution of (\ref{EQBSDE1}) with the generator
$$\widetilde{G}(v,z)=\max_{\pi\in\Pi}\min_{u\in U}\widetilde{F}(v,z,\pi,u),$$
where $\widetilde{F}(v,z,\pi,u)=-\frac{1}{2}\pi^2+\pi\theta(v)+(\pi+z)u-\frac{1}{2}u^2.$
Then,
 the process $U(x,t),$ $\left( x,t\right) \in \mathbb{R}_{+}\times \left[
0,\infty \right) ,$ given by
\begin{equation}\nonumber
U(x,t)=\ln x+ \widetilde{y}(V_{t})-\widetilde{\lambda} t,
\end{equation}%
is a logarithmic robust forward performance process with realization process
 $\gamma$ given in \eqref{2020082201} and parameter
$\tau=-1$.
%with volatility%
%\begin{equation}
%a\left( x,t\right) =\frac{x^{\delta }}{\delta }e^{y(V_{t})-\lambda
%t}z(V_{t}).  \label{power-volatility}
%\end{equation}
Moreover, the optimal portfolio weight $\widetilde{\pi}^*$,  and
the worst-case scenario strategy $\widetilde{\beta}^*$ responding to each portfolio weight $\pi$ are given as follows
\begin{equation}\nonumber
\widetilde{\pi}^*_t=\widetilde{\pi}^*(V_t,\widetilde{z}(V_t)),\
\widetilde{\beta}^*(t,\pi_t)=\widetilde{\beta}^*(\widetilde{z}(V_t),\pi_t),
\end{equation}
where the mappings $(\widetilde{\pi}^*,\widetilde{\beta}^{*})$ have the form
$$\widetilde{\pi}^*(v,z)=\left\{
\begin{array}{ll}
\theta(v)+1, & \text{if}\ \ \frac 12-z\geq\theta(v)+1;\\
\frac 12-z, & \text{if}\ \ \theta(v)\leq \frac 12-z\leq \theta(v)+1;\\
\theta(v), & \text{if}\ \ \frac 12-z\leq\theta(v),
\end{array}
\right.$$
and
$$\widetilde{\beta}^*(z,\pi)=\left\{
\begin{array}{ll}
1, & \text{if}\ \ \pi+z\leq \frac 12;\\
0, & \text{otherwise}.
\end{array}
\right.$$

%In addition, the associated wealth process $X^*$ under the worst-case scenario is given by
%$$X^*_t=X_0\mathcal{E}\Big(\int_0^t(\pi^*_s)^T\cdot[(\theta(V_s)+u_s^*)ds
%+dW_s^{u^*}]\Big).$$

%ii) The optimal portfolio weights $\pi _{t}^{\ast }$ and the associated
%wealth process $X_{t}^{\ast }$ (cf. (\ref{policy-normalized}) and (\ref%
%{wealth-process})) are given, respectively, by
%\begin{equation}
%\pi _{t}^{\ast }=Proj_{\Pi }\left( \frac{z(V_{t})+\theta (V_{t})}{1-\delta }%
%\right) \text{ \ and }X_{t}^{\ast }=X_{0}\mathcal{E}\left( \int_{0}^{\cdot
%}(\pi _{s}^{\ast })^{T}(\theta (V_{s})ds+dW_{s})\right) _{t}.
%\label{PowerOptimalStrategy}
%\end{equation}
\end{theorem}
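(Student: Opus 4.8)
The plan is to mimic the proof of Theorem~\ref{Theorem2_ForwardUtility}, the essential new feature being that here a saddle point need not exist, so only the lower-value martingale conditions \eqref{beta_star}--\eqref{martingale_maxmin} are established (there is no analogue of \eqref{alpha_star}--\eqref{martingale_minmax}, and accordingly no optimal scenario $u^*$ or optimal investment strategy $\alpha^*$ in the conclusion). Properties i) and ii) of Definition~\ref{def} are immediate, since $U(x,t)=\ln x+\widetilde{y}(V_t)-\widetilde{\lambda}t$ is $\mathbb{F}$-progressively measurable ($V$ adapted, $\widetilde{y}$ Borel) and $x\mapsto\ln x$ is strictly increasing and strictly concave; it thus remains to verify the self-generating identity $\underline{U}(x,t;s)=U(x,t)$ for all $s\geq t$, i.e. property iii).

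First I would analyze the generator $\widetilde{G}(v,z)=\max_{\pi\in[0,1]}\min_{u\in[0,1]}\widetilde{F}(v,z,\pi,u)$. Since $u\mapsto\widetilde{F}(v,z,\pi,u)=-\frac12\pi^2+\pi\theta(v)+(\pi+z)u-\frac12u^2$ is strictly concave on $U=[0,1]$, its minimum over $U$ is attained at an endpoint, at $u=1$ when $\pi+z\leq\frac12$ and at $u=0$ otherwise; this yields the stated form of $\widetilde{\beta}^*(z,\pi)$ and the identity $\min_{u\in U}\widetilde{F}(v,z,\pi,u)=-\frac12\pi^2+\pi\theta(v)+\min\{0,\pi+z-\frac12\}$, which is concave and piecewise quadratic in $\pi$. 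Maximizing this over $\Pi=[0,1]$ is an elementary one-dimensional problem: distinguishing whether $\frac12-z$ lies above $\theta(v)+1$, within $[\theta(v),\theta(v)+1]$, or below $\theta(v)$, one recovers the three-branch formula for $\widetilde{\pi}^*(v,z)$ (and checks in each branch that the value lies in $\Pi$); both $\widetilde{\pi}^*$ and $\widetilde{\beta}^*$ are Borel. Two consequences are recorded: (a) $\widetilde{F}(v,z,\widetilde{\pi}^*(v,z),u)\geq\widetilde{G}(v,z)$ for every $u\in U$, with equality at $u=\widetilde{\beta}^*(z,\widetilde{\pi}^*(v,z))$, because $\widetilde{\pi}^*$ realizes the outer maximum of the inner minimum and $U$ is compact; and (b) $\min_{u\in U}\widetilde{F}(v,z,\pi,u)\leq\widetilde{G}(v,z)$ for every $\pi\in\Pi$. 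I would also note that $\widetilde{G}$ is globally Lipschitz --- in $z$ with constant at most $1$, being a supremum over $\pi$ of functions each $1$-Lipschitz in $z$, and in $v$ through $\theta$ --- hence satisfies conditions of the type \eqref{112401}, and the ergodic BSDE \eqref{EQBSDE1} with driver $\widetilde{G}$ has, by the theory underlying Lemma~\ref{le112301} (see \cite{LZ,HU2}), a unique Markovian solution $(\widetilde{Y}_t,\widetilde{Z}_t,\widetilde{\lambda})=(\widetilde{y}(V_t),\widetilde{z}(V_t),\widetilde{\lambda})$ with $\widetilde{y}$ of linear growth and $\widetilde{z}$ bounded.

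Next I would carry out the verification. Applying It\^o's formula to $\ln X_s^\pi+\widetilde{Y}_s-\widetilde{\lambda}s$ via the wealth equation \eqref{wealth-process}, the measure change \eqref{new-B.M.} and the ergodic BSDE \eqref{EQBSDE1}, and recalling $\tau=-1$ and $\gamma_{t,s}(u)=\frac12\int_t^s u_r^2\,dr$ from \eqref{2020082201}, one obtains for all $s\geq t$ and $(\pi,u)\in\tilde{\Pi}\times\mathcal{U}$
\begin{equation*}
U(X_s^\pi,s)+\tau\gamma_{t,s}(u)-U(X_t^\pi,t)=\int_t^s\Big(\widetilde{F}(V_r,\widetilde{Z}_r,\pi_r,u_r)-\widetilde{G}(V_r,\widetilde{Z}_r)\Big)\,dr+\int_t^s(\pi_r+\widetilde{Z}_r)\,dW_r^u,
\end{equation*}
the drift being identified from the definition of $\widetilde{F}$. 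Since $\Pi=[0,1]$ forces $\pi$ bounded and $\widetilde{z}$ is bounded, the stochastic integral is a true $\mathbb{P}^u$-martingale, so taking $E_{\mathbb{P}^u}[\,\cdot\mid\mathcal{F}_t,X_t^\pi=x]$ gives $J(x,t;s,\pi,u)-U(x,t)=E_{\mathbb{P}^u}\big[\int_t^s(\widetilde{F}-\widetilde{G})\,dr\mid\mathcal{F}_t,X_t^\pi=x\big]$. Putting $\pi^*_t=\widetilde{\pi}^*(V_t,\widetilde{z}(V_t))$ and $\beta^*(t,\pi_t)=\widetilde{\beta}^*(\widetilde{z}(V_t),\pi_t)$: fact (b) gives $J(x,t;s,\pi,\beta^*(\cdot,\pi))\leq U(x,t)$ for every $\pi$, which is \eqref{beta_star}; fact (a) gives $J(x,t;s,\pi^*,\beta^*(\cdot,\pi^*))=U(x,t)$, which is \eqref{martingale_maxmin}, and moreover $J(x,t;s,\pi^*,u)\geq U(x,t)$ for \emph{every} $u\in\mathcal{U}$. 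The first two bounds give $\underline{U}(x,t;s)\leq\esssup_{\pi}J(x,t;s,\pi,\beta^*(\cdot,\pi))\leq U(x,t)$, while for any strategy $\beta\in\mathcal{B}$ the last bound gives $\esssup_{\pi}J(x,t;s,\pi,\beta(\cdot,\pi))\geq J(x,t;s,\pi^*,\beta(\cdot,\pi^*))\geq U(x,t)$; hence $\underline{U}(x,t;s)=U(x,t)$ for all $s\geq t$, which is property iii), and the optimal weight $\widetilde{\pi}^*$ and the worst-case counterstrategy $\widetilde{\beta}^*$ are those identified above.

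The step I expect to demand the most care is the explicit maximization defining $\widetilde{\pi}^*$: the casework must be done consistently with the sign convention $\theta\in[-1,0]$ and the constraint $\Pi=[0,1]$, and one must check branch by branch that $\widetilde{\pi}^*(v,z)\in\Pi$. The conceptual point --- and the reason this example illustrates the merit of the stochastic differential game approach --- is that with $\tau=-1$ the term $\tau\gamma_{t,s}(u)=-\frac12\int_t^s u_r^2\,dr$ is concave in $u$, so neither a saddle point nor a penalty-function duality is available; yet the pair $(\pi^*,\beta^*)$, with $\beta^*$ a counterstrategy, still closes the self-generating identity through the one-sided inequalities (a) and (b).
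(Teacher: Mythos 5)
Your proposal is correct and follows essentially the same route as the paper, which gives no separate argument for this theorem beyond "similar arguments to Theorem \ref{Theorem2_ForwardUtility}": you verify the self-generating property via It\^o's formula applied to $\ln X^\pi_s+\widetilde{Y}_s-\widetilde{\lambda}s$ together with the ergodic BSDE, identify the drift as $\widetilde{F}-\widetilde{G}$ (now including the $-\tfrac12|u|^2$ realization term), and close the argument with the lower-value martingale conditions \eqref{beta_star}--\eqref{martingale_maxmin} built from the counterstrategy $\widetilde{\beta}^*$ and the maximizer $\widetilde{\pi}^*$ of the inner minimum. Your observations that only the lower-value side is needed (since $\widetilde{G}$ is defined directly as a max--min, no analogue of Step 1--2 of Proposition \ref{le-PDE} is required) and that the additive logarithmic form removes the exponential-martingale/measure-change step of the power case are accurate, and your flagged caution about checking $\widetilde{\pi}^*(v,z)\in\Pi$ branch by branch is exactly where the remaining care lies.
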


It is easy to check that the saddle point for this forward
investment problem does not exist since $\widetilde{F}$ is concave
both in variables $\pi$ and $u$. Therefore, the classical saddle
point argument can not be applied directly, whereas our stochastic
differential game approach provides an alternative and efficient way
to address this problem.

\section{{Conclusion}}

This paper provides a stochastic differential game framework to
construct a class of forward performance processes with model
uncertainty. In particular, the homothetic robust forward process in
factor form is represented in terms of the Markovian solution of
ergodic BSDE. The approach and results may be extended in several
directions. First, one may consider a general realization process $\gamma$ with
parameter $\tau$ not necessarily zero. Second, it would be
interesting to prove the convergence of the finite time horizon
optimal investment strategy to its forward counterpart as time
horizon becomes large. Both are left for the future research.

\small
\newif \ifabfull \abfulltrue
\providecommand{\bysame}{\leavevmode \hbox to3em{\hrulefill}\thinspace} %
\providecommand{\MR}{\relax \ifhmode \unskip \space \fi MR }
% \MRhref is called by the amsart/book/proc definition of \MR.
\providecommand{\MRhref}[2]{
\href{http://www.ams.org/mathscinet-getitem?mr=#1}{#2} } \providecommand{%
\href}[2]{#2}

\end{document}